\documentclass[11pt]{article}
\usepackage[utf8]{inputenc}
\usepackage{geometry}
\geometry{margin = 1 in}

\usepackage[nodisplayskipstretch]{setspace}
\setstretch{1.06}

\usepackage{xcolor}

\definecolor{dark1}{RGB}{157, 58, 103}
\definecolor{dark2}{RGB}{161, 67, 0}
\definecolor{dark3}{RGB}{115, 102, 0}
\definecolor{dark4}{RGB}{2, 120, 50}
\definecolor{dark5}{RGB}{0, 116, 122}
\definecolor{dark6}{RGB}{18, 100, 176}
\definecolor{darker6}{RGB}{13.5, 75, 132}
\definecolor{dark7}{RGB}{116, 75, 163}
\definecolor{mid1}{RGB}{225, 119, 163}
\definecolor{mid2}{RGB}{228, 128, 77}
\definecolor{mid3}{RGB}{182, 158, 21}
\definecolor{mid4}{RGB}{87, 182, 109}
\definecolor{mid5}{RGB}{0, 181, 190}
\definecolor{mid6}{RGB}{88, 162, 242}
\definecolor{mid7}{RGB}{177, 135, 229}
\definecolor{light1}{RGB}{255, 187, 231}
\definecolor{light2}{RGB}{255, 198, 151}
\definecolor{light3}{RGB}{247, 223, 104}
\definecolor{light4}{RGB}{152, 248, 171}
\definecolor{light5}{RGB}{72, 249, 255}
\definecolor{light6}{RGB}{164, 219, 255}
\definecolor{light7}{RGB}{244, 192, 255}

\definecolor{python0}{RGB}{31, 119, 180}
\definecolor{python1}{RGB}{255, 127, 14}
\definecolor{python2}{RGB}{44, 160, 44}
\definecolor{python3}{RGB}{214, 39, 40}
\definecolor{python4}{RGB}{148, 103, 189}
\definecolor{python5}{RGB}{140, 86, 75}
\definecolor{python6}{RGB}{227, 119, 194}
\definecolor{python7}{RGB}{127, 127, 127}
\definecolor{python8}{RGB}{188, 189, 34}
\definecolor{python9}{RGB}{23, 190, 207}

\usepackage{amsmath}
\usepackage{amssymb}
\usepackage{amsthm}
\usepackage{thmtools} %
\usepackage{thm-restate}

\usepackage[colorlinks]{hyperref}
\hypersetup{linkcolor = dark7, citecolor = dark5, urlcolor = dark2}
\usepackage{cleveref}

\newtheorem{lemma}{Lemma}
\newtheorem{proposition}{Proposition}

\newtheorem{claim}{Claim}

\theoremstyle{definition}
\newtheorem{definition}{Definition}

\newtheorem{example}{Example}

\newcommand{\continuation}{??}
\newenvironment{examplecont}[1]{
  \renewcommand{\continuation}{\ref{#1}}
  \excont[Continued]
}{\endexcont}

\usepackage{algorithm}
\usepackage{algpseudocode}

\usepackage[ttscale=.875]{libertine}
\usepackage[scaled=0.785]{beramono} %
\usepackage[libertine]{newtxmath} %
\usepackage[T1]{fontenc}

\usepackage{optidef}
\usepackage{natbib}
\usepackage{siunitx}
\usepackage{diffcoeff}
\usepackage{booktabs}
\usepackage{array}
\usepackage{graphicx}
\usepackage{tikz}
\tikzstyle{vertex} = [circle, fill=black!10]
\tikzstyle{edge} = [->, > = latex, draw=black, thick]
\usepackage[noblocks]{authblk}
\usepackage{ushort}
\usepackage{eqparbox}

\usepackage{etoc}
\usepackage{pgfplots}
\usepackage[skip=3pt]{subcaption}

\usepackage{enumitem}
\setlist{topsep=2pt, itemsep=2pt, partopsep=0pt, parsep=0pt, leftmargin=*}

\usepackage{titlesec}
\titlespacing*{\paragraph}{0pt}{2ex plus 1ex minus .2ex}{1em}
\titleformat{\subparagraph}[runin]{\normalfont\normalsize\itshape}{\thesubparagraph}{1em}{}
\titlespacing*{\subparagraph}{0pt}{1.2ex plus 1ex minus .2ex}{1em}

\allowdisplaybreaks

\usepackage{amsmath}
\usepackage{xparse}
\usepackage{bm}
\usepackage{dsfont}

\newcommand{\vb}{{\bm b}}
\newcommand{\vc}{{\bm c}}
\newcommand{\vd}{{\bm d}}

\newcommand{\vg}{{\bm g}}

\newcommand{\vp}{{\bm p}}
\newcommand{\vq}{{\bm q}}

\newcommand{\vw}{{\bm w}}
\newcommand{\vx}{{\bm x}}
\newcommand{\vy}{{\bm y}}
\newcommand{\vz}{{\bm z}}
\newcommand{\vzero}{{\bm 0}}

\newcommand{\vone}{{\bm 1}}
\newcommand{\vonen}{{\bm 1}_{n}}
\newcommand{\vpi}{{\bm\pi}}
\newcommand{\vPi}{{\bm\Pi}}
\newcommand{\vphi}{{\bm\phi}}
\newcommand{\vbeta}{{\bm\beta}}
\newcommand{\vgamma}{{\bm\gamma}}
\newcommand{\vzeta}{{\bm\zeta}}
\newcommand{\veta}{{\bm\eta}}
\newcommand{\vchi}{\bm \chi}
\newcommand{\vrho}{{\bm\rho}}

\newcommand{\vtheta}{{\bm\theta}}
\newcommand{\vdelta}{{\bm\delta}}

\newcommand{\mA}{{\bm A}}
\newcommand{\mB}{{\bm B}}

\newcommand{\mI}{{\bm I}}

\newcommand{\mM}{{\bm M}}

\newcommand{\mW}{{\bm W}}

\newcommand{\gB}{{\mathcal B}}
\newcommand{\gC}{{\mathcal C}}
\newcommand{\gD}{{\mathcal D}}

\newcommand{\gF}{{\mathcal F}}

\newcommand{\gH}{{\mathcal H}}

\newcommand{\gJ}{{\mathcal J}}
\newcommand{\gK}{{\mathcal K}}
\newcommand{\gL}{{\mathcal L}}

\newcommand{\gS}{{\mathcal S}}
\newcommand{\gT}{{\mathcal T}}
\newcommand{\gU}{{\mathcal U}}

\newcommand{\sN}{{\mathbb N}}

\newcommand{\sR}{{\mathbb R}}

\newcommand{\E}{{\mathbb E}}
\newcommand{\R}{{\mathbb R}}

\newcommand{\trp}{\mathsf{T}}
\newcommand{\D}{\mathrm{D}}

\let\ab\allowbreak

\let\originalleft\left
\let\originalright\right
\renewcommand{\left}{\mathopen{}\mathclose\bgroup\originalleft}
\renewcommand{\right}{\aftergroup\egroup\originalright}

\DeclareDocumentCommand\p{d<> o o}{
  \operatornamewithlimits{P} \IfNoValueF{#1}{_{#1}}
  \IfNoValueF{#2}{
    \left(#2 \IfNoValueF{#3}{\,\middle|\,#3}\right)
  }
}

\RenewDocumentCommand\E{d<> o o}{
  \operatornamewithlimits{\mathbb E} \IfNoValueF{#1}{_{#1}}
  \IfNoValueF{#2}{
    \left[#2 \IfNoValueF{#3}{\,\middle|\,#3}\right]
  }
}

\NewDocumentCommand\V{d<> o o}{
  \operatornamewithlimits{Var} \IfNoValueF{#1}{_{#1}}
  \IfNoValueF{#2}{
    \left(#2 \IfNoValueF{#3}{\,\middle|\,#3}\right)
  }
}

\NewDocumentCommand\I{d<> o o}{
  \operatorname{\mathds 1} \IfNoValueF{#1}{_{#1}}
  \IfNoValueF{#2}{
    \left[#2 \IfNoValueF{#3}{\,\middle|\,#3}\right]
  }
}

\newcommand{\Iterations}{14}
\newcommand{\OptimalWelfare}{313102}
\newcommand{\NaiveWelfare}{256846}
\newcommand{\NaiveWelfareRatio}{82.0\%}
\newcommand{\NewtonMultiplier}{19.65}
\newcommand{\NewtonWelfare}{312422}
\newcommand{\NewtonWelfareRatio}{99.8\%}
\newcommand{\NewtonWelfareRatioGap}{0.2\%}
\newcommand{\NewtonLyapunov}{1e-06}

\newcommand{\EventDays}{5}

\newcommand{\SimpleWelfareRatio}{99.8\%}

\newcommand{\ObservedDrivers}{3368}
\newcommand{\BalancedDrivers}{4475}
\newcommand{\ObservedDriversRatio}{75.3\%}

\newcommand{\subij}{_{i,j}}
\newcommand{\ie}{i.e.\@ }
\newcommand{\eg}{e.g.\@ }
\newcommand{\st}{s.t.\@ }
\newcommand{\wrt}{with respect to }
\newcommand{\supth}{\textsuperscript{th}}

\newcommand{\supt}{^{(t)}}

\newcommand{\suptmo}{^{(t-1)}}
\newcommand{\one}[1]{\mathds{1} \{ #1\}}
\newcommand{\outcome}{$(\vx, \vy, \vp)$}
\newcommand{\economy}{$(m, \vd, \vc, \vq)$}

\newcommand{\qtilde}{\tilde{q}}
\newcommand{\qhat}{\hat{q}}
\newcommand{\vqhat}{\hat{\vq}}
\newcommand{\vqtilde}{\tilde{\vq}}
\newcommand{\phantomSlack}{e}

\newcommand{\barmA}{\bar{\mA}}
\newcommand{\bargamma}{\bar{\gamma}}
\newcommand{\barvgamma}{\bar{\vgamma}}

\newcommand{\xobs}{x^\mathrm{obs}}
\newcommand{\pobs}{p^\mathrm{obs}}

\newcommand{\PyplotScale}{0.75}
\newcommand{\MmaScale}{0.88}

\newcommand{\historyt}{h_t}
\newcommand{\historySett}{\gH_t}
\newcommand{\suptprime}{^{(t')}}

\title{Iterative Network Pricing for Ridesharing Platforms%
\footnotetext{The authors would like to thank Francisco Castro, Yeon-Koo Che, Peter Frazier, Daniel Freund, Nikhil Garg, Sergey Gitlin, Ramesh Johari, Yash Kanoria, Cinar Kilcioglu, Zhen Lian, Ilan Lobel, Elliot Lipnowski, Will Ma, Jake Marcinek, Hongseok Namkoong, Hamid Nazerzadeh, David C. Parkes, Amin Saberi, Shreyas Sekar, Peng Shi, Daniel Russo, Garrett van Ryzin, Carmen Wang, Chiwei Yan, and participants at 
various seminars and conferences for helpful comments and discussions.
}
}
\author{Chenkai Yu\footnote{Decision, Risk, and Operations Division, Columbia Business School, \texttt{cyu26@gsb.columbia.edu}} 
\hspace{2cm} 
Hongyao Ma\footnote{Decision, Risk, and Operations Division, Columbia Business School, \texttt{hongyao.ma@columbia.edu}}
}

\date{November 14, 2023}

\begin{document}

\maketitle

\begin{abstract}
  Ridesharing platforms match riders and drivers, using dynamic pricing to balance supply and demand. The origin-based “surge pricing”, however, does not take into consideration market conditions at trip destinations, leading to inefficient driver flows in space and incentivizes drivers to strategize. In this work, we introduce the \emph{Iterative Network Pricing} mechanism, addressing a main challenge in the practical implementation of optimal origin-destination (OD) based prices, that the model for rider demand is hard to estimate. Assuming that the platform's surge algorithm clears the market for each origin in real-time, our mechanism updates the OD-based price adjustments week-over-week, using only information immediately observable during the same time window in the prior weeks. For stationary market conditions, we prove that our mechanism converges to an outcome that is approximately welfare-optimal. %
  Using data %
  from the City of Chicago, we illustrate (via simulation) the iterative updates under our mechanism for morning rush hours, demonstrating substantial welfare improvements despite significant fluctuations of market conditions from early 2019 through the end of 2020.
\end{abstract}

\section{Introduction} \label{sec:intro}

Ridesharing platforms have %
revolutionized the way people commute and travel in recent years.
Central to their success is the emphasis placed on providing reliable transportation services to riders.%
\footnote{\label{fn:reliability_for_riders}%
  Lyft, for example, advertises itself as ``a ride whenever you need one'' (\url{https://www.lyft.com/}, accessed May 16, 2023), and Uber aims to provide ``transportation that is as reliable as running water'' (\url{https://www.uber.com/en-au/blog/melbourne/transportation-that-is-as-reliable-as-running-water/}, accessed May 16, 2023).}
When rider demand surpasses available driver supply in an area, %
the dynamic ``surge pricing'' adopted by these platforms increases the prices for trips originating from the area. This maintains sufficient density of driver supply in space, and
guarantees that rider wait times would not exceed a few minutes~\citep{rayle2014app,castillo2017surge,yan2020dynamic}.

When a platform's origin-based pricing algorithm does not appropriately factor
market conditions at the destination of the trips, however, the resulting spatial distribution of driver supply can be highly inefficient.
This issue becomes %
especially pronounced when demand patterns are driven more by trip destinations than origins. %
For instance, during morning rush hours in major cities, %
a substantial number of riders seek trips that end downtown, irrespective of whether they start in the city center or in a residential area. However, relatively few riders request trips that end in the residential neighborhoods. %

\begin{figure}[t!]
  \centering
  \subcaptionbox{Average flow imbalance of different neighborhoods. \label{fig:imbalance_heatmap}}{
    \begin{tikzpicture}
      \node (image) at (1.8, -0.3) {\includegraphics[width = 0.47 \textwidth]{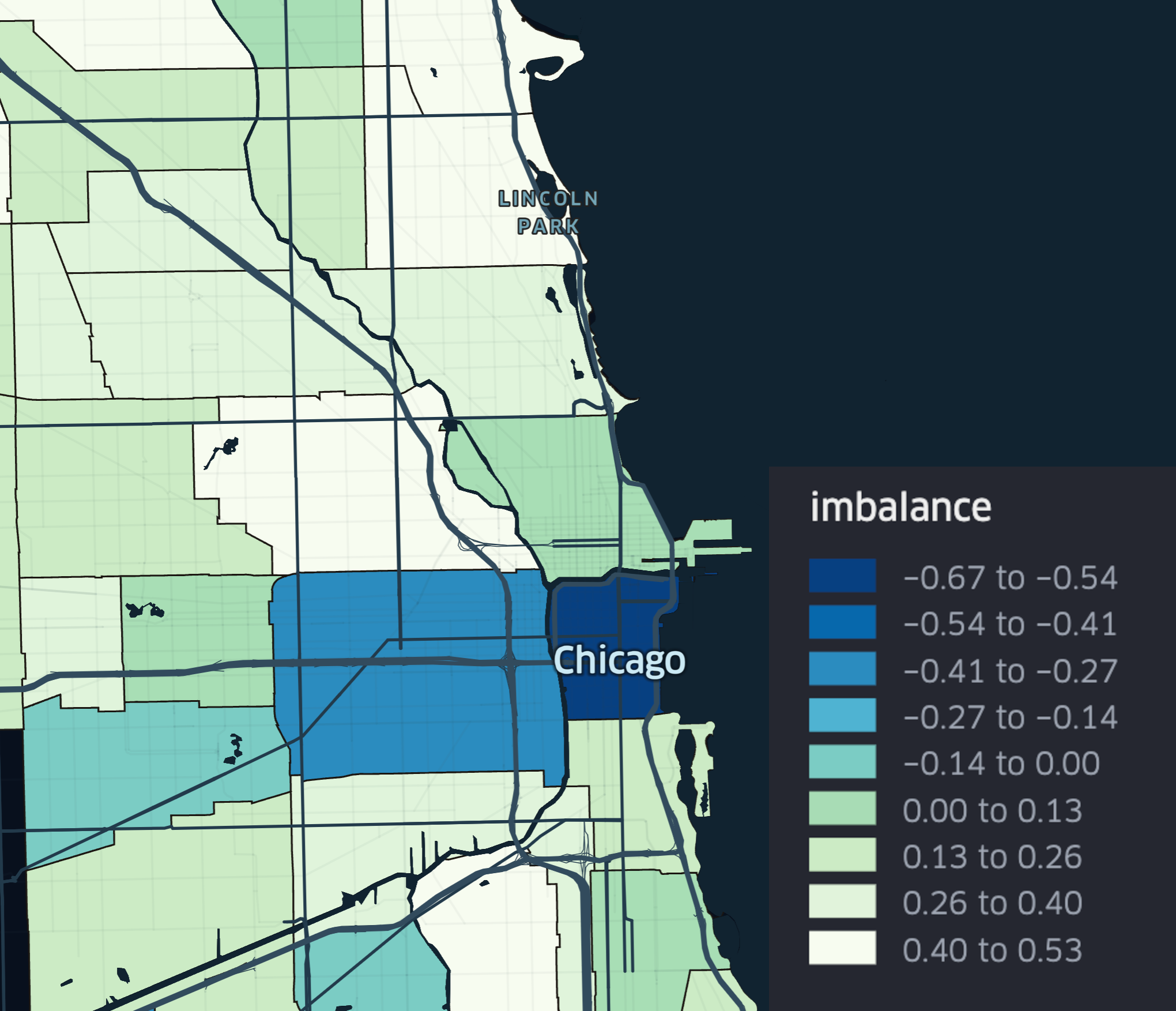}};
      \draw[-latex, line width = 2pt, python9] (0, 0) -- (1, 1.8);
      \draw[-latex, line width = 2pt, dashed, python3] (0, 0) -- (1.8, -1);
    \end{tikzpicture}
  }
  \hspace{1.5em}
  \subcaptionbox{Flow imbalance over time (weeks), 2019-2020.
    \label{fig:imbalance_over_week}}{\includegraphics[width = 0.42 \textwidth]{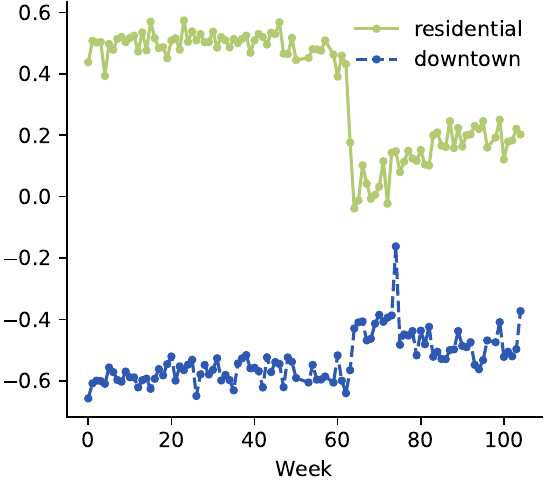}}
  \caption{The relative imbalance of trip flow
    during weekday morning rush hours %
    in Chicago: average of %
    first 9 weeks of 2020 (left) and temporal trends for two neighborhoods (right). %
    Week 60 corresponds to the COVID shutdown in March 2020, which substantially changed the %
    market dynamics and reduces the imbalance of the residential area due to work-from-home arrangements. See Section~\ref{sec:simulations} for more details.
  }
  \label{fig:chicago}
\end{figure}

\Cref{fig:imbalance_heatmap} illustrates the (relative) flow imbalance of different neighborhoods in Chicago during morning rush hours.
The flow imbalance of a location is defined as $(\text{outflow} - \text{inflow}) / (\text{outflow} + \text{inflow})$,
where the \emph{outflow} and \emph{inflow} refer respectively to the number of trips that originate from and end at the location per hour.
Downtown Chicago (usually referred to as ``The Loop'') is a highly popular destination.
Prior to the COVID shut-down, an average of over 4000 trips ending in this area per hour while just slightly more than 1000 trips originate here. In contrast, the North Side's %
Lake View neighborhood sees over 1000 trips originating per hour, while only approximately 300 trips end there %
in the same amount of time.

Under origin-based pricing, the two trips depicted in Figure~\ref{fig:imbalance_heatmap} will have identical prices since they start from the same location and cover the same distance. This %
incentivizes drivers to cherrypick, since %
a driver arriving in the residential area is likely to be dispatched a high surge trip immediately, while a driver who arrive in the downtown area may need to deadhead (\ie relocate without a rider) for a substantial period of time before receiving any dispatch.
This disparity in drivers' continuation earnings also suggests that origin-based pricing is inefficient in its use of drivers' time, and fails to properly distribute drivers towards areas where supply is more valuable.

\medskip

In the rapidly growing literature on ridesharing platforms, some existing work already studies origin-destination (OD) based pricing to provide efficient and reliable transportation. %
In particular, \citet{ma2022spatio} %
show that
it is not only welfare-optimal but also incentive-aligned (\ie it is to the drivers' best interest to accept all trip dispatches) to set OD-based prices of the following form: %
\[
  \begin{split}
    (\text{trip price}) = (\text{trip cost})
    & + (\text{marginal value of a driver at trip origin}) \\
    & - (\text{marginal value of a driver at trip destination}).
  \end{split}
\]

We refer to the difference between the marginal value of a driver at a trip's origin versus its destination as the \emph{OD-based adjustment}. This represents the network externality of completing the trip, and can also be interpreted as the value of teleporting a driver from the destination back to the origin.
Since a driver %
in the residential area is intuitively more ``valuable'' than one downtown, the trip to downtown will command a higher price under optimal OD pricing. This not only %
reduces driver supply in the already saturated downtown area (by reducing rider demand for trips heading downtown), but also compensates the driver for the lower continuation earnings from downtown onward with a higher earning from the current trip.

Despite these theoretical advancements, %
the practical implementation of optimal OD-based pricing presents significant challenges. In theory, the marginal values of drivers at different locations are derived from the dual variables of the welfare-optimization problem. Formulating the primal and dual programs, however,  necessitates a rider demand model (\ie the expected number of riders requesting a ride at any given price for each OD pair), which proves very difficult to estimate in practice for a number of reasons:
\begin{itemize} %
  \item \emph{Endogeneity}. Consider a trip typically priced at \$20. The platform is likely to have a solid understanding of rider demand at this price, %
        as well as the %
        local price sensitivity (due to natural price fluctuations resulting from fluctuating supply levels, or through local price exploration).
        However, estimating rider demand at %
        \$10 or \$30 presents considerable challenges.
        First, the trip may never have been priced as low as \$10. Furthermore, although prices might have surged to \$30 during inclement weather or special events, %
        such observations are not %
        as useful when %
        pricing trips under normal market conditions.
  \item \emph{Different short term and long term elasticity.} An abrupt reduction in the price of a trip from \$20 to \$10 might induce a modest demand increase in the short run, %
    but one could plausibly anticipate %
    that the demand will continue to increase for many weeks, 
    as riders react %
    by, for instance, deciding against renewing their monthly work parking passes. %
    Long-term %
    price exploration is, however, %
    not only costly but also inequitable (subjecting %
    rider groups to different prices for extended periods of time). %
  \item \emph{Heterogeneity in space and time.} The shape of the demand function may vary substantially across trip origin, destination, and time of the week. %
    These variations could be attributed to neighborhood-specific characteristics, public transit availability, fluctuating traffic conditions, etc. Consequently, assuming a universal functional form for all origins, destinations, and times of the week (whether in terms of prices or surge multipliers) is unlikely to yield accurate estimations of rider demand.%
\end{itemize}

\subsection{Our Results} \label{sec:intro_results}

In this work, we propose the \emph{iterative network pricing} (INP) mechanism, enabling a ridesharing platform to implement optimal OD-based pricing without the need for estimating a rider demand model.
Assuming that the platform’s %
dynamic pricing algorithm %
finds the origin-based market-clearing multipliers in real-time, our mechanism iteratively updates the OD-based additive price adjustments week-over-week, relying only on information immediately observable by the platform  %
during the preceding weeks.

Take the Chicago morning rush hour as an example. %
Given that a driver heading to %
the residential area has substantially higher continuation earnings compared to a driver transporting a rider to downtown, the marginal value of a driver downtown is intuitively lower, 
thus a trip to downtown should be priced higher. %
Without a demand model, the platform cannot %
compute the optimal price difference between the two trips. %
The platform %
can, however, %
set the downtown trip's price to be \$1 higher than the %
trip to the residential area for the following week.
With this additive adjustment, we are likely to observe %
a reduced difference in drivers' continuation earnings from trips to these two locations, 
once the platform's origin-based surge algorithm clears the market. %
A significant disparity %
is still likely to exist, at which point
the platform may increase this price difference to \$2, %
observe the market-clearing outcome for another week, and so forth.

In practice, iteratively updating the OD-based adjustments %
can be effective, as Figure~\ref{fig:imbalance_over_week}
suggests that the spatial imbalances of trip flows remain relatively stationary week-over-week (in the absence of turmoils such as COVID shutdown, which took place around week 62). %
Moreover, it is relatively easily to adapt the platform's online-control style %
origin-based pricing algorithm %
to accommodate additive adjustments. %
To operationalize this idea, %
we address address in this work the following questions:
\begin{itemize} %
  \item %
    Given a current origin-based market-clearing outcome, what would constitute a ``direction'' for updating the OD-based adjustments %
    for a metropolitan area with more than two neighborhoods?
  \item What should be an appropriate stepsize that the platform %
    takes each week in this desired direction?
  \item Will this iterative process converge? If so, will it converge to the optimal OD-based prices?
\end{itemize}

\paragraph{Model}

We study a continuous time, non-atomic setting, modeling the operation of a ridesharing platform during a particular time-window of the week, \eg morning rush hours on Wednesdays.%
\footnote{We focus on a particular time of the week because the demand pattern during the morning rush hours on a Wednesday is typically %
similar to that of the previous Wednesday. A platform may adopt multiple INP mechanisms in parallel %
for other time windows with different demand patterns, e.g. weekday mid-day, evening rush hours, or weekend bar hours. 
}
There is a fixed amount of driver supply and a set of discrete locations. 
Each origin-destination (OD) pair, representing a potential trip, is characterized by the duration of the trip, the cost incurred by a driver for fulfilling the trip, and a rider demand function which specifies the amount of riders requesting this trip (per unit of time) at different price levels.
The full rider demand functions are not known to the platform, but the platform can observe %
the demand levels and the (local) price sensitivities at %
the realized trip prices each week.

We assume that supply and demand are stationary during %
the focal time window in any given week, %
and that the market condition does not change week-over-week.%
\footnote{ Appendix~\ref{appx:Chicago_market_dynamics} discusses the %
stationarity assumption for the market conditions within our focal time window. 
Briefly, the duration of most trips is significantly shorter than the period over which market conditions remain relatively unchanged.
}
We also assume that during each week, the platforms' dynamic pricing algorithm determines origin-based multipliers that \emph{clear the market}, %

meaning that (i) riders are picked-up if and only if they are willing to pay the prices of their trips, (ii) all drivers are dispatched to %
pick-up a rider or relocate to some location, and (iii) the inflow and outflow of drivers are balanced for every location.

Finally, we assume that  drivers always follow the platform's dispatches. %

\paragraph{Main Results}

We first establish a welfare theorem (Lemma~\ref{lem:welfare_theorem}), that an outcome %
is welfare-optimal if and only if it can be supported by %
OD-based trip prices in competitive equilibrium (CE). %
We show that optimal CE prices consist of three parts: (i) the cost incurred by drivers to complete the trip; (ii) a multiplicative part (\ie the trip duration times a multiplier, which is uniform across all locations) reflecting the opportunity cost (of drivers' time) for fulfilling the trip; and (iii) an additive adjustment dependent on the origin and destination of the trip, representing the difference in the marginal values of a driver at the two locations. %

A platform is not able to directly compute the optimal %
OD-based adjustments without the rider demand model. %
When the OD-based adjustments are absent or suboptimal, the platform's %
dynamic pricing algorithm will need different origin-based multipliers to clear the market for all locations.
We show that for any given set of OD-based adjustments, there exists a unique origin-based market-clearing outcome, provided the platform coordinates the relocation of drivers who are not assigned rider trips (Lemma~\ref{lem:unique}). 
Furthermore, %
the welfare loss of such outcomes (relative to the optimal CE outcome) can be %
bounded by the differences in multipliers across different locations (Theorem 1), despite the fact that %
evaluating %
the achieved welfare or the optimal social welfare is not possible without a demand model.

An iterative network pricing (INP) mechanism starts with an %
initial origin-based market-clearing outcome with no OD-based adjustments. At each subsequent timestep (\eg the following week), after observing the outcomes of the previous timesteps, the mechanism updates the OD-based adjustments, and observes the corresponding new outcome resulting from the market-clearing process.
The trip throughput, driver earnings, and local price sensitivity from past %
outcomes allow us to evaluate the gradient of the dual objective (\ie the dual of the welfare-optimization problem). 
The natural idea of gradient descent, however, %
specifies a direction to update not only the OD-based adjustments but also the origin-based multipliers. This does not solve our problem, since the origin-based multipliers are results of the market-clearing process, %
instead of being controlled directly by the platform.

To identify a proper direction for updating the OD-based adjustments, a platform needs to anticipate how the market-clearing multipliers will change as the platform updates the OD-based adjustments. We show that this is possible using information immediately available %
from the past market-clearing outcomes (Lemma~\ref{thm:differentiable}). 
The INP mechanism we propose updates the OD-based adjustments towards the direction to equalize %
(the linear approximations of) the origin-based multipliers at all locations, and %
takes an appropriately sized step %
to ensure the multipliers don't change too drastically at any given location.
By establishing a correspondence between our mechanism and the Newton %
algorithm %
for solving systems of non-linear equations, we prove that our mechanism converges to an outcome that is approximately welfare-optimal when the market condition is stationary. %
Using data from Chicago's morning rush hours, we illustrate (via simulation) the iterative updates under our mechanism, %
demonstrating substantial welfare improvements despite significant fluctuations of market conditions from early 2019 through the end of 2020.

\subsection{Related Work} \label{sec:related_work}

The rapidly growing literature on ridesharing platforms covers both empirical studies of current platforms and theoretical analysis of market designs.
Empirical studies have demonstrated the effectiveness of %
origin-based surge pricing in improving reliability and efficiency~\citep{hall2015effects}, increasing driver supply during high-demand times~\citep{chen2015dynamic}, as well as creating incentives for drivers to relocate to higher surge areas~\citep{lu2018surge}.
A wide range of additional topics were also studied, including
consumer surplus~\citep{cohen2016using,castillo2020benefits},
the labor market of Uber drivers~\citep{hall2017labor},
the longer-term labor market equilibration~\citep{hall2016analysis},
the value of flexible work arrangements~\citep{chen2019value,chen2020reservation,xu2020empirical} and the gender earnings gap associated to ``learning-by-doing''~\citep{cook2018gender}.
To the best of our knowledge, no empirical study addresses the challenge of demand model estimation for ridesharing platforms in the presence of endogeneity, substantial heterogeneity, and different short term and long term demand elasticity.

A variety of policy and regulation related topics have also received attention, %
including work analyzing the optimal growth of two-sided platforms~\citep{lian2021optimal, fang2019prices}, competition between platforms~\citep{lian2021larger,ahmadinejad11920competition,fang2020loyalty}, operations in the presence of %
autonomous vehicles~
\citep{ostrovsky2019carpooling,lian2020autonomous}, and utilization-based minimum wage regulations~\citep{asadpour2023minimum}.

\medskip

In regard to dynamic matching in ridesharing platforms, \citet{ashlagi2018maximum}, \citet{dickerson2018allocation} and \citet{aouad2020dynamic} focus on matching between riders and drivers and the pooling of shared rides, taking into consideration the online arrival of supply and demand in space.
Further, \citet{kanoria2020blind}, \citet{qin2020ride} and \citet{ozkan2020dynamic} design policies that dispatch drivers from areas with relatively abundant supply, while \citet{cai2019role} and \citet{pang2017efficiency} look at the role of information availability and transparency in platform design.
\citet{castro2022randomized} demonstrate how to use drivers' waiting times (in airport driver queues) to align incentives and reduce inequity in earnings when some trips are necessarily more lucrative than the others.
State-dependent dispatching~\citep{banerjee2018state,castro2020matching}, driver admission control~\citep{afeche2023ride} and capacity planning~\citep{besbes2022spatial} are also studied using queueing-theoretical models.

On the pricing side, \citet{castillo2017surge} and \citet{yan2020dynamic} demonstrate the importance of origin-based dynamic %
pricing in maintaining supply density in space and reducing waiting times for riders;
\citep{BanJohRiq2015Pricing} demonstrate the robustness of dynamic pricing in comparison to static pricing when the platform may have imprecise information on the arrival rate of rider demand; \citet{bimpikis2019spatial} and \citet{besbes2020surge} study revenue-optimal pricing when supply and demand are imbalanced in space;
\citet{garg2019driver} show that additive instead of multiplicative %
pricing is more incentive aligned for drivers when prices need to be origin-based only;  \citet{rheingans2019ridesharing} study pricing in the presence of driver location preferences;
\citet{freund2021pricing} discuss the cycles of volatile driver supply when the platform adopts two prices %
when pricing trips, and \citet{yu2022price} analyzes the price cycles arising from divers collectively turning offline to drive up prices.

\medskip

The closest works to ours in the ridesharing literature are \citet{ma2019spatio} and \citet{cashore2022dynamic}, which we briefly discussed earlier in the introduction.
\citet{ma2019spatio} study a deterministic, complete  information model that is sufficiently flexible to model spatial imbalance and temporal variation of supply and demand. They demonstrate that various existing market failure in today's ridesharing platforms are symptoms of inefficient pricing, and prove that optimal OD-based pricing with the structure discussed earlier is both welfare-optimal and %
incentive-compatible (\ie it forms a subgame-perfect equilibrium for all drivers to accept all dispatches from the platform).
\citet{cashore2022dynamic} generalize the model to a stochastic setting, and prove that the same pricing structure is still %
optimal and incentive aligned in large markets. Both works, however, assume that the platform has access to the correct demand model, which has proven to be very difficult to estimate in practice. 

In this work, we %
propose mechanisms that %
optimizes origin-destination based pricing via iterative updates without using any model for rider demand.

\section{Preliminaries} \label{sec:preliminaries}

In this section, we introduce a continuous time, non-atomic model for the operation of a ridesharing platform for \emph{a particular time period of the week}, \eg %
morning rush hour on Wednesdays. During this period of time, we assume that driver supply and rider demand are stationary. We %
characterize welfare-optimal, competitive equilibrium (CE) outcomes for the time-window of interest, which serve as the target of the week-over-week iterative pricing process that we discuss in Sections~\ref{sec:origin_based_surge} and~\ref{sec:INP}. %

\medskip

There is a total of $m > 0$ units of driver supply, and a set of $n > 1$ discrete locations $\gL = \{1, 2, \dots, n\}$. For each pair of locations $i, j \in \gL$, $(i, j) \in \gL^2$ denotes a \emph{trip} from
$i$ to $j$. $d_{i, j} > 0$ is the \emph{duration} of the trip, \ie it takes a driver $d_{i, j}$ units of time to drive from $i$ to $j$. Any driver who completes a trip from $i$ to $j$ (with or without a rider) incurs a \emph{cost} $c_{i, j} \ge 0$, which models the cost of time, driving, fuel, wear-and-tear, etc. %
To allow different traffic patterns, trip durations do not have to be symmetric, \ie it is possible that $d_{i, j} \neq d_{j, i}$ for $i \neq j$. Similarly, it is not necessary that the trip costs be symmetric.

For each trip $(i,j) \in \gL^2$, $q_{i, j}: \R_{\ge 0} \to \R_{\ge 0}$ is the \emph{demand function} for the trip, with $q_{i, j}(r)$ being the amount of riders (per unit of time) traveling from $i$ to $j$ and willing to pay at least $r$.
We assume that for all trips $(i, j) \in \gL^2$, (i) $q_{i, j}(\cdot)$ is continuously differentiable and  %
strictly decreasing (\ie $q_{i, j}'(r) < %
  0$, $\forall r \in \R_{\ge 0}$), and (ii) $\int_0^\infty q_{i, j}(r) \dl r < \infty$, \ie the total rider value is finite.%
\footnote{
  We work with demand functions in this paper, which %
  determine riders' value %
  distributions, and vice versa.
  Let $Q\subij > 0$ be the number of riders (per unit of time) who would request a trip $(i,j) \in \gL^2$ at zero price,  %
  and $F\subij$ be the cumulative distribution function (CDF) of their value. We have $q\subij(r) = Q\subij \cdot (1 - F\subij(r))$ for all $r \geq 0$.
  $q\subij(\cdot)$ being continuously differentiable and strictly decreasing means that the %
  probability density function of rider values $\difs{F_{i, j}(r)}{r}$ is continuous and strictly positive for all $r \geq 0$.
  Also note that assumption (ii) $\int_0^\infty q_{i, j}(r) \dl r < \infty$ implies $\lim_{r \to +\infty} q_{i, j}(r) = 0$.
}

We assume the platform can observe the total driver supply $m$, trip duration $\vd = (d\subij)_{(i,j)\in \gL^2}$, and trip costs $\vc = (c\subij)_{(i,j)\in \gL^2}$.
The rider demand model $\vq = (q\subij)_{(i,j)\in \gL^2}$ is not known to the platform, though the platform is able to observe the demand and local price sensitivity %
at the current trip prices.
Formally, if a trip $(i,j) \in \gL^2$ is priced at $r$, %
the platform observes $q\subij(r)$ and $q'\subij(r)$. %
The tuple $(m, \vd, \vc, \vq)$ specifies an \emph{economy}, \ie the market conditions during this focal time window (\eg %
morning rush hour on Wednesdays).

\paragraph{Outcomes}
An \emph{outcome} describes the flow of drivers and riders in space, as well as payments for drivers and riders associated to each trip. %
Formally, an outcome is a triple $(\vx, \vy, \vp)$, where for all $(i, j) \in \gL^2$, %
$x\subij \geq 0$ is the rider flow rate from location $i$ to location $j$, $y_{i, j} \ge 0$ is the rate of driver flow from $i$ to $j$ (%
either completing a rider trip, or relocating without a rider), and $p \subij \ge 0$ is the price of the $(i,j)$ trip that is collected from all $(i,j)$ riders who are picked up, and paid to all drivers traveling from $i$ to $j$ with a rider.%
\footnote{ %
  By construction, the outcomes are strictly budget-balanced. All results presented in this paper still hold if the platform takes a fixed percentage of drivers' surplus from every trip as commission.%
}
An outcome is feasible and stationary if the followings hold:%
\begin{enumerate}[label = (F\arabic*)]
  \item \label{cond:feasibility_nonneg_price} Trip prices are non-negative, \ie $\forall i, j \in \gL,\ p_{i, j} \ge 0$.
  \item \label{cond:feasibility_demand_supply} %
        The rider flow does not exceed the driver flow for any trip, \ie $\forall i, j \in \gL,\ x_{i, j} \le y_{i, j}$.
  \item \label{cond:feasibility_supply_cnst}
        The %
        outcome does not require more drivers than %
        the available supply, \ie $\sum_{i, j \in \gL} d_{i, j} y_{i, j} \le m$. %
  \item \label{cond:feasibility_flow_balance}
        The driver flow into and out of every location is balanced, \ie $\forall i \in \gL,\ \sum_{j \in \gL} y_{i, j} = \sum_{j \in \gL} y_{j, i}$.
\end{enumerate}

Unless stated otherwise, when we mention an outcome in the rest of the paper, it is assumed to be feasible and stationary.
We also assume throughout the paper that riders requesting the same trip are picked up in decreasing order of their values.%
\footnote{
  This is without loss for welfare-optimization, and always holds %
  when rider best-response is satisfied, \ie riders are picked up if and only if they are willing to pay the trip price (see condition \ref{cond:CE_rider_br}).
  The mechanisms we propose in this paper satisfy rider best-response at all times. %
}
For each $(i,j) \in \gL^2$, %
let $v_{i, j}(\cdot)$ be %
the inverse function of $q_{i, j}(\cdot)$, \ie %
$v_{i, j}(s)$ the value of the $s$\supth {} %
rider traveling from $i$ to $j$.
In this way, %
when $x\subij \in [0, q_{i, j}(0)]$ riders are picked up for the $(i,j)$ trip, their total value can be written as $\int_0^{x_{i, j}} v_{i, j}(s) \dl s$.
Since $q_{i, j}(r)$ is continuous and strictly decreasing for each $(i,j) \in \gL^2$, its inverse $v\subij(s)$ is well defined and is strictly decreasing for all $s \in (0, q_{i, j}(0))$. %
For convenience, we define $v_{i, j}(s) = \infty$ for $s \le 0$ and $v_{i, j}(s) = -\infty$ for $s > q_{i, j}(0)$.

\paragraph{Cycles and Driver Surplus}
Under an outcome \outcome, $\vy$ specifies the overall flow of drivers in space. To describe the movements of individual drivers, as well as their payoffs, it is convenient to use the cycle representation of driver flows.
A \emph{directed cycle} is a sequence of %
trips $((i_1, i_2), (i_2, i_3), \dots, \ab (i_{\ell}, i_1))$ visiting %
$\ell$ distinct locations before returning to the origin of the first trip.
Let $\gC$ be the set of all directed cycles.
We say a cycle $\kappa \in \gC$ \emph{covers} a trip $(i,j) \in \gL^2$ if the cycle includes the trip from location~$i$ to location~$j$, and we denote this as $(i,j) \in \kappa$.

By the flow decomposition theorem (see Theorem~3.5 of \citet{ahuja1988network}), every balanced flow can be decomposed into a weighted sum of cycles, and vice versa. %
Formally, given %
driver flow $\vy \in \R_{\ge 0}^{n^2}$,
\begin{equation} \label{eq:balance=cycles}
  \forall i \in \gL, \ \sum_{j \in \gL} y_{i, j} = \sum_{j \in \gL} y_{j, i}
  \iff
  \exists \vw \in \R_{\ge 0}^{|\gC|}, \  \forall i, j \in \gL, \ \sum_{\kappa %
    \in \gC} w_\kappa \one{(i, j) \in \kappa } = y_{i, j}.
\end{equation}
Here $\one{\cdot}$ is the indicator function, and we refer to $\vw$ as a \emph{(cycle) decomposition} of $\vy$.
Intuitively, in any balanced driver flow, every driver's movement in space %
forms a directed cycle, and $w_\kappa$ is the %
amount of drivers %
traveling along cycle $\kappa$.
Given an outcome \outcome, for each directed cycle $\kappa \in \gC$, denote %
\begin{equation}
  \mu_\kappa \triangleq \frac{\sum_{(i, j) \in \kappa} (p_{i, j} - c_{i, j})}{\sum_{(i, j) \in \kappa} d_{i, j}}. \label{eq:cycle_surplus_rate}
\end{equation}
Consider a driver traveling along $\kappa$. If for every trip on the cycle, either (i) the driver picks up a rider, or (ii) the trip price is zero when the driver relocates without a rider, then $\mu_\kappa$ represents the surplus rate of the driver, \ie the total payments minus costs from all trips %
divided by the total trip duration.

\begin{example}[The Morning Rush]
  \label{ex:running-example}
  Consider an economy illustrated in Figure~\ref{fig:simple-pattern} with $n = 2$ locations: the residential neighborhood (location 1) and the downtown area (location 2). Every minute during the morning rush hours, 10 riders travel from the residential neighborhood to downtown, 20 riders travel within the downtown area, but no rider demands a trip ending in the residential neighborhood.

  \begin{figure}[t!]
    \centering
    \begin{tikzpicture}[shorten >= 1pt, scale = 2, auto]
      \foreach \pos/\name in {(0, 0)/1, (1.5, 0)/2}
      \node[vertex] (\name) at \pos {\name};
      \draw[edge, bend left] (1) to node{%
        10 riders/min} (2);
      \draw[edge, bend left, dashed] (2) to node{no demand} (1);
      \draw[edge, out = -150, in = 150, loop, dashed] (1) to node{no demand} (1);
      \draw[edge, out = 30, in = -30, loop] (2) to node{%
        20 riders/min} (2);
    \end{tikzpicture}
    \caption{%
      A stylized economy illustrating the demand pattern during morning rush hours.}
    \label{fig:simple-pattern}
  \end{figure}
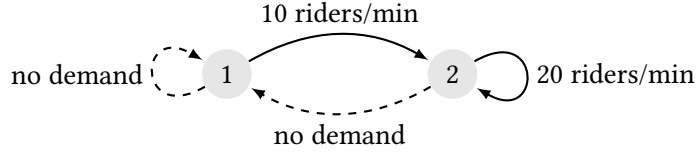

  The demand functions are given by
  $q_{1, 2}(r) = 10 \exp(-r / 40)$, %
  $q_{2, 2}(r) = 20 \exp(-r / 10)$,  %
  and $q_{1, 1}(\cdot) = q_{2, 1}(\cdot) = 0$. %
  This effectively assumes that riders' values are exponentially distributed with mean $40$ and $10$ for the $(1,2)$ and $(2,2)$ trips, respectively.
    Trips within each location takes 10 minutes ($d_{1, 1} = d_{2, 2} = 10$) and trips between locations take 20 minutes ($d_{1, 2} = d_{2, 1} = 20$).
  Finally, there are $m = 240$ available drivers, and we assume for simplicity that all trip costs are zero, \ie $c_{i, j} = 0$ for all $i, j \in \{1, 2\}$.

  \subparagraph{A Feasible Outcome}

  We can verify that %
  the outcome $(\vx, \vy, \vp)$ specified as follows satisfies conditions \ref{cond:feasibility_nonneg_price}--\ref{cond:feasibility_flow_balance}, and therefore is feasible and stationary:  %
  \begin{itemize} %
    \item the %
          the rider flow is given by $x_{1,2} = 1$, $x_{2,2} = 20$, and $x_{1,1} = x_{2,1} = 0$,
    \item the driver flow is given by $y_{1,2} = y_{2,1} = 1$, $y_{2,2} = 20$, and $y_{1,1} = 0$, and
    \item the trip prices are $p_{1,1} = 46.05$, $p_{1,2} = 92.10$ and $p_{2,1} = p_{2,2} = 0$.
  \end{itemize}

  \vspace{0.1em}

  Intuitively, %
  this outcome results from the platform's naively %
  ``clearing the market'' by origin, \ie setting a price rate %
  for each origin such that supply meets demand--- %
  location $2$ has excess supply thus is priced at zero; %
  location $1$ trips are priced at \$4.605 per minute (\ie $p_{1,1} = d_{1,1} \times 4.605 = 46.05$ and similarly $p_{1,2} = 92.10$); %
  riders are picked up if and only if their value is (weakly) above the price.
  The social welfare (\ie the total rider value minus driver cost) achieved under this outcome %
  is $\int_0^{x_{1, 2}} v_{1, 2}(s) \dl s + \int_0^{x_{2, 2}} v_{2, 2}(s) \dl s - 0 = 332.10$ dollars/min. This outcome is inefficient, since all riders for the $(2,2)$ trip are picked up, even those with values very close to zero, whereas many high-value $(1,2)$ riders did not get a ride.

  \subparagraph{Cycles and Driver Surplus}

  There are a total of three directed cycles in this economy: $\gC = \{\kappa_1, \kappa_2, \kappa_3\}$, where $\kappa_1 = ((1,1))$ $\kappa_2 = ((2,2))$ are ``self loops'' and  $\kappa_3 = ((1,2), (2,1))$ is the cycle between the two locations (note that cycles $((1, 2), (2, 1))$ and $((2, 1), (1, 2))$ %
  are the same). %
  The driver flow $\vy$ can be decomposed into cycles with weights $w_{\kappa_1} = 0$, $w_{\kappa_2} = 20$, and $w_{\kappa_3} = 1$. We can see that this outcome is very inequitable: those drivers on $\kappa_2$ (\ie fulfilling the $(2,2)$ trip) gets zero surplus, whereas the drivers on $\kappa_3$ looping around the two locations %
  gets a surplus of $(p_{1,2} + 0) / (d_{1,2} + d_{2,1}) = 2.30$ per minute.
\end{example}

\subsection{Optimal Outcome and %
  CE Prices} \label{sec:opt_welfare_and_CE}

We now formalize the optimization problem, characterize the structure of competitive equilibrium (CE) prices, and prove a welfare theorem, that an outcome is welfare-optimal if and only it is a CE.

\begin{definition}[Competitive Equilibrium (CE)] \label{def:CE}
  A CE is an outcome $(\vx, \vy, \vp)$ such that:
  \begin{enumerate}[label = (R\arabic*)]
    \item \label{cond:CE_rider_br} Riders are picked up if and only if their value is higher than the price: $\forall (i, j) \in \gL^2, \, x_{i, j} = q_{i, j}(p_{i, j})$.
  \end{enumerate}
  \begin{enumerate}[label = (D\arabic*)]%
    \item \label{cond:CE_relocation}
          Trips with relocating driver flow are priced at zero: $\forall (i, j) \in \gL^2, \ ( y_{i, j} > x_{i, j} \implies p_{i, j} = 0 )$.
    \item \label{cond:CE_exhaust_supply}
          All drivers are used-up, if there exist a cycle with a strictly positive %
          surplus rate:
          \begin{equation}
            \max_{\kappa \in \gC} \mu_\kappa > 0 \implies \sum_{i, j \in \gL} d_{i, j} y_{i, j} = m. \label{qu:CE_exhaust_supply}
          \end{equation}
    \item \label{cond:CE_do_nothing}
          No driver is dispatched, if the surplus rate is strictly negative for all cycles: %
          \begin{equation}
            \max_{\kappa \in \gC} \mu_\kappa < 0 \implies \forall (i, j) \in \gL^2, \, y_{i, j} = 0.
          \end{equation}
    \item \label{cond:cycle_max}
          The driver flow $\vy$ can be decomposed into cycles with weights $\vw \in \R_{\geq 0}^{|\gC|}$ such that every cycle with positive driver flow has the highest %
          surplus rate: %
          \begin{equation}
            \exists \vw, \ \forall \kappa \in \gC, \quad w_\kappa > 0 \implies \mu_\kappa = \max_{\kappa' \in \gC} \mu_{\kappa'}. \label{qu:CE_cycle_surplus}
          \end{equation}
  \end{enumerate}
\end{definition}

Intuitively, under a %
CE, all riders and drivers get their %
highest possible utility given the trip prices.
\ref{cond:CE_rider_br} can be interpreted as \emph{rider best-response}, \ie each rider decides whether she would like to be picked up given their trip price.
Conditions \ref{cond:CE_relocation}--\ref{cond:cycle_max} together represent \emph{driver best-response}, \ie each of the $m$ units of drivers is not worse-off than the scenario where she has the freedom to (i) choose any cycle and get paid for all trips on the cycle, and (ii) choose to not drive for the platform at all.%
\footnote{Technically, in condition \ref{cond:cycle_max}, it is equivalent if ``$\exists \vw$'' is replaced by ``$\forall \vw$''. This can be seen from the proof of \Cref{lem:welfare_theorem}. Moreover, for competitive equilibrium in two-sided markets, it is typically not necessary to require that the price be zero when supply exceeds demand, since a negative price implies a violation of the best-response on the supply side. %
  This is the case for our setting as well, but we keep condition \ref{cond:CE_relocation} for simplicity of notation and presentation.%
}

\smallskip

When the rider demand model $\vq = (q_{i, j}(\cdot))_{(i, j) \in \gL^2}$ (and thus the inverse functions $(v_{i, j}(\cdot))_{(i, j) \in \gL^2}$)
is known, the following convex program maximizes social welfare over rider and driver flows:
\begin{maxi!}
{\vx, \vy \in \R^{n^2}}{\sum_{i, j \in \gL} \biggl( \int_0^{x_{i, j}} v_{i, j}(s) \dl s - c_{i, j} y_{i, j} \biggr) \label{eq:primal-obj}}{\label{eq:primal}}{}
\addConstraint{x_{i, j}}{\le y_{i, j}, \quad \label{eq:x<=y}}{\forall i, j \in \gL}
\addConstraint{\sum_{i, j \in \gL} d_{i, j} y_{i, j}}{\le m \label{eq:supply-constraint}}
\addConstraint{\sum_{j \in \gL} y_{i, j}}{= \sum_{j \in \gL} y_{j, i}, \quad \label{eq:flow-balance}}{\forall i \in \gL.}
\end{maxi!}

The objective \eqref{eq:primal-obj} represents the total value of riders who are picked up minus the total cost incurred by the drivers, per unit of time. %
Given that $v_{i, j}(\cdot)$ is strictly decreasing for all $(i,j) \in \gL^2$, its integral is strictly concave, and thus the objective function \eqref{eq:primal-obj} is strictly concave. Constraints \eqref{eq:x<=y}--\eqref{eq:flow-balance} correspond to conditions \ref{cond:feasibility_demand_supply}--\ref{cond:feasibility_flow_balance}  under which the outcome is feasible and stationary.%
\footnote{Note that we did not impose non-negativity constraints on $\vx$ or $\vy$, or the constraint that $x\subij \leq q\subij(0)$ for all $(i,j)\in\gL$. This is because by construction, $v_{i, j}(s) = \infty$ for $s < 0$ and $v_{i, j}(s) = -\infty$ for $s > q_{i, j}(0)$, and thus these conditions are satisfied by any solution where the objective is not negative infinity.
}

\smallskip

Let $p_{i, j}$, $\omega$, and $\phi_i$ be the dual variables corresponding to constraints \eqref{eq:x<=y}, \eqref{eq:supply-constraint}, and \eqref{eq:flow-balance}, respectively. The dual problem of the welfare-optimization problem \eqref{eq:primal} is given as follows:
\begin{mini!}
{\vp \in \R^{n^2}, \, \omega \in \R, \, \vphi \in \R^n}{m \omega + \sum_{i, j \in \gL} \int_{0}^{q_{i, j}(p_{i, j})} (v_{i, j}(s) - p_{i, j}) \dl s %
\label{eq:dual_obj}}{\label{eq:dual}}{}
\addConstraint{p_{i, j}}{= c_{i, j} + d_{i, j} \omega + \phi_i - \phi_j, \quad \label{eq:dual_price_structure}}{\forall i, j \in \gL}
\addConstraint{p_{i, j}}{\ge 0, \quad}{\forall i, j \in \gL}
\addConstraint{\omega}{\ge 0. \label{eq:dual_omega>=0}}
\end{mini!}

\begin{restatable}[Welfare Theorem]{lemma}{lemWelfareThm} \label{lem:welfare_theorem}
  A feasible rider and driver flow $(\vx, \vy)$ is welfare-optimal if and only if there exist trip prices $\vp$ such that the outcome $(\vx, \vy, \vp)$ is a competitive equilibrium (CE).
  Moreover, for any CE outcome $(\vx, \vy, \vp)$, there exist $\omega%
    \ge 0$ and $\vphi %
    \in \R^n$ such that
  \begin{equation}
    p_{i, j} = c_{i, j} + d_{i, j} \omega %
    + \phi_i %
    - \phi_j%
    , \quad \forall i, j \in \gL. \label{eq:price_structure}
  \end{equation}
\end{restatable}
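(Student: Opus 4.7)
The plan is to derive both directions from Lagrangian duality for the convex program \eqref{eq:primal}. Slater's condition holds (taking $\vx = \vy = \vzero$), so strong duality applies and the KKT conditions are both necessary and sufficient for optimality; the multipliers for \eqref{eq:x<=y}, \eqref{eq:supply-constraint}, and \eqref{eq:flow-balance} play the roles of $\vp$, $\omega$, and $\vphi$, respectively.

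For the direction ``optimal $\Rightarrow$ CE,'' given welfare-optimal $(\vx, \vy)$ I would invoke strong duality to obtain multipliers $(\vp, \omega, \vphi)$ with $\vp \ge \vzero$ and $\omega \ge 0$. Stationarity in $x_{i, j}$ yields $v_{i, j}(x_{i, j}) = p_{i, j}$, i.e., \ref{cond:CE_rider_br}; stationarity in $y_{i, j}$ is exactly the price structure \eqref{eq:price_structure}; and complementary slackness for \eqref{eq:x<=y} is \ref{cond:CE_relocation}. Once \eqref{eq:price_structure} holds, the $\phi$-terms telescope along every cycle so $\mu_\kappa = \omega$ uniformly; hence $\max_\kappa \mu_\kappa = \omega \ge 0$, which makes \ref{cond:CE_do_nothing} vacuous, converts complementary slackness for supply into \ref{cond:CE_exhaust_supply}, and makes \ref{cond:cycle_max} hold for any cycle decomposition of $\vy$.

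For the converse and the price-structure claim, I would first observe that in any CE one has $y_{i, j} > 0$ for every $(i, j) \in \gL^2$: by \ref{cond:CE_rider_br}, $x_{i, j} = q_{i, j}(p_{i, j})$, and strict monotonicity plus integrability of $q_{i, j}$ imply $q_{i, j}(r) > 0$ at every finite $r$, whence $y_{i, j} \ge x_{i, j} > 0$. I would then construct $(\omega, \vphi)$ via the auxiliary LP
\[
  \max_{\vy' \ge \vzero} \sum_{i, j \in \gL} (p_{i, j} - c_{i, j}) y'_{i, j} \quad \text{s.t.} \quad \sum_j y'_{i, j} = \sum_j y'_{j, i} \text{ for all } i, \; \sum_{i, j} d_{i, j} y'_{i, j} \le m.
\]
Rewriting the objective via the cycle decomposition \eqref{eq:balance=cycles}, conditions \ref{cond:CE_exhaust_supply}--\ref{cond:cycle_max} imply that $\vy$ attains the LP optimum. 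LP duality then supplies $\omega \ge 0$ and $\vphi \in \R^n$ with $p_{i, j} - c_{i, j} \le d_{i, j}\omega + \phi_i - \phi_j$ for all trips, with equality whenever $y_{i, j} > 0$; since every $y_{i, j} > 0$, equality holds uniformly, establishing \eqref{eq:price_structure}. Welfare optimality of $(\vx, \vy)$ then follows by a standard Lagrangian bound: for any feasible $(\vx', \vy')$, the concavity inequality $\int_0^{x'_{i, j}} v_{i, j}(s) \dl s \le \int_0^{x_{i, j}} v_{i, j}(s) \dl s + p_{i, j}(x'_{i, j} - x_{i, j})$ (using $v_{i, j}(x_{i, j}) = p_{i, j}$), combined with $x'_{i, j} \le y'_{i, j}$, the price structure \eqref{eq:price_structure}, flow balance and the supply constraint for $\vy'$, and the identity $\sum p_{i, j} x_{i, j} - \sum c_{i, j} y_{i, j} = \omega m$ (from \ref{cond:CE_relocation} together with complementary slackness for supply), yields welfare of $(\vx', \vy') \le$ welfare of $(\vx, \vy)$.

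The main obstacle is upgrading the cycle sub-LP's dual inequality to the equality \eqref{eq:price_structure} on \emph{every} OD pair---including trips without relocating drivers, where \ref{cond:CE_relocation} does not pin down $p_{i, j}$ directly. The resolution is strict monotonicity of $q_{i, j}$, which forces $y_{i, j} > 0$ everywhere in any CE and thereby activates complementary slackness on every trip, so that $(\omega, \vphi)$ from the sub-LP certify the additive-potential form on the entire network.
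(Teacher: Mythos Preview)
Your argument is correct and covers both directions cleanly; the key observation that $y_{i,j}>0$ on every OD pair in any CE (forcing complementary slackness to bite everywhere) is exactly the crux. Your route, however, differs from the paper's. The paper does not work with the flow formulation \eqref{eq:primal}--\eqref{eq:dual} directly; instead it introduces an equivalent \emph{cycle} primal with explicit weights $w_\kappa$ and a cycle dual whose variables are unrestricted $\eta_{i,j}$ (one per edge) in place of node potentials $\phi_i$. For the ``optimal $\Rightarrow$ CE'' direction the paper reads the CE conditions off the four terms of the cycle duality gap; for ``CE $\Rightarrow$ optimal'' it sets $\omega=\max\{\max_\kappa\mu_\kappa,0\}$ and $\eta_{i,j}=c_{i,j}+d_{i,j}\omega-p_{i,j}$ and verifies the same gap vanishes. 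The price structure \eqref{eq:price_structure} then comes from a separate dual-equivalence claim: defining $\phi_i$ as the shortest-path distance from $n$ to $i$ in the graph with edge weights $\eta_{i,j}$, the paper shows $\eta_{i,j}=\phi_j-\phi_i$ on every edge, again using $y_{i,j}>0$ to find a zero-$\eta$-sum cycle through each edge. Your auxiliary ``driver LP'' replaces this shortest-path detour with a single application of LP complementary slackness, which is more direct for the lemma at hand; the paper's cycle formulation, on the other hand, gives edge-level dual variables $\eta_{i,j}$ that can represent \emph{arbitrary} (non-potential) prices, which is the flexibility the paper exploits later when analyzing origin-based market-clearing with location-dependent multipliers. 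One small nit: $\vx=\vy=\vzero$ does not give strict inequality in \eqref{eq:x<=y}, but since all constraints in \eqref{eq:primal} are affine this is harmless for strong duality.
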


This structure of CE prices are aligned with the structure of optimal and incentive compatible prices in existing literature~\citep{ma2022spatio,cashore2022dynamic} (which we discussed in Section~\ref{sec:intro}).
We provide the proof of this Lemma in Appendix~\ref{appx:proof_welfare_thm}. Briefly, we use standard observations about the connection between duality and market equilibrium~\citep{shapley1971assignment,bertsekas1990auction,parkes2000iterative,ma2022spatio}. %
The dual variables $\vp$ and $\omega$ can be interpreted as trip prices and drivers' surplus rate, %
and the difference between the dual and the primal objectives corresponds to the total violation of the driver and rider best-response conditions \ref{cond:CE_rider_br}--\ref{cond:cycle_max}. %
An outcome is therefore welfare-optimal if and only if it is a competitive equilibrium.

Note that we work with alternative primal and dual formulations %
(see Appendix~\ref{appx:cycle_formulation}), which provide sufficient flexibility to map any %
set of prices $\vp$
to the dual variables.
Driver best-response, however, requires that drivers be indifferent towards all feasible paths between %
each pair of locations $(i, j) \in \gL^2$.
This allows us to prove that %
CE prices must have the structure as shown in \eqref{eq:price_structure}, consisting of three parts:%
\begin{itemize}
  \item $c\subij$: the cost incurred by the driver for completing each trip $(i,j) \in \gL^2$,
  \item $d_{i, j} \omega$: the opportunity cost of the %
        trip, which is multiplicative and scales with trip duration. %
        $\omega$ represents the marginal value of driver supply, and is referred to as the \emph{multiplier},
  \item $\phi_i - \phi_j$: the additive, origin-destination (OD) based adjustment, corresponding to the marginal value of a driver at the trip origin $i$ minus that of a driver
        at the trip destination $j$. Intuitively, this can  be interpreted as the value of ``teleporting'' a driver from $j$ back to $i$.
\end{itemize}

Note that there are $n-1$ instead of $n$ degrees of freedom in the OD-based adjustments, since shifting all $\phi_i$'s %
by the same constant has no impact on the dual problem \eqref{eq:dual}.
Technically, this is due to the fact that one of the flow-balance constraints \eqref{eq:flow-balance} is redundant, since the flow-balance of any $n-1$ locations imply that the remaining location is also balanced.
We prove in Proposition~\ref{prop:dual_uniqueness} in Appendix~\ref{appx:proof_dual_uniqueness} that the optimal dual solution is unique %
up to a constant shift in $\vphi$, and in the rest of this paper, we use $\omega^\ast$ and $\vphi^\ast$ to denote the
optimal multiplier and the optimal OD-based adjustments with $\phi_n^\ast = 0$.

\medskip

We now revisit the morning rush-hour example, %
and illustrate the welfare-optimal outcome.

\begin{examplecont}{ex:running-example} %
  Under the optimal outcome, the platform picks up (per minute) $x_{1,2} = 4$ units of riders going from the residential area to downtown, and $x_{2,2} = 8$ units of riders traveling within downtown.
  The optimal welfare is %
  approximately 459.91 dollars per minute, and
  the corresponding driver flow is $y_{1, 1} = 0$, $y_{1, 2} = y_{2, 1} = 4$, and $y_{2, 2} = 8$.
  Despite the fact that there is no demand from downtown to the residential area ($x_{2,1} = 0$), some drivers relocate without a rider ($y_{2, 1} = 4$) in order to satisfy the high-value demand on the opposite direction.

  \subparagraph{CE Prices}

  The CE prices are given by $p_{1,1} = p_{2,2} = 9.16$, $p_{1,2} = 36.65$ and $p_{2,1} = 0$, %
  corresponding to the optimal dual solution
  $\omega^\ast
    = 0.916$ (dollars per minute) and $\phi_1^\ast - \phi_2^\ast
    = 18.33$ (dollars).
    It is intuitive that $\phi_1^\ast> \phi_2^\ast$, \ie the marginal value of a driver in the residential area is higher than that in downtown.
  Adjusting trip prices by \emph{origin and destination} using $\vphi^\ast$,
  the platform charges $p_{2,2} = c_{2,2} + d_{2,2} \omega^\ast + \phi_2^\ast - \phi_2^\ast = 10\omega^\ast > 0$ (%
  so that the low-value downtown riders are not picked up),
  while setting %
  $p_{2,1} = c_{2,1} + d_{2,1}\omega^\ast + \phi_2^\ast - \phi_1^\ast %
    = 0$ %
  and relocate drivers from downtown %
  back to the residential area.

  \subparagraph{Equitable Driver Earnings}

  For drivers cycling between locations $1$ and $2$, each cycle takes $d_{1,2} + d_{2,1} = 40$ minutes, from which the drivers get a surplus of $(p_{1,2} + 0) / 40 = 0.916 = \omega^\ast$ per minute.
  The drivers who loop around location $2$ get the same surplus rate: each $(2,2)$ trip takes $d_{2,2} = 10$ minutes, and the surplus rate is also $p_{2,2} / 10 = 0.916 = \omega^\ast$.
\end{examplecont}

In practice, %
without access to the rider demand model, %
a platform is not able to directly compute %
$\vphi^\ast$, the marginal value of drivers by location.
Instead, under the outcome (discussed in Example~\ref{ex:running-example} earlier in this section) that results from the platforms' naively clearing the market by origin without any OD-based adjustments, the
platform achieves only 332.1/459.81 = 72\% of the optimal social welfare. %
Moreover, drivers earn substantially more from trips originating in the residential area in comparison to downtown.

This large discrepancy of earnings, however, suggests that drivers in the residential are more valuable than those in downtown (which is indeed the case, as $\phi_1^\ast > \phi_2^\ast$). It is therefore natural to consider trying some small adjustment, for example $\phi_1 - \phi_2
  = 1$, %
and then see what happens next week and move forward from there.
In next section, we characterize the outcome under  %
such ``surge pricing'' algorithms that clear the market %
using origin-based multipliers in the presence of additive OD-based adjustments. %

\section{Origin-Based Market Clearing} \label{sec:origin_based_surge}

In this section, we characterize the outcome under origin-based market-clearing with OD-based adjustments, \ie when the platform determines origin-based ``surge'' multipliers to balance supply and demand for each location.
We prove that the market-clearing outcome is unique when driver relocation is coordinated by the platform, and that the welfare-loss (in comparison to the optimal outcome discussed in Section~\ref{sec:preliminaries}) is bounded by the difference in the multipliers at different locations.
This characterization allows us to discuss in
Section~\ref{sec:INP} how a platform updates the OD-based adjustments week-over-week, based on the observed market-clearing outcomes in prior weeks.

\medskip

As we have shown in \Cref{lem:welfare_theorem}, given the optimal OD-based adjustments $\vphi^\ast$, a platform can clear the market and optimize welfare %
using a uniform %
multiplier $\omega^\ast$ for all locations.
In the absence of OD-based adjustment or when the adjustments are suboptimal, however, it is impossible to satisfy the best-response conditions for both drivers and riders. %
Instead, as in practice, the platforms %
determine \emph{origin-based multipliers} $\vpi = (\pi_i)_{i \in \gL}$ %
to balance supply and demand for each location, %
using prices of the form:
\begin{equation}
  p_{i, j} = c_{i, j} + d_{i, j} \pi_i + \phi_i - \phi_j, \quad \forall i, j \in \gL. \label{eq:price_structure_obmc}
\end{equation}

For instance, Example~\ref{ex:running-example} illustrates such an %
outcome for the morning rush hours with $\vphi = \vzero$.
With origin-based multipliers $\pi_1 = 4.605$ and $\pi_2 = 0$, (i) riders %
are picked up if and only if their values are above the price, and (ii) prices are non-negative but trips with excess supply are priced at zero.
When a platform's dynamic pricing algorithm sets
$\vpi$ to achieve the desired properties (i) and (ii), however, the resulting outcome may not be unique and will %
depend the actions of drivers upon arriving at a location without being assigned a rider trip.%
\footnote{For instance, %
  when a half of the drivers who are undispatched at location~$2$ drives to location~$1$ while the other half stay in location~$2$, %
  the market-clearing %
  multipliers are given by $\pi_1 = 5.99$ and $\pi_2 = 0$, and corresponding rider and driver flows are $x_{1,1} = x_{2,1} = 0$, $x_{1,2} = 0.5$, $x_{2,2} = 20$, and $y_{1,1}=0$, $y_{1,2} = y_{2,1} = 0.5$, $y_{2,2} = 22$. %
}
This %
lack of uniqueness in the potential market-clearing outcome imposes a challenge on the week-over-week update of the OD-based adjustments.

\paragraph{Coordinated Driver Relocation}
In this work, we assume that the platform coordinates %
the relocation of drivers who are not dispatched rider trips. This can be %
modeled using \emph{phantom demand functions} $\vqtilde = (\qtilde_{i, j}(\cdot))_{(i, j) \in \gL^2}$.
Specifically, %
if the price of a trip $(i,j) \in \gL$ is $p\subij \geq 0$, the platform dispatches $\qtilde_{i, j}(p_{i, j})$ units of drivers to relocate from $i$ to $j$ (per unit of time) without a rider, in addition to dispatching $q\subij (p_{i, j})$ drivers to rider trips. The platform has flexibility in choosing any phantom demand function $\vqtilde$ such that:
\begin{enumerate}[label = (\Roman*)]%
  \item \label{cond:phantom_fn} For all $i, j \in \gL$, $\qtilde_{i, j}(\cdot)$ is non-negative, %
        continuously differentiable, and monotonically decreasing.
  \item \label{cond:phantom_zero_price} For all $i, j \in \gL$, $d_{i, j} \qtilde_{i, j}(0) > m$.

\end{enumerate}

Condition \ref{cond:phantom_zero_price} guarantees that the %
phantom demand at zero price is sufficiently high for each trip, so that the platform %
will provide a relocation recommendation for every driver who is not dispatched to pick up a rider. %
When the %
prices are very
high, however, it is inefficient and inequitable to have drivers relocate without a rider (and as a result not getting paid %
for the trip). Define this ``slackness'' as of $\vqtilde$ as
\begin{equation}
  \phantomSlack_{i, j} \triangleq \sup_{r \ge 0} r \cdot \qtilde_{i, j}(r), \quad \forall i,j \in \gL, \label{eq:phantom_slackness}
\end{equation}
$\sum_{i, j \in \gL} \phantomSlack_{i, j}$ provides an upper bound of the %
welfare loss due to driver relocation at positive prices under $\vqtilde$.
Unless stated otherwise,
we assume that conditions \ref{cond:phantom_fn} and \ref{cond:phantom_zero_price} are satisfied %
in the rest of the paper. %

\begin{definition}[Origin-based market clearing] \label{def:MC}
  For any economy \economy, any phantom demand $\vqtilde$, and any OD-based additive adjustment $\vphi \in \R^n$, a set of origin-based multipliers $\vpi \in \R^n$ is \emph{market clearing} if there exists a feasible outcome $(\vx, \vy, \vp)$ with prices given by \eqref{eq:price_structure_obmc} such that:
  \begin{enumerate}[label = (MC\arabic*)]
    \item \label{cond:mc_rider_reloc_flow} %
          Riders best-respond to the trip prices (\ie $x_{i, j} = q_{i, j}(p_{i, j})$ for all %
          $(i, j) \in \gL^2$), and the driver relocation %
          is coordinated by the phantom demand (\ie $y_{i, j} - x_{i, j} = \qtilde_{i, j}(p_{i, j})$ for all $(i, j) \in \gL^2$).
    \item \label{cond:mc_total_supply}
          All drivers are dispatched, \ie $\sum_{i, j \in \gL} d_{i, j} y_{i, j} = m$.
  \end{enumerate}
\end{definition}

Note that by defining the origin-based market-clearing outcome, we effectively postulate capabilities and behaviors of the platform's origin-based dynamic pricing algorithm.
Condition \ref{cond:mc_total_supply}, for instance, implies that the platform dispatches all drivers who are on the road to either a rider or a relocation trip, and does not recommend any driver to stop driving and go offline.
Among the conditions for competitive equilibria, rider best-response \ref{cond:CE_rider_br} is still satisfied %
since it is imperative for the ridesharing platforms to guarantee reliable service for riders (see Footnote~\ref{fn:reliability_for_riders}). %
In contrast, the driver best-response conditions are relaxed, mirroring the current state of platforms, as without optimal OD-based adjustments a platform cannot ensure best-response for both sides of the market. %

\begin{restatable}[Existence and Uniqueness]
  {lemma}{lemUniqueness} \label{lem:unique}
  For any economy \economy, any phantom demand function $\vqtilde$, %
  and any OD-based price adjustment $\vphi \in \R^n$, there exists a unique %
  vector of origin-based multipliers $\vpi \in \R^n$ that clears the market.
\end{restatable}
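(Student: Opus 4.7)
My plan is to rewrite the market-clearing conditions as a system of nonlinear equations in $\vpi$, then establish uniqueness by a cut-based monotonicity argument and existence by a compactness/fixed-point argument. Given $\vphi$, for each $(i,j)\in\gL^2$ define the induced total driver flow as a function of $\pi_i$ alone,
\[
y_{i,j}(\pi_i) := (q_{i,j} + \qtilde_{i,j})\bigl(c_{i,j} + d_{i,j}\pi_i + \phi_i - \phi_j\bigr),
\]
which is strictly decreasing in $\pi_i$ since $q'_{i,j}<0$. Combining \ref{cond:mc_rider_reloc_flow}--\ref{cond:mc_total_supply} with the flow-balance feasibility condition \ref{cond:feasibility_flow_balance} reduces the problem to finding $\vpi\in\R^n$ such that (a) the flow-balance residuals $g_i(\vpi) := \sum_j y_{i,j}(\pi_i) - \sum_j y_{j,i}(\pi_j)$ vanish for every $i\in\gL$ (one of which is redundant, since $\sum_i g_i\equiv 0$), and (b) the supply residual $h(\vpi) := \sum_{i,j} d_{i,j} y_{i,j}(\pi_i) - m$ vanishes. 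Condition \ref{cond:phantom_zero_price} guarantees $p_{i,j}>0$ strictly at any solution (were $p_{i,j}=0$, then $d_{i,j}y_{i,j}(\pi_i) > d_{i,j}\qtilde_{i,j}(0) > m$, contradicting $h=0$), so the non-negativity feasibility \ref{cond:feasibility_nonneg_price} need not be imposed separately.

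For uniqueness, I would suppose for contradiction that $\vpi^1\ne\vpi^2$ both solve the system, and partition the locations as $U:=\{i:\pi_i^1>\pi_i^2\}$ and $U^c := \gL\setminus U$. If $U = \emptyset$ or $U = \gL$, then $\vpi^1$ and $\vpi^2$ are coordinate-wise comparable with strict inequality in at least one coordinate; since $h$ is strictly monotone in every $\pi_i$, this forces $h(\vpi^1)\ne h(\vpi^2)$, contradicting (b). Otherwise both $U$ and $U^c$ are non-empty, and summing $g_i=0$ over $i\in U$ causes all internal $U$-to-$U$ flows to cancel, yielding
\[
\sum_{i\in U,\,j\in U^c} y_{i,j}(\pi_i) \;=\; \sum_{i\in U,\,j\in U^c} y_{j,i}(\pi_j),
\]
\ie the net cross-boundary flow across the cut $(U,U^c)$ vanishes separately under each solution. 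For each $i\in U$, strict monotonicity gives $y_{i,j}(\pi_i^1) < y_{i,j}(\pi_i^2)$, so the outflow side is strictly smaller under $\vpi^1$; for each $j\in U^c$, $\pi_j^1\le\pi_j^2$ gives $y_{j,i}(\pi_j^1)\ge y_{j,i}(\pi_j^2)$, so the inflow side is weakly larger under $\vpi^1$. Combined, this yields the desired contradiction.

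For existence, I would work inside a large compact rectangle $K=\prod_i[\underline\pi_i,\bar\pi_i]$ and apply a multidimensional intermediate-value argument. The lower bound $\underline\pi_i := \max_j (\phi_j - c_{i,j} - \phi_i)/d_{i,j}$ is the smallest $\pi_i$ keeping every $p_{i,j}$ non-negative; at $\pi_i=\underline\pi_i$, some $p_{i,j^*} = 0$, and condition \ref{cond:phantom_zero_price} forces the outflow from $i$ to exceed $\qtilde_{i,j^*}(0) > m / d_{i,j^*}$, which dominates any feasible inflow to $i$ (bounded in aggregate through the supply budget $m$), so $g_i > 0$ on the face $\pi_i=\underline\pi_i$. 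Taking $\bar\pi_i$ large enough sends the outflow from $i$ to zero, giving $g_i < 0$ on the opposite face; analogous sign control is available for $h$ along any chosen coordinate. Since $F := (g_1,\ldots,g_{n-1},h)$ is continuous on $K$ with this boundary sign pattern, the Poincar\'e--Miranda theorem (equivalently, Brouwer applied to a natural origin-by-origin best-response map) yields a zero of $F$ in $K$. The main obstacle lies in this existence step: whereas uniqueness follows cleanly once the cut trick is set up, verifying the sign pattern on all $2n$ faces is delicate because each $g_i$ couples to the other $\pi_j$'s through inflow and because $h$ depends on every coordinate simultaneously, and this is precisely where condition \ref{cond:phantom_zero_price} plays its most essential role.
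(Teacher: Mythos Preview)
Your uniqueness argument is correct and is essentially identical to the paper's: both partition locations by the sign of $\pi_i^1-\pi_i^2$, sum the flow-balance identities over one side of the cut, and use strict monotonicity of the outflow and weak monotonicity of the inflow to derive a contradiction. The paper phrases this as ``orderedness'' ($\pi_i'<\pi_i$ for one $i$ forces it for all $i$) and then appeals to strict monotonicity of total supply, exactly as you do.

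Your existence argument, however, has a real gap, and it is precisely the one you flag at the end. On the face $\pi_i=\underline\pi_i$ you argue that outflow exceeds $\qtilde_{i,j^*}(0)>m/d_{i,j^*}$ while inflow is ``bounded in aggregate through the supply budget $m$.'' That second claim is circular: the supply constraint $h=0$ holds only at a solution, not on an arbitrary boundary face of the box. On that face the remaining coordinates $\pi_j$ range over $[\underline\pi_j,\bar\pi_j]$, and when $\pi_j=\underline\pi_j$ the inflow $y_{j,i}(\underline\pi_j)$ can itself be of order $\qtilde_{j,i}(0)>m/d_{j,i}$ (whenever $i$ is the argmax defining $\underline\pi_j$). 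So the Poincar\'e--Miranda sign pattern for $g_i$ is not established, and the same coupling problem obstructs the sign pattern for $h$ along a single coordinate.

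The paper's existence proof sidesteps exactly this difficulty by a different route. Rather than trying to control signs on a box boundary, it \emph{manufactures} a known flow-balanced point: it replaces $q+\qtilde$ by a modified family $\rhohat_{i,j}$ that agrees with the original for $\pi_i\ge\pibar_i$ (where $\pibar_i$ is large enough that any market-clearing solution must live there) but is engineered so that $\rhohat_{i,j}(\underline\pi_i+\varepsilon)$ equals a common constant $K$ for all $(i,j)$, making flow balance automatic at $\underline\vpi+\varepsilon\vone$. It then shows, via strict diagonal dominance of the Jacobian of the flow-balance residuals with respect to $\vpi$, that the flow-balanced set is a one-dimensional $C^1$ manifold, and finally uses compactness of the intersection of this manifold with $\{\Shat\ge m\}$ to locate a point with $\Shat=m$. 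The key idea you are missing is this modification trick, which converts the hard global problem (find any flow-balanced $\vpi$) into a local one (extend a known flow-balanced point along a curve).
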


We provide the proof of this lemma in Appendix~\ref{appx:proof_lem_uniqueness}.
Briefly, fixing the economy and the OD-based adjustments, the flow balance constraints (only $n-1$ of them are ``independent'', as discussed in Section~\ref{sec:preliminaries}) together with the condition on total supply \ref{cond:mc_total_supply} impose $n$ non-linear constraints on the origin-based multipliers $\vpi \in \R^n$.
By exploiting the structure of %
the flow-balance constraints, we show that the set of multipliers that induce balanced driver flows form a one-dimensional manifold, within which exactly one set of multipliers lead to an outcome that dispatch exactly $m$ units of drivers.

By Definition~\ref{def:MC} and Lemma~\ref{lem:unique}, we know that the origin-based multipliers $\vpi$ and %
the corresponding market-clearing outcome $(\vx, \vy, \vp)$ are fully determined by the OD-based adjustments $\vphi$. %
We refer to %
$(\vx, \vy, \vp, \vphi, \vpi)$ as the \emph{origin-based market-clearing outcome} associated with OD-based adjustment $\vphi \in \R^n$.

\medskip

One challenge for analyzing the social welfare achieved by an orogin-based market-clearing outcome using primal and dual formulations \eqref{eq:primal}--\eqref{eq:dual} is that the dual variables satisfying constraint \eqref{eq:dual_price_structure} does not provide sufficient flexibility to represent the trip prices \eqref{eq:price_structure_obmc} (which now have more than $n$ degrees of freedom).
As a result, we use the following equivalent primal formulation with additional constraints. %
\begin{maxi!}
{\vx, \vy \in \R^{n^2}, \, \vz \in \R^n}{\sum_{i, j \in \gL} \biggl( \int_0^{x_{i, j}} v_{i, j}(s) \dl s - c_{i, j} y_{i, j} \biggr) \label{eq:primal_obj_origin_based}}{\label{eq:primal_origin_based}}{}
\addConstraint{x_{i, j}}{\le y_{i, j}, \quad}{\forall i, j \in \gL}
\addConstraint{\sum_{j \in \gL} d_{i, j} y_{i, j}}{= z_i, \quad \label{eq:supply-local}}{\forall i \in \gL}
\addConstraint{\sum_{i \in \gL} z_i}{\le m \label{eq:supply-total}}
\addConstraint{\sum_{j \in \gL} y_{k, j}}{= \sum_{i \in \gL} y_{i, k}, \quad}{\forall k \in \gL.}
\end{maxi!}

Intuitively, replacing the supply constraint \eqref{eq:supply-constraint} with a \emph{local} supply constraint \eqref{eq:supply-local} in addition to the \emph{global} supply constraint \eqref{eq:supply-total}, the resulting dual problem below now %
includes additional variables $\pi_i$ (corresponding to  \eqref{eq:supply-local}) that can represent origin-based price multipliers:
\begin{mini!}
{\vp \in \R^{n^2}, \, \omega \in \R, \, \vpi \in \R^n, \, \vphi \in \R^n}{m \omega + \sum_{i, j \in \gL} \int_{0}^{q_{i, j}(p_{i, j})} (v_{i, j}(s) - p_{i, j}) \dl s %
\label{eq:dual_obj_origin_based}}{\label{eq:dual_origin_based}}{}
\addConstraint{p_{i, j}}{= c_{i, j} + d_{i, j} \pi_i + \phi_i - \phi_j, \quad}{\forall i, j \in \gL \label{eq:dual_price_cnst}}
\addConstraint{p_{i, j}}{\ge 0, \quad}{\forall i, j \in \gL \label{eq:dual_origin_based_price_nonneg}}
\addConstraint{\omega}{\ge \pi_i, \quad}{\forall i \in \gL \label{eq:dual_oribin_based_omega_pi}}
\addConstraint{\omega}{\ge 0.\label{eq:dual_oribin_based_omega_nonneg}}
\end{mini!}

For each location $i \in \gL$, $\pi_i$ can be interpreted as the marginal value of a unit of drivers' time for drivers who are on trip originating from location $i$. Intuitively, when the origin-based multipliers are higher for some locations than the others, or when some of the multipliers are negative, we are making inefficient use of drivers' time.
In the following result, we establish an upper bound on the welfare suboptimality of any origin-based market-clearing outcome (\ie the difference between the achieved and the highest possible social welfare), which consists of this inefficiency as well as the slack introduced by driver relocation.

\begin{restatable}{theorem}{thmOptimality} \label{thm:optimality}
  Given any %
  origin-based market-clearing outcome
  $(\vx, \vy, \vp, \vphi, \vpi)$,
  the welfare suboptimality %
  is upper bounded by
  \begin{equation} \label{eq:primal_dual_diff_at_mc}
    \sum_{i, j \in \gL} d_{i, j} y_{i, j} \Bigl( \max\Bigl\{ \max_{k \in \gL} \pi_k, \, 0 \Bigr\} - \pi_i \Bigr) + \sum_{i, j \in \gL} p_{i, j} (y_{i, j} - x_{i, j}),
  \end{equation}
  which is further upper bounded by
  \begin{equation} \label{eq:upper_bound_welfare_gap}
    m \Bigl( \max \Bigl\{\max_{k \in \gL} \pi_k, \, 0 \Bigr\} - \min_{i \in \gL} \pi_i \Bigr)
    + \sum_{i, j \in \gL} \phantomSlack_{i, j}.
  \end{equation}
\end{restatable}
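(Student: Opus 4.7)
The plan is to exploit weak duality between the alternative primal \eqref{eq:primal_origin_based} and dual \eqref{eq:dual_origin_based} formulations. The point is that $(\vx, \vy, \vz)$ with $z_i \triangleq \sum_j d_{i,j} y_{i,j}$ is primal-feasible (by feasibility of the outcome plus condition \ref{cond:mc_total_supply}, which makes $\sum_i z_i = m$), and from the market-clearing outcome I can read off a dual-feasible quadruple $(\vp, \omega, \vpi, \vphi)$. Weak duality then says the welfare suboptimality (optimal primal value minus the achieved primal objective at $(\vx,\vy,\vz)$) is at most the gap between the dual objective and the primal objective evaluated at these two points; the proof reduces to algebraically simplifying that gap.

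For the dual, the given $\vp, \vpi, \vphi$ already satisfy constraints \eqref{eq:dual_price_cnst}--\eqref{eq:dual_origin_based_price_nonneg}; the only free variable is $\omega$, and I pick $\omega = \max\{\max_{k \in \gL} \pi_k, 0\}$, which is exactly what is needed to satisfy \eqref{eq:dual_oribin_based_omega_pi}--\eqref{eq:dual_oribin_based_omega_nonneg} while also keeping $\omega$ as small as possible (so the bound is tight). Using the rider-best-response condition $x_{i,j} = q_{i,j}(p_{i,j})$ from \ref{cond:mc_rider_reloc_flow}, the dual integrand becomes $\int_0^{x_{i,j}} v_{i,j}(s)\,\mathrm{d}s - p_{i,j} x_{i,j}$, so the dual-minus-primal gap simplifies to
\[
  m\omega - \sum_{i,j \in \gL} p_{i,j} x_{i,j} + \sum_{i,j \in \gL} c_{i,j} y_{i,j}.
\]

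To reach the bound \eqref{eq:primal_dual_diff_at_mc}, I substitute the price formula $p_{i,j} = c_{i,j} + d_{i,j}\pi_i + \phi_i - \phi_j$ into $\sum_{i,j} p_{i,j} y_{i,j}$. The $c_{i,j} y_{i,j}$ piece cancels, the adjustment piece $\sum_{i,j}(\phi_i - \phi_j) y_{i,j}$ vanishes by the driver flow-balance constraint \ref{cond:feasibility_flow_balance} (standard telescoping after reindexing), and the multiplier piece collapses to $\sum_i \pi_i z_i$. Writing $\sum_{i,j} p_{i,j} x_{i,j} = \sum_{i,j} p_{i,j} y_{i,j} - \sum_{i,j} p_{i,j}(y_{i,j} - x_{i,j})$ and using $m\omega = \omega \sum_i z_i = \sum_{i,j} \omega d_{i,j} y_{i,j}$ then yields exactly
\[
  \sum_{i,j \in \gL} d_{i,j} y_{i,j} (\omega - \pi_i) + \sum_{i,j \in \gL} p_{i,j}(y_{i,j} - x_{i,j}),
\]
which is \eqref{eq:primal_dual_diff_at_mc} once $\omega$ is unfolded. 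The second, coarser bound \eqref{eq:upper_bound_welfare_gap} follows by two straightforward relaxations: each factor $\omega - \pi_i$ is nonnegative and at most $\omega - \min_i \pi_i$, and $\sum_{i,j} d_{i,j} y_{i,j} = m$ by \ref{cond:mc_total_supply}; meanwhile the relocation term satisfies $y_{i,j} - x_{i,j} = \qtilde_{i,j}(p_{i,j})$ by \ref{cond:mc_rider_reloc_flow}, so $p_{i,j}(y_{i,j}-x_{i,j}) \leq \sup_{r \ge 0} r\,\qtilde_{i,j}(r) = \phantomSlack_{i,j}$ by the definition \eqref{eq:phantom_slackness}.

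No single step is really the obstacle — the cancellation of the $\vphi$ terms via flow balance is the crux that makes the bound clean, and I would double-check the sign conventions and that the chosen $\omega$ is the tightest choice that satisfies both \eqref{eq:dual_oribin_based_omega_pi} and \eqref{eq:dual_oribin_based_omega_nonneg} simultaneously (which is why the expression $\max\{\max_k \pi_k, 0\}$ appears rather than simply $\max_k \pi_k$). The only subtle point worth flagging is that the alternative primal \eqref{eq:primal_origin_based} has the same optimal value as \eqref{eq:primal}, which is immediate on projecting out $\vz$; this ensures weak duality against \eqref{eq:dual_origin_based} really does bound the welfare suboptimality as defined against the genuine social optimum.
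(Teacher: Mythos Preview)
Your proposal is correct and follows essentially the same approach as the paper: construct the feasible primal solution $(\vx,\vy,\vz)$ with $z_i=\sum_j d_{i,j}y_{i,j}$, pick the dual variable $\omega=\max\{\max_k\pi_k,0\}$, and bound the suboptimality by the resulting duality gap. The paper organizes the gap into the four-term decomposition \eqref{eq:duality_gap} and interprets each piece as a best-response violation, whereas you reach the same two nonzero terms by substituting the price formula into $\sum_{i,j}p_{i,j}y_{i,j}$ and invoking flow balance explicitly; the content is the same.
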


Importantly, the bounds \eqref{eq:primal_dual_diff_at_mc} and \eqref{eq:upper_bound_welfare_gap} are computable using only information immediately available from a current market-clearing outcome: origin-based multipliers, trip prices, and driver and rider flows.
This is of practical relevance as it allows a platform to assess the welfare suboptimality even when the overall social welfare cannot be determined due to the absence of a demand model.

\begin{proof}%
  Given any origin-based market-clearing outcome $(\vx, \vy, \vp, \vpi, \vphi)$, let $z_i = \sum_{j \in \gL} d_{i, j} y_{i, j}$ for all $i \in \gL$, and denote $\omega = \max\left\lbrace \max_{i \in \gL} \{\pi_i\}, \, 0 \right\rbrace$.
  Intuitively, $z_i$ is the amount of drivers (at any moment of time) who are on trip from each location $i \in \gL$, and $\omega$ represents the ``best-response'' driver surplus rate (\ie when the driver has the freedom to choose any route or to not drive for the platform).
  It is straightforward to verify that $(\vx, \vy, \vz)$ and $(\vp, \omega, \vpi, \vphi)$ form feasible solutions of the primal \eqref{eq:primal_origin_based} and the dual \eqref{eq:dual_origin_based}, respectively.

  The primal objective \eqref{eq:primal_obj_origin_based} represents welfare, which is equal to the total driver and rider surplus under the market-clearing outcome.
  The dual objective \eqref{eq:dual_obj_origin_based}, on the other hand, can be interpreted as the total ``best response'' surplus of drives and riders,
  \ie every driver gets a surplus rate of $\omega$ per unit of time, and all riders who are willing to pay the price of their trips are picked-up.
  The difference %
  \begin{equation}
    \sum_{i, j \in \gL} \int_{x_{i, j}}^{q_{i, j}(p_{i, j})} (v_{i, j}(s) - p_{i, j}) \dl s
    + \sum_{i, j \in \gL} p_{i, j} (y_{i, j} - x_{i, j})
    + \sum_{i \in \gL} z_i (\omega - \pi_i)
    + \omega \Bigl(m - \sum_{i \in \gL} z_i \Bigr), \label{eq:duality_gap}
  \end{equation}
  and %
  can %
  is therefore the total violation of driver and rider best-response conditions:
  \begin{itemize}%
    \item The first term $\sum_{i, j \in \gL} \int_{x_{i, j}}^{q_{i, j}(p_{i, j})} (v_{i, j}(s) - p_{i, j}) \dl s$ represents the violation of rider best-response: $q_{i, j}(p_{i, j})$ riders who like to be picked-up for trip $(i,j)\in \gL^2$ when the price is given by $p\subij$, but only $x\subij$ of them are picked-up and the rest each looses $v_{i, j}(s) - p_{i, j}$ in comparison to their ``best-response'' outcome.
          Given \ref{cond:mc_rider_reloc_flow}, we know that this term is zero under any %
          market-clearing outcome.
    \item The second term $\sum_{i, j \in \gL} p_{i, j} (y_{i, j} - x_{i, j})$ is the violation of driver best-response due to drivers being dispatched to relocate without getting paid when prices are positive. This can be rewritten as $\sum_{i, j \in \gL} p_{i, j} \qtilde\subij(p_{i, j})$, which is bounded by $\sum_{i, j \in \gL} \phantomSlack_{i, j}$ by definition of the slack \eqref{eq:phantom_slackness}.
    \item The third term $\sum_{i \in \gL} z_i (\omega - \pi_i)$ is the violation of driver best-response due to the fact that
          at any moment in time $z_i$ drivers are \emph{on-trip} originating from location $i \in \gL$ but the the surplus rate $\pi_i$ these drivers get %
          may be lower than the highest possible surplus rate $\omega$.
    \item The last term $\omega \Bigl(m - \sum_{i \in \gL} z_i \Bigr)$ represents the violation of driver best-response due to the fact that some drivers may not be dispatched even when %
          some other drivers are getting a positive surplus of  $\omega$ per unit of time. This terms is always zero under market-clearing outcomes due to \ref{cond:mc_total_supply}.
  \end{itemize}

  The %
  welfare-suboptimality of the current market-clearing outcome (\ie the current primal objective) is bounded by the difference between the primal and dual objectives \eqref{eq:duality_gap}, since the dual objective is always weakly above the optimal primal objective.
  Combining the non-zero (\ie the 2nd and 3rd) terms of \eqref{eq:duality_gap} gives us \eqref{eq:primal_dual_diff_at_mc}, which we can see is upper bounded by \eqref{eq:upper_bound_welfare_gap}.
  \end{proof}

We now revisit the morning rush-hour example discussed in Section~\ref{sec:preliminaries}, %
and illustrate the origin-based market-clearing outcomes at different OD-based adjustments.

\begin{examplecont}{ex:running-example}

  Recall that in the morning rush hour example, many riders request trips to downtown (location~$2$) while no rider request any trip ending in the residential area (location~$1$).
  To coordinate driver relocation, %
  the platform adopts phantom demand function $\qtilde_{i, j}(r) = 24 (\max\{0, 1 - r / 5\})^4$ for all $i, j \in \{1, 2\}$, in which case relocation occurs only when the trip price falls below \$5.
  Fixing $\phi_2 = 0$, \Cref{fig:running_example_ODB} illustrates the origin-based market-clearing outcomes as we vary $\phi_1$. %

  \begin{figure}[t!]
    \centering
    \subcaptionbox{
      Multipliers.
      \label{fig:ex_surge_pi}}{\includegraphics[scale = \MmaScale]{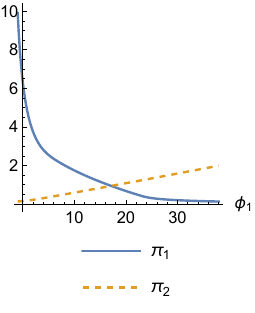}}
    \hfill
    \subcaptionbox{Trip prices. \label{fig:ex_surge_price}}{\includegraphics[scale = \MmaScale]{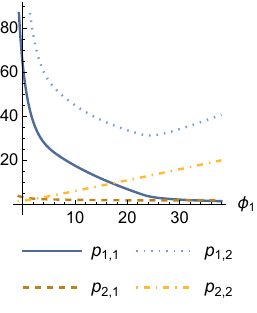}}
    \hfill
    \subcaptionbox{Rider flow rates. \label{fig:ex_surge_x}}{\includegraphics[scale = \MmaScale]{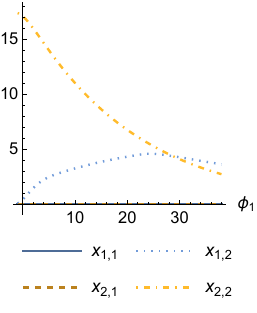}}
    \hfill
    \subcaptionbox{
      Objectives.
      \label{fig:ex_surge_welfare}}{\includegraphics[scale = \MmaScale]{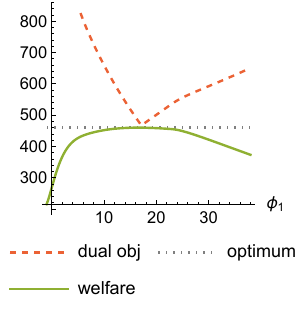}}
    \caption{
      The origin-based market clearing outcome for different $\phi_1$ in \Cref{ex:running-example} (fixing $\phi_2 = 0$). %
    }
    \label{fig:running_example_ODB}
  \end{figure}

  When $\phi_1=0$, which is the case of the na\"ive origin-based pricing, $\pi_1 \gg \pi_2$, meaning that the origin-based multiplier for location~$1$ has to be substantially higher than that at location~$2$ to due to the supply demand imbalance. %
  The resulting %
  outcome is very inefficient, since drivers spend time serving many $(2,2)$ riders with very low values, %
  while many very high value $(1,2)$ riders were not picked up.

  As $\phi_1$ increases, $p_{2,1}$ decreases relative to $p_{2,2}$. This leads to higher $x_{1,2}$ and lower $x_{2,2}$, since more drivers at location $2$ %
  relocate to location $1$ to pick-up the $(1,2)$ riders (the relocation flow $y_{2,1} = \qtilde(p_{2,1})$ is not shown in Figure~\ref{fig:running_example_ODB} but is equal to $x_{1,2}$ for small $\phi_1$),
  instead of pick-up the lower value riders at location $2$. This reduces the gap in the %
  multipliers at the two locations, and improves welfare.

  Starting from the market-clearing outcome with $\phi_1 = 0$, $\pi_1 \gg \pi_2$ indicates that the marginal value of drivers is higher at location~$1$ than that at location $2$.
  It is intuitive that the platform should increase $\phi_1$ until the %
  $\pi_1-\pi_2$ is sufficiently small, in which case welfare is approximately optimal (Theorem~\ref{thm:optimality}).

\end{examplecont}

With more than two locations, however, it is unclear what would be a good direction for updating the OD-based adjustment $\vphi$, how big the stepsizes should be, and whether the process will converge to the optimal outcome (or at all). We answer these questions in the next section.

\section{Iterative Network Pricing} \label{sec:INP}

In this section, we introduce the Iterative Network Pricing (INP) mechanism, which %
observes the market-clearing outcomes during a specific time window (\eg Wed. morning rush hour) in preceding timesteps, and updates the OD-based additive price adjustments to be adopted in this time window for the upcoming timestep.%
\footnote{%
  A platform may adopt multiple INP mechanisms in parallel to update prices for other time windows with different demand patterns, \eg the weekday evening rush hours, or the weekend bar hours.%
  The timesteps represent the frequency at which the platform updates the OD-based adjustments.
  When a timestep corresponds to a week, for example, the platform determines the OD-based adjustments for Wednesday morning rush hour each week using market-clearing outcomes from the prior weeks during the same time window.
}
We prove the main result of this paper, that when the market condition is stationary over time, %
the INP mechanism converges to an outcome that is approximately welfare-optimal. %

\medskip

We first define a general iterative pricing mechanism. Let $t = 0, 1, \dots$ be the index of timesteps. Time $t=0$ represents the status quo with $\vphi= \vzero$, \ie the platform %
clears the market by origin without any OD-based adjustments.
Time $t=1$ and $t=2$ represent the first and second timesteps (\eg weeks), respectively, for the mechanism to start updating the OD-based adjustments, and so forth.

We make the following assumptions:
\begin{enumerate}[label = (A\arabic*)]
  \item \label{assn:stationarity} The economy is stationary %
        over time, \ie %
        $(m, \vd, \vc, \vq)$ remains the same for all $t \geq 0$.
        The phantom demand $\vqtilde$ used by the platform to %
        coordiante relocation also stays the same for all $t \ge 0$.%
        \footnote{ %
          Assumption \ref{assn:stationarity} is used to establish the optimality of the INP mechanism in theory. The mechanism is well defined %
          when the market condition is not stationary (see Section~\ref{sec:nonstationary_simulations} for simulation results for Chicago morning rush hours, 2019--2020).
        }
  \item \label{assn:market-clearance} During each timestep $t \geq 0$, the platform is able to
        observe the origin-based market-clearing outcome $(\vx\supt, \vy\supt, \vp\supt, \vphi\supt, \vpi\supt)$ corresponding to OD-based adjustments $\vphi\supt$, as well as the (local) price sensitivity $(q'\subij(p \subij \supt))_{(i,j)\in \gL^2}$ at the current prices. %
\end{enumerate}

We denote the \emph{history} %
up to time $t$ as $h_t = ((\vx\suptprime, \vy\suptprime, \vp\suptprime, \vphi\suptprime, \vpi\suptprime, (q'\subij(p \subij \suptprime))_{(i,j)\in \gL^2} ))_{t' \leq t-1}$, and denote the set of all possible histories under an iterative pricing mechanism up to time $t$ as $\historySett$.

\begin{definition}[Iterative Pricing Mechanisms] \label{def:iterative_pricing}
  An iterative pricing mechanism starts with no OD-based additive adjustment $\vphi^{(0)} = \vzero$ at %
  time $t = 0$.
  At each time $t \ge 1$, the mechanism determines %
  $\vphi^{(t)}$ based on the observable market-condition $(m, \vd, \vc)$ %
  and the history of past market-clearing outcomes $\historyt$. %
  \end{definition}

Given history $\historyt$, a seemingly straightforward %
method for updating the OD-based adjustment $\vphi^{(t)}$ is to use the gradient of the dual objective.
Recall from %
Section~\ref{sec:origin_based_surge} that given %
any %
$\vphi$ and the corresponding market-clearing $\vpi$, we can construct a feasible solution to the dual problem \eqref{eq:dual_origin_based}: %
$\vp$ is determined by constraint
\eqref{eq:dual_price_cnst}, and %
$\omega = \max\{\max_{i\in \gL} \pi_i, \, \ab 0\}$ achieves the smallest dual objective once $(\vp, \vpi, \vphi)$ is fixed.
In this way, we can rewrite the dual objective \eqref{eq:dual_obj_origin_based} %
as
\begin{equation}
  \mathsf{Dual}(\vphi, \vpi) \triangleq m \max \Bigl\{ \max_{i\in \gL} \pi_i, \, 0 \Bigr\} + \sum_{i, j \in \gL} \int_{c_{i, j} + d_{i, j} \pi_i + \phi_i - \phi_j}^\infty q_{i, j}(r) \dl r. \label{eq:dual_obj_as_function_of_phi_pi}
\end{equation}
It is straightforward to verify that the gradient (or a subgradient, at non-differentiable points) of $\mathsf{Dual}$ \wrt $\vpi$ and $\vphi$ %
can be computed using the trip prices, distances, and rider %
flow at current prices, %
quantities that are immediately observable by the platform at the current market-clearing outcome.%
\footnote{This is similar in spirit to many other settings such as that of iterative combinatorial auctions~\citep{parkes2000iterative}, where the market designer can is able to identity a direction to update the dual variables that is guaranteed to improve the dual objective, despite the fact that neither the primal nor the dual objective can be evaluated.}
However, this natural idea of %
gradient descent \wrt %
$(\vphi, \vpi)$ %
does not %
solve our problem for two reasons:
\begin{itemize} %
  \item First, $\nabla_\vphi \mathsf{Dual}(\vphi, \vpi) = \vzero$ at any outcome where the rider flow is balanced, in which case the gradient does not provide a %
        direction for updating $\vphi$, even when the outcome is far from optimal. %
          \item Moreover, while the gradient \wrt the origin-based multipliers $\vpi$ %
        is non-zero, %
        $\vpi$ is the result of the market-clearing process, %
        and is not determined by the platform ahead of time.
\end{itemize}

Since $\vphi$ is the only dual variable directly controlled by the platform,
a first challenge a %
mechanism needs to address is to ``anticipate'' how the market-clearing $\vpi$ will change %
as the platform updates $\vphi$.
The following result shows that this is possible using only information immediately available to the platform at any %
market-clearing outcome.
Recall Lemma~\ref{lem:unique}, that there exists a unique market-clearing %
$\vpi \in \R^n$ given any %
$\vphi \in \R^n$.
Since shifting all $\phi_i$'s by the same constant has no impact (see %
Section~\ref{sec:preliminaries}), we fix $\phi_n = 0$ for the rest of the paper, and treat $\vphi \in \R^{n-1}$ as a vector of $n - 1$ dimensions.
In this way, we denote $\vPi: \R^{n-1} \to \R^n$ as the mapping from the OD-based adjustments to the corresponding market-clearing multipliers.

\begin{restatable}{lemma}{thmDifferentiable} \label{thm:differentiable}
  The mapping $\vPi: \R^{n-1} \to \R^n$ is continuously differentiable. The Jacobian matrix at %
  the time $t-1$ market-clearing outcome $\D\vPi(\vphi^{(t-1)}) \in \R^{n \times (n - 1)}$ can be computed using history $\historyt$.  %
\end{restatable}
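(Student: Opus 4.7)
The strategy is to apply the implicit function theorem to the system of equations defining the market-clearing outcome. Write the driver flow as $y_{i,j}(\vphi, \vpi) = q_{i,j}(p_{i,j}) + \qtilde_{i,j}(p_{i,j})$ with $p_{i,j} = c_{i,j} + d_{i,j}\pi_i + \phi_i - \phi_j$, and let $F: \R^{n-1} \times \R^n \to \R^n$ collect $n-1$ independent flow-balance constraints together with the total-supply constraint,
\[
F_i(\vphi, \vpi) = \sum_{j \in \gL}\bigl(y_{i,j} - y_{j,i}\bigr), \ i = 1,\ldots,n-1, \qquad F_n(\vphi,\vpi) = \sum_{i,j\in\gL} d_{i,j}\,y_{i,j} - m.
\]
By \Cref{lem:unique}, $F(\vphi,\vpi)=0$ has the unique solution $\vpi=\vPi(\vphi)$ for every $\vphi \in \R^{n-1}$, and $F$ is $C^1$ because $q_{i,j}$ and $\qtilde_{i,j}$ are continuously differentiable.

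Given these properties, the implicit function theorem yields a local $C^1$ inverse $\vphi\mapsto\vpi(\vphi)$ around any market-clearing pair, as soon as the square block $J := \partial F/\partial \vpi \in \R^{n\times n}$ is non-singular at that pair. Such a local inverse must coincide with the global $\vPi$ by \Cref{lem:unique}; hence $\vPi \in C^1(\R^{n-1})$ and
\[
\D\vPi(\vphi) = -J(\vphi,\vPi(\vphi))^{-1}\,\frac{\partial F}{\partial \vphi}(\vphi,\vPi(\vphi)).
\]

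The main technical obstacle is verifying the non-singularity of $J$. Writing $a_{i,j} := \bigl(q'_{i,j}(p_{i,j}) + \qtilde'_{i,j}(p_{i,j})\bigr)\,d_{i,j} < 0$, the Jacobian entries are explicit: for $i\le n-1$, $J_{i,k} = \delta_{i,k}\sum_{j} a_{i,j} - a_{k,i}$, and $J_{n,k} = \sum_j d_{k,j}\,a_{k,j}$. Enlarging to all $n$ flow-balance equations produces an $n\times n$ matrix whose rows sum to zero identically, so the $(n-1)\times n$ flow-balance block of $J$ has rank at most $n-1$; a weighted-Laplacian-style argument on the ``demand-sensitivity graph'' with edge weights $-a_{i,j}>0$ pins its rank at exactly $n-1$ and shows the right kernel is $1$-dimensional, spanned by some $\bm{\xi}^*\neq\mathbf{0}$. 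It then suffices to show that the supply row $\bigl(\sum_j d_{k,j} a_{k,j}\bigr)_k$ is not orthogonal to $\bm{\xi}^*$. I would establish this directly, exploiting that the strictly positive trip durations $d_{i,j}>0$ re-weight the flow-balance entries non-proportionally and thus break the kernel's symmetry; the computation is transparent in the $n=2$ case (where $\bm{\xi}^* = (a_{2,1}, a_{1,2})^\trp$ and the orthogonality condition reduces to a manifestly positive sum of terms), and extends to general $n$.

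Finally, computability of $\D\vPi(\vphi^{(t-1)})$ from $h_t$ is immediate from the closed-form above: every entry of $J$ and $\partial F/\partial \vphi$ depends only on the trip durations $d_{i,j}$ (part of the known economy), the local price sensitivities $q'_{i,j}(p_{i,j}^{(t-1)})$ (observable at the realized prices under assumption \ref{assn:market-clearance} and recorded in $h_t$), and $\qtilde'_{i,j}(p_{i,j}^{(t-1)})$ (the platform's own chosen phantom demand function).
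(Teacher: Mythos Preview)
Your overall strategy---implicit function theorem applied to the system $F(\vphi,\vpi)=0$, reduce to invertibility of $J=\partial F/\partial\vpi$, then read off observability of the entries---is exactly the paper's. The difference lies only in how you argue $J$ is nonsingular, and there your sketch has a gap.

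First, a small slip: from $\sum_i F_i\equiv 0$ you get that the \emph{columns} of the extended $n\times n$ flow-balance Jacobian sum to zero (i.e.\ $\vone$ is in the \emph{left} kernel), not the rows. This is harmless for the rank bound but worth correcting.

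The real issue is the non-orthogonality step. Writing $J=\begin{bmatrix}\mA & \vbeta\\ \vgamma^\trp & \lambda\end{bmatrix}$ with $\mA\in\R^{(n-1)\times(n-1)}$, your kernel vector is (up to scaling) $\bm\xi^*=(-\mA^{-1}\vbeta,\,1)$, and the inner product of the supply row with $\bm\xi^*$ is precisely the Schur complement $\lambda-\vgamma^\trp\mA^{-1}\vbeta$. So your orthogonality test and the paper's determinant factorization are literally the same computation. What makes this Schur complement nonzero is not that ``durations re-weight the flow-balance entries non-proportionally''---it is a sign argument: $\lambda>0$, $\vgamma>0$, $\vbeta<0$ entrywise, and crucially $\mA^{-1}$ has \emph{all strictly positive entries}. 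Equivalently, $\bm\xi^*$ is entrywise positive and the supply row is entrywise negative, so their inner product cannot vanish. Your $n=2$ check works for exactly this reason (all $a_{i,j}<0$ makes every term in the sum positive), but the intuition you give for general $n$ does not capture the mechanism.

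The paper obtains the positivity of $\mA^{-1}$ via a short diagonal-dominance lemma (an $M$-matrix/Neumann-series argument): $\mA$ has positive diagonal, negative off-diagonal, and is strictly column-dominant because dropping the $n$th flow-balance equation leaves a strict surplus $-d_{k,n}\qhat'_{k,n}>0$ in each column. This single observation simultaneously gives (i) invertibility of $\mA$ (hence rank $n-1$ of your flow-balance block, replacing the ``Laplacian-style'' claim), and (ii) positivity of $\mA^{-1}$ (hence the sign of $\bm\xi^*$, completing your orthogonality step). Incorporating this lemma is the cleanest way to close your proof; without it, both your rank claim and your non-orthogonality claim remain unproven for $n>2$.
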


The proof of this lemma is provided in Appendix~\ref{appx:proof_thm_differentiable}. Briefly, %
the market-clearing conditions (Definition~\ref{def:MC})  provide $n$ equality constraints on the OD based adjustments $\vphi$ and the corresponding market-clearing $\vpi$ (which implicitly define the mapping $\vPi$ from $\vphi$ to $\vpi$).
We show that the violation of these $n$ constraints (which is a $n$-dimensional function) is continuously differentiable, and that its Jacobian matrix \wrt $\vpi$ is invertable. The implicit function theorem %
then implies that $\vPi$ is continuously differentiable. It is straightforward to verify that its Jacobian matrix depends only on the trip durations $(d\subij)_{(i,j) \in \gL^2}$ and the local slope of the demand function $(q_{i, j}'(p_{i, j}))_{(i,j) \in \gL^2}$, both of which
can be observed by the platform.

\medskip

With Lemma~\ref{thm:differentiable},
we can consider the dual objective \eqref{eq:dual_obj_as_function_of_phi_pi} as a function of only the OD-based adjustments $\mathsf{Dual}(\vphi, \vPi(\vphi))$.
The function is, however, not convex in $\vphi$.
Figure~\ref{fig:ex_surge_welfare} illustrates this %
for an economy %
with two locations. Moreover, Example~\ref{exmp:dual_obj_non_convex} in Appendix~\ref{appx:additional_discussions} shows that %
quasi-convexity does not hold either.
Consequently, we need to address the challenge of finding the global optimum using only local information, and in the absence of convexity.

\subsection{The Iterative Network Pricing Mechanism} \label{sec:INP_definition}

We define the INP mechanism by providing the direction towards which it updates the OD-based adjustments, the stepsize it takes along that direction, as well as a policy for ``backtracking''.

\paragraph{The Direction}

Recall from Theorem~\ref{thm:optimality} that the inefficiency of a %
market-clearing outcome %
is bounded by the differences in the origin-based multipliers at different locations. %
We prove that at any market-clearing outcome, there exists a unique direction %
for equalizing the (linear approximations of the) origin-based multipliers $\{\Pi_i(\vphi)\}_{i \in \gL}$, and we choose this direction for updating the OD-based adjustments $\vphi$ under the INP mechanism.

Formally, let $\vonen$ be the all-one vector of dimension $n$, and denote $\vchi_{1:n-1} \in \R^{n-1}$ as the first $n-1$ entries of any vector $\vchi \in \R^n$. %
Given any %
OD-based adjustments $\vphi \in \R^{n-1}$, %
the direction for updating $\vphi$ is given by
\begin{equation}
  \vdelta = \bigl( \begin{bmatrix} -\D\vPi(\vphi) & \vonen \end{bmatrix}^{-1} \vpi \bigr)_{1:n-1}, \label{eq:INP_direction_no_t}
\end{equation}
where $\vpi = \vPi(\vphi)$ is the corresponding market-clearing multipliers, $\D\vPi(\vphi) \in \R^{n \times (n - 1)}$ is the Jacobian matrix of $\vPi$ evaluated at the market-clearing outcome, and the %
matrix $\begin{bmatrix} -\D\vPi(\vphi) & \vonen \end{bmatrix}$ is guaranteed to be full-rank (see \Cref{clm:full_rank} in Appendix~\ref{appx:the_update_direction}).
In this way, there exists %
$\xi \in \R$ \st $ \vpi + \D\vPi(\vphi) \vdelta = \xi \vone_n$.%
\footnote{To see this, observe that $-\D\vPi(\vphi) \vdelta + \xi \vonen$ can be written as $\begin{bmatrix} -\D\vPi(\vphi) & \vonen \end{bmatrix} \begin{bmatrix} \vdelta \\ \xi \end{bmatrix}$.}
Note that $\vdelta$ %
corresponds to the update direction under Newton's method for solving the following system of $n-1$ non-linear equations:
\begin{equation}
  \Pi_1(\vphi) = \Pi_2(\vphi) = \cdots = \Pi_n(\vphi). \label{eq:equalize_pi}
\end{equation}

\paragraph{The Stepsize}
It is desirable in practice to avoid volatile %
changes in trip prices and driver earning rates, since as we have discussed in Section~\ref{sec:intro}, both drivers and riders learn and adapt slowly to changes in the marketplace.%
Given any $\vphi \in \R^{n-1}$ and the corresponding update direction $\vdelta$, we choose stepsize
\begin{align}
  \alpha = \min \{ 1, \, \tau / \|\D\vPi(\vphi) \vdelta \|_\infty \},
  \label{eq:INP_stepsize_no_t}
\end{align}
where $\tau > 0$ is a constant and $\|\cdot \|_\infty$ is the $\ell^\infty$ norm of vectors. %
In this way, based on the current linear approximation of %
$\vPi$, %
the origin-based multiplier $\pi_i$ for each location $i \in \gL$ will change by at most $\tau$ under the new market-clearing outcome at $\vphi + \alpha \vdelta $. %

Moreover, define the following auxiliary function (also known as a Lyapunov function) representing the sum of the squared differences between the origin-based multipliers and their mean:
\begin{equation} \label{eq:quadratic_objective}
  f(\vphi)
  \triangleq
  \sum_{i \in \gL} \Bigl( \Pi_i(\vphi) - \frac{1}{n} \sum_{j \in \gL} \Pi_j(\vphi) \Bigr)^2.
\end{equation}
To guarantee convergence in theory, our mechanism uses $f$ for \emph{backtracking line search}, \ie repeatedly shrinks the step size by a constant factor until
a sufficient decrease of %
$f$ is observed, relative to the expected amount of decrease based on the step size and the local gradient of %
$f$. The gradient of $f$ at any market-clearing outcome can be evaluated using the origin-based multipliers and the Jacobian of $\vPi$.

\medskip

We now define the INP mechanism, which updates the OD-based adjustments in the direction for equalizing the origin-based multipliers, chooses the stepsize to target a maximum change of the multipliers,
and reduces the stepsize if necessary via backtracking line search using the Lyapunov function $f$.

\begin{definition}[Iterative Network Pricing] \label{def:INP}
  An \emph{iterative network pricing} (INP) mechanism is an iterative mechanism %
  parameterized by %
  $\tau > 0$, $\beta \in (0, 1)$, $\sigma \in (0, 1)$. %
  At each time $t > 0$, given history $\historyt$:
  \begin{itemize} %
    \item The mechanism determines $\vphi \supt = \vphi \suptmo + \alpha \supt \vdelta \supt$ according to %
          \begin{align}
            \vdelta^{(t)} & = \bigl( \begin{bmatrix} -\D\vPi(\vphi\suptmo) & \vonen \end{bmatrix}^{-1} \vpi^{(t-1)} \bigr)_{1:n-1}, \label{eq:INP_direction} \\
            \alpha^{(t)}  & = \min \{ 1, \, \tau / \|\D\vPi(\vphi\suptmo) \vdelta^{(t)} \|_\infty \}, \label{eq:INP_stepsize}
          \end{align}
          if $t = 1$ or sufficient progress in $f$ was made (\ie if $f(\vphi^{(t-1)}) < f(\vphi^{(t')}) + \sigma \nabla f(\vphi^{(t')}) ^\trp %
            \alpha^{(t-1)} \vdelta^{(t-1)}$ %
          where $t'$ %
          is the timestep prior to the most-recent update of direction according to \eqref{eq:INP_direction}).
    \item If $t > 1$ and not enough progress in $f$ was made (\ie if
          $f(\vphi^{(t-1)}) \geq f(\vphi^{(t')}) + \sigma \nabla f(\vphi^{(t')}) ^\trp \alpha^{(t-1)} \vdelta^{(t-1)}$), the mechanism  backtracks, setting
          $\vphi \supt = \vphi \suptprime + \alpha \supt \vdelta^{(t)}$ with $\vdelta^{(t)} = \vdelta^{(t-1)}$ and $\alpha \supt = \beta \alpha\suptmo$.  %
  \end{itemize}
\end{definition}

We now state the main result of the present paper, that the INP mechanism converges to an outcome that is approximately welfare-optimal.

\begin{restatable}{theorem}{thmINP} \label{thm:INP}
  Assuming \ref{assn:stationarity} and \ref{assn:market-clearance}, given any economy, the iterative network pricing mechanism converges %
  to an origin-based market-clearing outcome where all multipliers are equal, \ie $\exists \omega^\star \in \sR$ \st $\lim_{t \to \infty} \pi_i\supt = \omega^\star$ for all $i \in \gL$. %
  The welfare %
  suboptimality of the limit outcome %
  is upper bounded by
  \begin{equation}
    m \max\{0, -\omega^\star\} + \sum_{i, j \in \gL} e_{i, j}. \label{eq:suboptimality_thm_INP}
  \end{equation}
  \end{restatable}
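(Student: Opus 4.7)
The plan is to treat $f(\vphi)$ from \eqref{eq:quadratic_objective} as a Lyapunov function for the INP dynamics, show via a backtracking line-search argument that $f(\vphi^{(t)}) \to 0$, and then invoke Theorem~\ref{thm:optimality} at the limit outcome.

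First I would establish that $\vdelta^{(t)}$ is a strong descent direction for $f$. Writing $f(\vphi) = \vPi(\vphi)^\trp \mP \vPi(\vphi)$ with $\mP = \mI - \tfrac{1}{n}\vonen\vonen^\trp$ the orthogonal projection onto $\vonen^\perp$, the chain rule gives $\nabla f(\vphi) = 2\,\D\vPi(\vphi)^\trp \mP \vPi(\vphi)$. By construction, $\vdelta^{(t)}$ satisfies $\D\vPi(\vphi^{(t-1)}) \vdelta^{(t)} = \xi\vonen - \vpi^{(t-1)}$ for some $\xi \in \R$. Using $\mP\vonen = \vzero$, $\mP^\trp = \mP$, and $\vPi(\vphi^{(t-1)}) = \vpi^{(t-1)}$,
\begin{equation*}
\nabla f(\vphi^{(t-1)})^\trp \vdelta^{(t)} \;=\; 2\,(\vpi^{(t-1)})^\trp \mP \,\D\vPi(\vphi^{(t-1)})\vdelta^{(t)} \;=\; -2\, f(\vphi^{(t-1)}),
\end{equation*}
so $\vdelta^{(t)}$ is a strict descent direction for $f$ whenever $f(\vphi^{(t-1)}) > 0$. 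This is the formal version of the correspondence between INP and Newton's method for the nonlinear system $\Pi_1(\vphi) = \cdots = \Pi_n(\vphi)$ advertised in Section~\ref{sec:intro_results}.

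Next I would invoke standard backtracking line-search theory to conclude that $f(\vphi^{(t)}) \to 0$. Since $\vPi$ is continuously differentiable (Lemma~\ref{thm:differentiable}), so is $f$, and $\nabla f^\trp \vdelta = -2f < 0$ whenever $f > 0$. A standard argument then shows that the Armijo condition in Definition~\ref{def:INP} is satisfied after finitely many backtracking steps at each outer iteration, so the sequence $\{f(\vphi^{(t)})\}$ is monotonically non-increasing and bounded below by $0$, hence convergent. Suppose for contradiction its limit is $f^\infty > 0$; passing to a convergent subsequence of iterates $(\vphi^{(t_k)}, \vpi^{(t_k)}, \vdelta^{(t_k)}) \to (\vphi^\ast, \vpi^\ast, \vdelta^\ast)$, continuity of $\D\vPi$, $\vdelta$, and $\nabla f$ yields a uniform lower bound on the accepted Armijo stepsize and hence a uniform per-iteration decrease of at least $2\sigma\alpha^\ast f^\infty > 0$, contradicting convergence. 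Once $f(\vphi^{(t)}) \to 0$ we have $\max_i \pi_i^{(t)} - \min_i \pi_i^{(t)} \to 0$; passing to another subsequence identifies a common limit $\omega^\star$, and $\pi_i^{(t)} \to \omega^\star$ for every $i \in \gL$. The welfare bound then follows from Theorem~\ref{thm:optimality} applied to the limit: with all $\pi_i = \omega^\star$, the first term of \eqref{eq:primal_dual_diff_at_mc} equals $\sum_{i,j} d_{i,j} y_{i,j}(\max\{\omega^\star, 0\} - \omega^\star) = m\max\{0, -\omega^\star\}$ using $\sum_{i,j} d_{i,j} y_{i,j} = m$ from \ref{cond:mc_total_supply}, and the second term is bounded by $\sum_{i,j} e_{i,j}$ by \eqref{eq:phantom_slackness}, yielding \eqref{eq:suboptimality_thm_INP}.

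The main obstacle is the compactness/boundedness required to extract convergent subsequences in the Armijo contradiction argument. Since $f$ is not convex (Example~\ref{exmp:dual_obj_non_convex}) and the Newton direction $\vdelta^{(t)}$ has no a priori norm bound, it is not immediate that the iterates $\vphi^{(t)}$ remain in a region on which $\nabla f$ is uniformly Lipschitz and $\vdelta^{(t)}$ is uniformly bounded. The step-size cap $\alpha^{(t)} \leq \tau/\|\D\vPi(\vphi^{(t-1)}) \vdelta^{(t)}\|_\infty$ from \eqref{eq:INP_stepsize} is the principal tool: it bounds the linearized change of $\vpi$ per step by $\tau$, which, together with the full-rank property of $[-\D\vPi(\vphi)\ \vonen]$ from Claim~\ref{clm:full_rank} and the implicit-function analysis underlying Lemma~\ref{thm:differentiable}, should suffice to confine the iterates to a sublevel set of $f$ and keep $\vdelta^{(t)}$ bounded on that set.
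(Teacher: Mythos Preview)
Your overall strategy---treat $f$ as a Lyapunov function, exploit the identity $\nabla f(\vphi)^\trp \vdelta = -2f(\vphi)$, use backtracking to drive $f \to 0$, then invoke Theorem~\ref{thm:optimality}---is exactly the paper's. The descent-direction calculation is correct and identical to~\eqref{eq:Newton_decreases_f}.

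There are, however, two genuine gaps that the paper fills with separate structural results and that your sketch does not supply. First, the compactness you need comes from \emph{coercivity} of $f$ (Proposition~\ref{prop:coercive}): every sublevel set of $f$ is bounded. This is a nontrivial fact about the map $\vPi$ and is proved directly from the market-clearing equations; your proposed route through the step-size cap does not give it. The cap in~\eqref{eq:INP_stepsize} bounds the \emph{linearized} change in $\vpi$, not $\|\vdelta^{(t)}\|$ or $\|\vphi^{(t)}\|$, and the pointwise full-rank property of $[-\D\vPi(\vphi)\ \vonen]$ gives no uniform bound on the inverse as $\vphi$ ranges over $\R^{n-1}$. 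Saying the cap ``confines the iterates to a sublevel set of $f$'' presupposes exactly the coercivity you need. Second, even with $f(\vphi^{(t)}) \to 0$ and bounded iterates, you only get that every limit point has $f = 0$; to conclude $\vphi^{(t)}$ (and hence $\vpi^{(t)}$) converges to a \emph{single} point you need uniqueness of the zero of $f$. The paper proves this via an auxiliary convex program and its dual (Lemma~\ref{lem:alternative_obj}). Your subsequence argument extracts an $\omega^\star$ along one subsequence but does not rule out different limits along others.

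There is also a smaller but real issue: the full sequence $\{f(\vphi^{(t)})\}$ is \emph{not} monotone. During backtracking iterations $\vphi^{(t)}$ is a trial point from $\vphi^{(t')}$, and $f(\vphi^{(t)})$ may exceed $f(\vphi^{(t-1)})$. Only the subsequence $f(\vphi^{(t_k)})$ at accepted steps decreases. The paper handles this by proving convergence first on the accepted subsequence (Proposition~\ref{prop:coarse_convergence}) and then showing $\vdelta^{(t_k)} \to \vzero$ (Proposition~\ref{prop:delta->0}), which controls the trial iterates in between.
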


Intuitively, we prove that under assumptions \ref{assn:stationarity} and \ref{assn:market-clearance}, the INP mechanism can be interpreted as an iterative algorithm that is guaranteed to converge to the unique $\vphi^\star \in \sR^{n-1}$ for which \eqref{eq:equalize_pi} holds.
When the platform is under-supplied,  %
$\omega^\star \geq 0$ %
and the only inefficiency in the limit outcome is due to drivers' relocating when prices are strictly positive (see \eqref{eq:phantom_slackness}).
When the platform is over-supplied, it is possible for %
$\omega^\star$ to be strictly negative %
if all $m$ units of drivers continue to show-up despite the fact that they all make a negative surplus.
If we assume that drivers %
gradually leave the platform if the surplus %
is consistently negative, the suboptimality will again converge to $\sum_{i, j \in \gL} e_{i, j}$.

\begin{proof}
  First, observe that fixing any economy \economy, the phantom demand function $\vqtilde$ satisfying conditions \ref{cond:phantom_fn} and \ref{cond:phantom_zero_price}, and assuming \ref{assn:stationarity} and \ref{assn:market-clearance},
  the INP mechanism %
  can be interpreted as an iterative algorithm for solving
  \eqref{eq:equalize_pi}, which we formally describe in Algorithm~\ref{alg:Newton_Armijo}.

  \begin{algorithm}[htb]
    \caption{Iterative Network Pricing %
    } \label{alg:Newton_Armijo}
    \begin{algorithmic}[1]
      \Require $\tau > 0$, $\beta \in (0, 1)$, $\sigma \in (0, 1)$. %
      \State $\vphi^{(0)} \gets \vzero$ %
      \State $\vpi^{(0)} \gets \vPi(\vphi^{(0)})$ \label{line:market_clears_0} %
      \For{$t = 1, 2, \dots$}
      \If{$t = 1$ \textbf{or} (%
      $f(\vphi^{(t-1)}) < f(\vphi^{(t')}) + \sigma \nabla f(\vphi^{(t')}) ^\trp \alpha^{(t-1)} \vdelta^{(t-1)}$)}
      \State $\vdelta^{(t)} \gets \bigl( \begin{bmatrix} -\D\vPi(\vphi^{(t-1)}) & \vone \end{bmatrix}^{-1} \vpi^{(t-1)} \bigr)_{1:n-1}$
      \State $\alpha^{(t)} \gets \min \{ 1, \, \tau / \|\D \vPi(\vphi^{(t-1)}) \vdelta^{(t)} \|_\infty \}$ %
      \State $t' \gets t - 1$ \label{line:great_step} %
      \Else
      \State $\vdelta^{(t)} \gets \vdelta^{(t-1)}$
      \State $\alpha^{(t)} \gets \beta \alpha^{(t-1)}$
      \EndIf
      \State $\vphi^{(t)} \gets \vphi^{(t')} + \alpha^{(t)} \vdelta^{(t)}$
      \State $\vpi^{(t)} \gets \vPi(\vphi^{(t)})$  \label{line:market_clears}
      \EndFor
    \end{algorithmic}
  \end{algorithm}

  Lines~\ref{line:market_clears_0} and~\ref{line:market_clears} in Algorithm~\ref{alg:Newton_Armijo}  %
  correspond to the platform observing the %
  market-clearing outcome given the OD-based adjustments for each timestep. The observed information is then used to evaluate $\nabla f$ and $\D \vPi$.
  Line~\ref{line:great_step} marks the timesteps at which a sufficiently small stepsize for the current update direction is found, and the market-clearing outcomes at these timesteps are used to compute the new direction %
  in the following timestep.

  By constructing another auxiliary optimization problem and its dual problem, we prove in Lemma~\ref{lem:alternative_obj} in Appendix~\ref{appx:proof_lem_alternative_obj} that there exists a unique set of OD-based adjustments $\vphi^\star \in \R^{n-1}$ solving \eqref{eq:equalize_pi}, at which $\Pi_i(\vphi^\star) = \omega^\star$ for all $i \in \gL$. %
  Once we establish that the %
  algorithm converges to $\vphi^\star$, %
  the welfare suboptimality is a straightforward application of Theorem~\ref{thm:optimality}. What is left to prove is that $\vphi\supt$ converges to $\vphi^\star$ as $t \to \infty$.

  First, observe that $\vphi^\star \in \R^{n-1}$ is also the unique minimizer of the Lyapunov function $f$, since $f$ is non-negative and achieves the global minimum of $0$ only when \eqref{eq:equalize_pi} is satisfied.
  $f$ is not necessary convex (or quasi-convex; see Figure~\ref{fig:nonconvex_f}), but we prove in Appendix~\ref{appx:proof_lem_alternative_obj} that %
  $f$ does not have any saddle points or spurious local minima (Proposition~\ref{prop:critical_point_is_global_min}), and %
  that $f$ is coercive, meaning that every sublevel set of $f$ is bounded (Proposition~\ref{prop:coercive}).

  Let $t_0, t_1, t_2, \dots$ be the sequence of values that $t'$ in Line~\ref{line:great_step} of \Cref{alg:Newton_Armijo} ever takes, \ie the timesteps at which a %
  new update direction is determined using the current market-clearing outcome.
  We prove that the %
  sequence of update directions is \emph{gradient related} (see Proposition 1.2.1 of \cite{bertsekas2016nonlinear}), which in conjunction with Propositions~\ref{prop:critical_point_is_global_min} and~\ref{prop:coercive} imply that %
  $\vphi^{(t_k)}$ converges to $\vphi^\star$ as $k \to \infty$ (\Cref{prop:coarse_convergence}).
  Finally, we show in \Cref{prop:delta->0} that $\vdelta^{(t_k)}$ converges to $\vzero$ as $t \to \infty$, %
  which implies that $\vdelta^{(t)}$ also converges to $\vzero$. %
  For any $k \geq 0$ and any $t \in (t_k, t_{k+1}]$, we know $\|\vphi^{(t)} - \vphi^\star\| = \|\vphi^{(t_k)} + \alpha^{(t)} \vdelta^{(t)} - \vphi^\star\| \le \|\vphi^{(t_k)} - \vphi^\star\| + \|\alpha^{(t)} \vdelta^{(t)}\| \le \|\vphi^{(t_k)} - \vphi^\star\| + \|\vdelta^{(t)}\|$. Therefore, $\limsup_{t \to \infty} \|\vphi^{(t)} - \vphi^\star\| \le \limsup_{k \to \infty} \|\vphi^{(t_k)} - \vphi^\star\| + \limsup_{t \to \infty} \|\vdelta^{(t)}\| = 0$. This completes the proof of this theorem.
\end{proof}

\medskip

We now revisit our running example %
and illustrate the iterative updates under the INP mechanism.

\begin{examplecont}{ex:running-example}
  Recall %
  from Section~\ref{sec:opt_welfare_and_CE} that ixing $\phi_2 = 0$, the optimal OD-based adjustment for the morning rush hour example has $\phi_1 = 18.33$ (\ie driver supply at location~1 (the residential area) is more valuable than that at location~2 (downtown).
  Starting from from $\phi_1^{(0)} = 0$ with $\pi_1 ^{(0)} \gg \pi_2^{(0)}$, \Cref{fig:running_example_iterations} illustrates the iterative updates %
  under the INP mechanism, when the maximum change of each $\pi_i$ between timesteps is limited to $\tau = 1$. %

  We can see that %
  the difference between the origin-based multipliers reduces over time (Figure~\ref{fig:ex_Newton_pi}), and that both the primal and the dual objectives converge to the optimal welfare (Figure~\ref{fig:ex_Newton_welfare}).
  Moreover, the prices of the $(1,1)$ and $(2,2)$ trips converge to the same level %
  since the two trips take the same amount of time (Figure~\ref{fig:ex_Newton_prices}).
  We also observe superlinear convergence of $f$ when $\phi_1$ is sufficiently close to the optimum.

  \renewcommand{\MmaScale}{0.89}

  \begin{figure}[t!]
    \centering
    \subcaptionbox{OD-based adjustment.%
      \label{fig:ex_Newton_phi}}{\includegraphics[scale = \MmaScale]{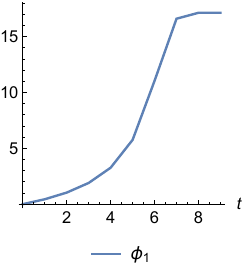}}
    \hfill
    \subcaptionbox{Multipliers. \label{fig:ex_Newton_pi}}{\includegraphics[scale = \MmaScale]{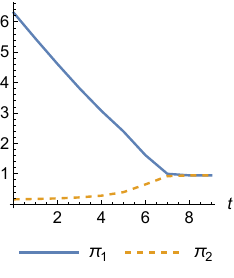}}
    \hfill
    \subcaptionbox{Primal and dual objs. \label{fig:ex_Newton_welfare}}{\includegraphics[scale = \MmaScale]{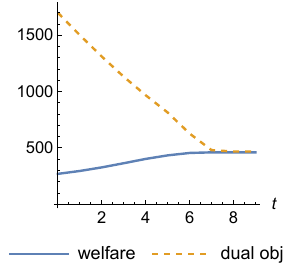}}
    \hfill
    \subcaptionbox{Objective \eqref{eq:quadratic_objective}. \label{fig:ex_Newton_f}}{\includegraphics[scale = \MmaScale]{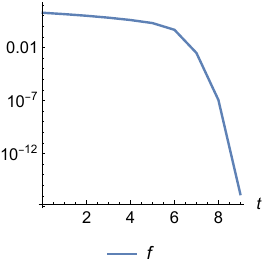}}

    \vspace{0.2em}

    \subcaptionbox{Trip prices. \label{fig:ex_Newton_prices}}{\includegraphics[scale = \MmaScale]{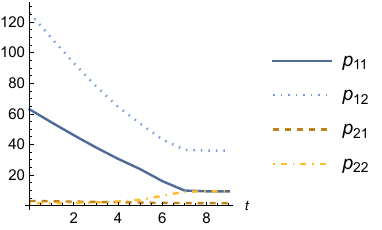}}
    \hspace{2em}
    \subcaptionbox{Rider flow. \label{fig:ex_Newton_x}}{\includegraphics[scale = \MmaScale]{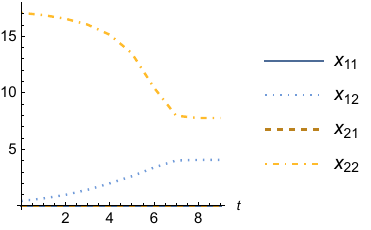}}

    \caption{Outcomes under the INP mechanism over iterations, for the economy in \Cref{ex:running-example}. %
    }
    \label{fig:running_example_iterations}
  \end{figure}
  \end{examplecont}

\subsection{Discussions} \label{sec:algorithmic_discussions}

\paragraph{Alternative Update Directions}

Under the INP mechanism, the OD-based adjustments $\vphi$ are iteratively updated towards the direction \eqref{eq:INP_direction_no_t} for equalizing the origin-based multipliers, which corresponds to the update direction under Newton's method for solving the system of %
equations \eqref{eq:equalize_pi}.
Note that this is different from Newton's method for optimization, which will determine a different update
direction %
and requires additional higher-order information (\eg the second order derivatives of the demand functions).

Since the Lyapunov function $f$ is coercive (\Cref{prop:coercive}) and has a single stationary point corresponding to the unique global optimum (\Cref{prop:critical_point_is_global_min}), gradient descent \wrt $f$ (\ie taking $-\nabla f$ as the update direction) in conjunction with backtracking line search %
has the same convergence guarantees.
This method, however, converges slowly (see simulation results in Appendix~\ref{appx:GD_simulations}) and does not require less information in comparison to the INP mechanism since the Jacobian %
$\D\vPi$ is needed for evaluating %
$\nabla f$.

A simple mechanism %
that requires very little information is to directly take %
last iteration's market-clearing multipliers $\vpi\suptmo$ as the update direction.
In Appendix~\ref{appx:pi_as_direction}, we %
show that %
simply setting $\vphi\supt = \vphi \suptmo + \alpha \vpi \suptmo$ (where $\alpha$ is some small constant)
performs very well in simulation.  %
Intuitively, moving $\vphi$ in the direction of $\vpi$ may reduce $f$ because increasing $\phi_i$ for some location $i \in \gL$ with a high multiplier $\pi_i$ will likely lead to lower prices for trips ending at location $i$ for the next iteration, thereby increasing driver supply for this location and decreasing $\pi_i$. However, $\vpi$ is not necessarily a descent direction of $f$, and we %

  are not yet able to prove or disprove the convergence of this simple mechanism.

\paragraph{Non-Stationary Market Conditions}

Backtracking line search \wrt $f$ is integrated in the design of the INP mechanism in order to guarantee convergence in theory. In simulations, for appropriately chosen
$\tau$ and $\sigma$, our mechanism converges without ever backtracking for a stationary setting (see Section~\ref{sec:stationary_simulation}). %
The practical implementation of INP should not include the backtracking mechanism since the market conditions fluctuate week-over-week %
(see Section~\ref{sec:nonstationary_simulations} for simulation results under a non-stationary setting).

By design, the INP mechanism does not include a termination criteria. In theory, with stationary market conditions, we may terminate the algorithm once the origin-based multipliers are sufficiently close.
In practice, however, the INP mechanism should be ``always on'' even after substantially improving the outcomes in order to track potential shifts in the overall market conditions. We illustrate this in simulations in the next section.

\section{Simulation Results} \label{sec:simulations}

In this section, we compare the iterative pricing
mechanism with various benchmarks via simulations,
for scenarios representing the morning rush hours in Chicago.
We fit the market conditions using trip-level data made public by the City of Chicago,%
\footnote{\url{https://data.cityofchicago.org/Transportation/Transportation-Network-Providers-Trips-2018-2022-/m6dm-c72p}, accessed May 16, 2023. The dataset contains trips from transportation network companies including both Uber and Lyft, and provides for each trip information including the starting and ending timestamps (rounded to the nearest 15 minutes), price (rounded to the nearest \$2.5), the duration (seconds), distance (miles), and the pickup and dropoff locations. %
}
and take the 77 community areas in Chicago as the set of locations (\ie $n = |\gL| = 77$).%
\footnote{\url{https://en.wikipedia.org/wiki/Community_areas_in_Chicago} (accessed May 16, 2023) provides a map of the community areas broken down by region.
  The pick-up (or drop-off) location of a trip is not provided for trips starting (or ending) outside the city boundary.
  The locations may also be omitted due to privacy considerations.
  Our simulations use only trips with both the pick-up and drop-off community areas, which consist of 86\% of all trips provided in the dataset.
  We do not expect any qualitative change in the simulation results if all trips in the greater Chicago area are taken into consideration.
}
Detailed descriptions of %
the simulation setup, the overall market dynamics in Chicago, %
and additional simulation results are provided in Appendix~\ref{appx:additiona_simulations}.

\subsection{Stationary Market Condition} \label{sec:stationary_simulation}

We first illustrate the iterative updates under the INP mechanism when the market condition is stationary and corresponds to 7--8 a.m.\@ on Wednesdays during the first 9 weeks of 2020.
For each %
OD pair $(i, j) \in \gL^2$, we set the distance $d_{i, j}$ to be the average duration of all recorded trips from %
$i$ to %
$j$ in terms of \emph{hours},
and assume that %
the trip cost is \$20/hour, \ie $c_{i, j} = 20 d_{i, j}$.%
\footnote{This is roughly aligned with a proposal from Uber in 2019 (see \url{https://p2a.co/H9gttWA}, accessed September 14, 2020) for ensuring drivers are paid an average of \$21 per hour while on trip. The earnings per hour online could be %
  slightly lower, depending on the average utilization level.}
For rider demand, we assume riders' values are exponentially distributed with mean \$60/hour for all OD pairs, and we determine the total %
demand for each OD to match the observed average trip volume given the observed average prices.
For total driver supply $m$, we %
compute the minimum amount of drivers that can fulfill the observed %
trips $(\xobs_{i, j})_{(i, j) \in \gL^2}$ and maintain flow balance.
See \Cref{appx:stationary_setting} for a more detailed description of this simulation setup.

\paragraph{Optimal OD-Based Adjustments}

Given the economy constructed above, we first %
illustrate the optimal OD-based adjustments %
(Figure~\ref{fig:chicago_optimal_phi_edited}) and the origin-based multipliers when the platform naively clears the market without any OD-based adjustments (Figure~\ref{fig:chicago_origin_based_surge_edited}). %
Observe that the naive origin-based multipliers resemble the %
the flow imbalance as shown in Figure~\ref{fig:imbalance_heatmap}, and vary substantially in space--- %
drivers can make over \$120/hour from trips originating from residential neighborhoolds in the North Side, whereas the multiplier for downtown is close to $-15$ dollars/hour, meaning that drivers are paid only around \$5/hour since the trip cost is \$20/hour.
In contrast, the optimal OD-based adjustments $\vphi^\ast$ as shown in Figure~\ref{fig:chicago_optimal_phi_edited}
has a substantially smaller spread, %
and is also appropriately smooth in space, reflecting the fact that the marginal values for divers in neighboring locations do not differ substantially.

\begin{figure}[t]
  \centering
  \subcaptionbox{Optimal OD-based additive adjustments $\vphi^\ast$.%
    \label{fig:chicago_optimal_phi_edited}}[0.48 \textwidth]{\includegraphics[width = 0.471 \textwidth]{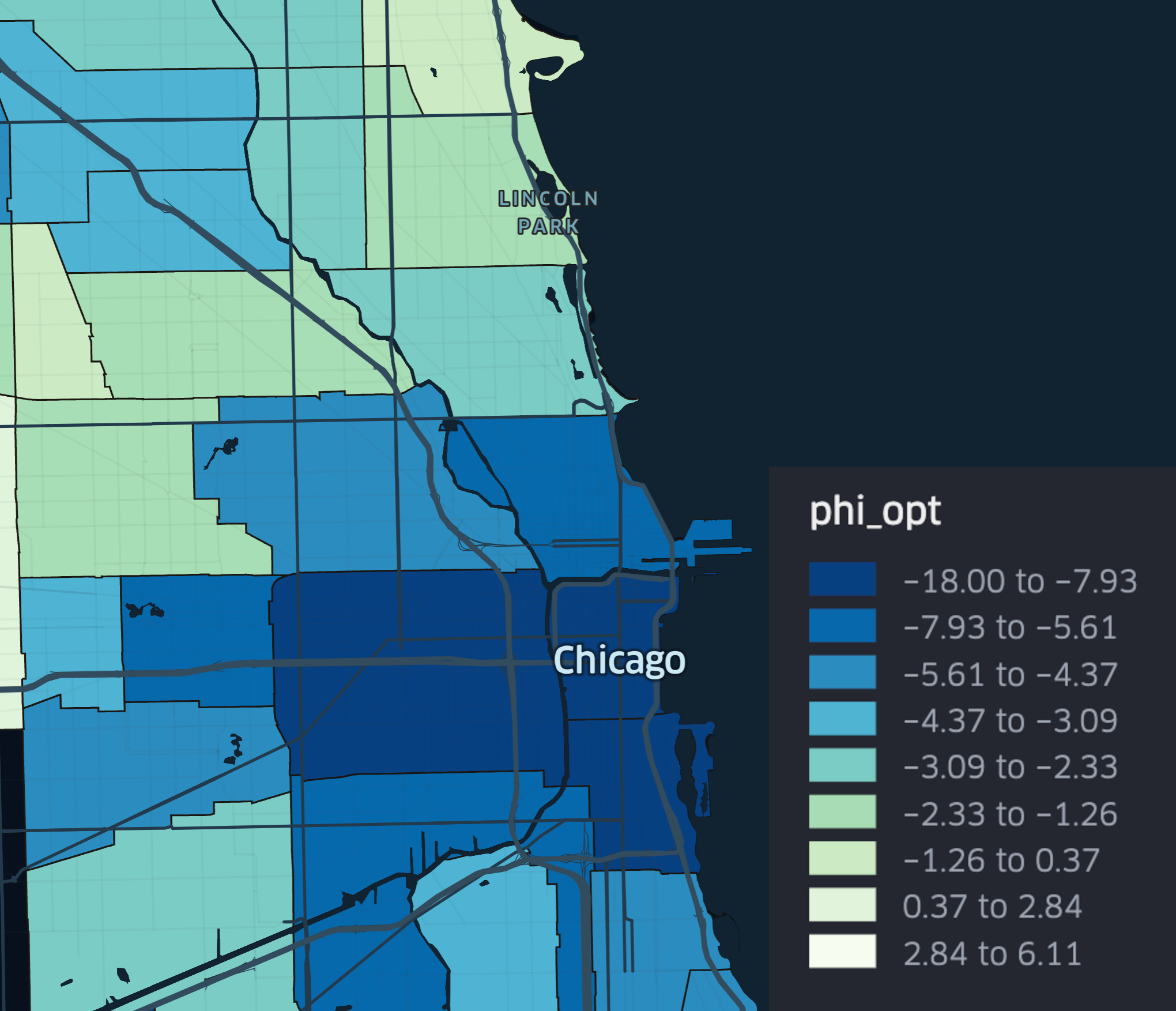}}
  \hfill
  \subcaptionbox{Origin-based multipliers $\vpi$ corresponding to $\vphi = \vzero$.%
    \label{fig:chicago_origin_based_surge_edited}}[0.48 \textwidth]{\includegraphics[width = 0.471 \textwidth]{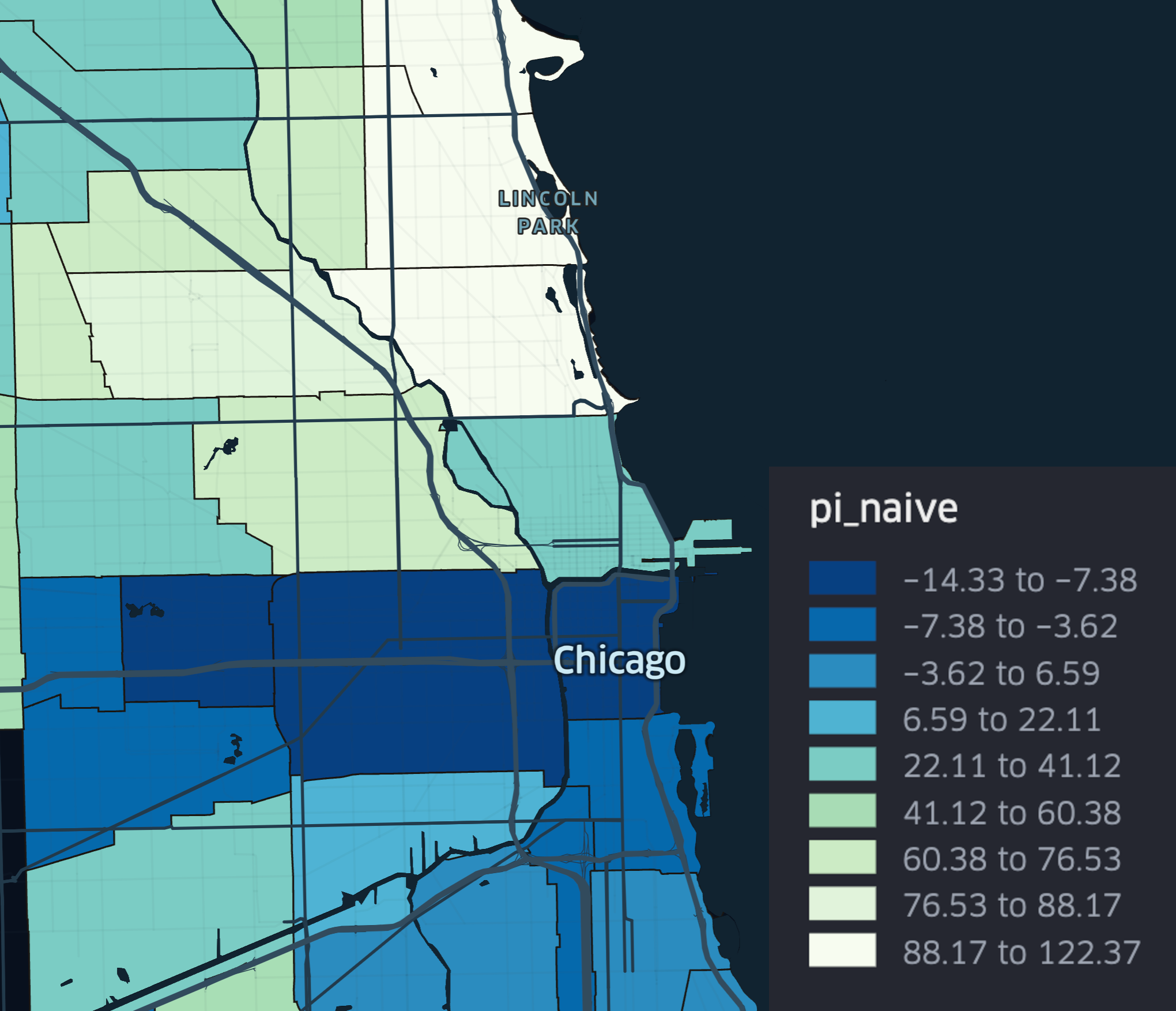}}
  \caption{%
    Optimal OD-based adjustments and naive origin-based multipliers, Chicago morning rush hours.%
  }
  \label{fig:static_heatmap}
\end{figure}

\paragraph{Iterative Updates Under INP}

In Figure~\ref{fig:static_pi_phi}, we present the OD-based additive adjustments $\vphi\supt$ and the corresponding market-clearing multipliers $\vpi\supt$ over time under the %
INP mechanism. %
Each curve in the figures represents a location (\ie a community area), and we highlight the residential area (Lake View) and the downtown area (the Loop), %
the destinations of the two trips that we discussed in Figure~\ref{fig:chicago} in Section~\ref{sec:intro}.
The shade of the remaining curves correspond to the volume of trips originating from each location.

\begin{figure}[t]
  \centering
  \subcaptionbox{The additive adjustments $\vphi^{(t)}$.\label{fig:Newton_phi_2}}{\includegraphics[scale = \PyplotScale]{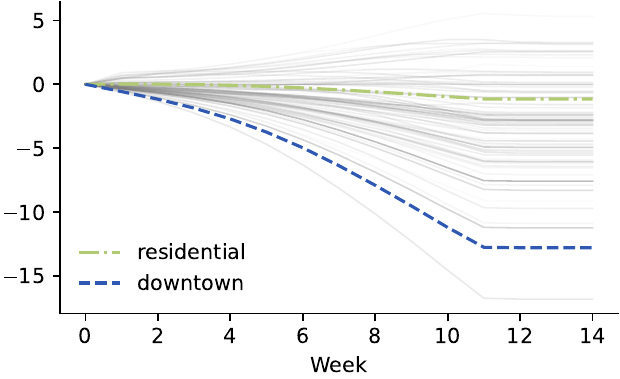}}
  \hfill
  \subcaptionbox{The multipliers $\vpi^{(t)}$.\label{fig:Newton_pi_2}}{\includegraphics[scale = \PyplotScale]{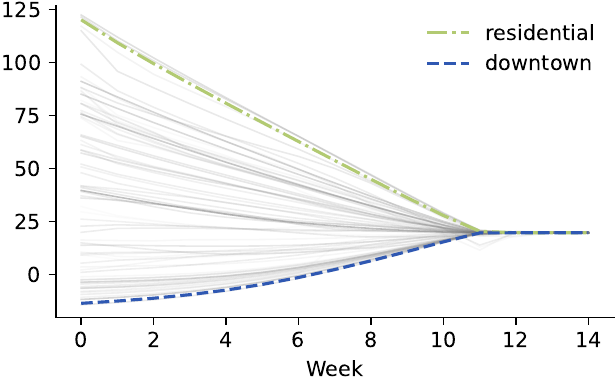}}
  \caption{$\vphi^{(t)}$ and $\vpi^{(t)}$ over iterations $t$, with the residential area and downtown highlighted.}
  \label{fig:static_pi_phi}
\end{figure}

We can see that the OD-based adjustments start from $\vphi^{(0)} = \vzero$ and %
smoothly ``spread out'' over time.
On the other hand, the origin-based multipliers %
start from $\vpi^{(0)}$ as shown in Figure~\ref{fig:chicago_origin_based_surge_edited}, and each converges to \$\NewtonMultiplier/hour after 13 iterations.
With $\tau = 10$, the mechanism aims to limit the week-over-week change of $\pi_i$ for each location $i\in \gL$ within \$10/hour.
This is based on the linear approximation of the mapping from $\vphi$ to $\vpi$, and we do observe in Figure~\ref{fig:Newton_pi_2} larger changes for some locations at certain times.%

\medskip

\begin{figure}[t]
  \centering
  \subcaptionbox{Trip prices $\vp^{(t)}$.}{\includegraphics[scale = \PyplotScale]{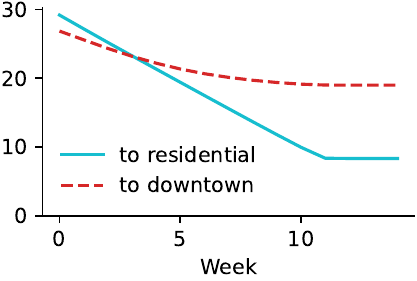}}
  \hspace{4em}
  \subcaptionbox{Rider flow rate $\vx^{(t)}$.}{\includegraphics[scale = \PyplotScale]{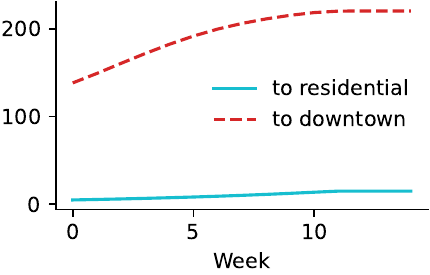}}
  \caption{Trip prices and rider flow rates for the two trips as shown in Figure~\ref{fig:chicago}, ending in the residential area and in downtown, respectively.\label{fig:static_p_x}}
\end{figure}

Figure~\ref{fig:static_p_x} plots the trip prices and %
rider flow rates for the two trips discussed in Figure~\ref{fig:chicago} in Section~\ref{sec:intro}.
We can see that the prices for both trips decrease over time, and as a result, more riders are picked up for both trips.
Initially, both prices are very high because the origin of the two trips is %
very under-supplied during morning rush hours, and therefore %
has a very high %
multiplier when market is naively cleared by origin. %
Proper OD-based price adjustments increase the supply to the location, thereby reducing the multiplier and the prices of the two trips.
Moreover, %
the price for the trip ending in the residential area drops faster than that to downtown.
This is because the OD-based adjustment is higher for the residential area in comparison, and this difference increases over time (see Figure~\ref{fig:Newton_phi_2}). %

\medskip

\begin{figure}[t]
  \centering
  \subcaptionbox{Social welfare (\$/hour).\label{fig:Newton_welfare}}{\includegraphics[scale = \PyplotScale]{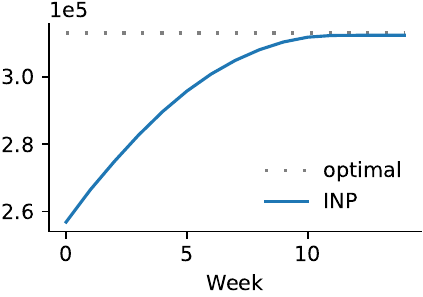}}
  \hfill
  \subcaptionbox{Dual objective.\label{fig:Newton_dual}}{\includegraphics[scale = \PyplotScale]{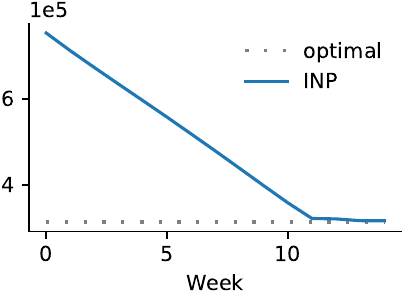}}
  \hfill
  \subcaptionbox{Function $f$.\label{fig:Newton_log_f}}{\includegraphics[scale = \PyplotScale]{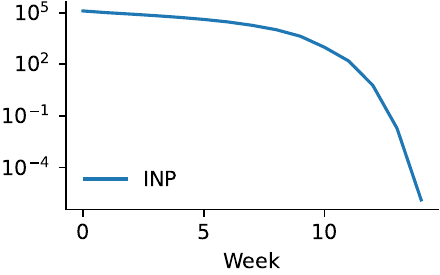}}
  \caption{Social welfare, dual objective, and function $f$ over iterations $t$.}
  \label{fig:static_obj}
\end{figure}

In \Cref{fig:static_obj}, we plot the social welfare (\ie the primal objective \eqref{eq:primal_obj_origin_based}), the dual objective \eqref{eq:dual_obj_origin_based}, and the Lyapunov function $f$ (defined in \eqref{eq:quadratic_objective}) %
corresponding to the market-clearing outcome at each time $t$.
In this setting, the optimal social welfare is \$\num{\OptimalWelfare} per hour.
The initial market clearing outcome without OD-based adjustments achieves a welfare of \$\num{\NaiveWelfare} per hour, which is \NaiveWelfareRatio{} of optimum.
Within 13 iterations, the welfare increases to \$\num{\NewtonWelfare} per hour (\NewtonWelfareRatio{} of optimum), and the gap of \NewtonWelfareRatioGap{} is a result of drivers relocating without a rider when the trip prices are positive.

The Lyapunov function $f$ defined in \eqref{eq:quadratic_objective} achieves a value of \num{\NewtonLyapunov} at time $t = \Iterations$, and we observe superlinear local convergence in Figure~\ref{fig:Newton_log_f}.
With $\tau = 10$, $\beta = 1/2$ and $\sigma = 10^{-3}$, sufficient progress in $f$ was made at every timestep and the mechanism never has to backtrack.

  In Appendix~\ref{appx:undampened_Newton}, we illustrate the iterative updates when the mechanism does not limit week-over-week changes in multipliers (\ie %
  when $\tau = +\infty$). We can see that %
  in addition to limiting the volatility of prices for riders and earnings rates for drivers (which are desirable in practice), limiting the stepsize also leads to faster convergence, even though in theory this is not necessary for convergence due to backtracking using $f$.

\subsection{Non-Stationary Market Condition} \label{sec:nonstationary_simulations}

\noindent{}In this section, we %
demonstrate the robustness of the INP mechanism under non-stationary market conditions representing the Chicago morning rush hours from the beginning of 2019 through the end of 2020. %
\Cref{fig:dynamic_volume_imbalance} presents the total trip volume and the overall flow imbalance for 7--8 a.m.\@ on Wednesdays over the two years.%
\footnote{
  The \emph{overall imbalance} %
  is defined as the average of the absolute %
  value of the flow imbalance (defined in Section~\ref{sec:intro}) of all locations, %
  \ie $\frac{1}{n} \sum_{i \in \gL} \bigl| \sum_{j \in \gL} \xobs_{i, j} - \sum_{j \in \gL} \xobs_{j, i} \bigr| \big/ \bigl(\sum_{j \in \gL} \xobs_{i, j} + \sum_{j \in \gL} \xobs_{j, i}\bigr)$, where $\xobs_{i, j}$ is the observed trip volume from $i$ to $j$.
}
We can observe the seasonality (lower trip volume during the summer, weeks 20--40), week-over-week fluctuations and the substantial shift in the overall market dynamics due to the COVID-19 pandemic.
The shutdown %
starts at week $t = 60$,
after which the total trip volume %
drops from around 16 thousand to around 2 thousand per hour.
The overall imbalance decreases from 0.3 to 0.2, %
since the %
demand from residential areas %
drops substantially %
as %
people start to work from home.

\begin{figure}[t]
  \centering
  \subcaptionbox{Trip volume.}{\includegraphics[scale = \PyplotScale]{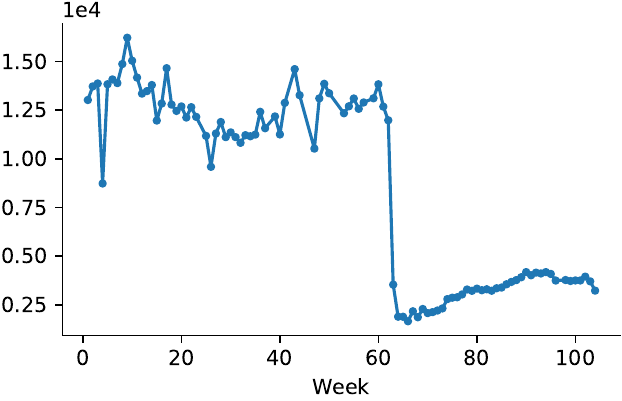}}
  \hfill
  \subcaptionbox{Overall flow imbalance.\label{fig:dynamic_imbalance}}{\includegraphics[scale = \PyplotScale]{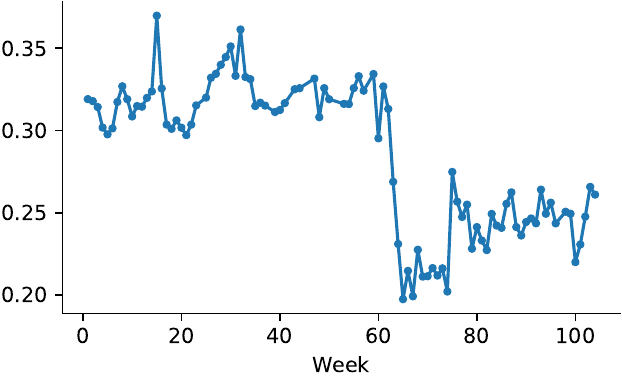}}
  \caption{Trip volume and overall flow imbalance during 7--8 a.m.\@ on Wednesdays, 2019-2020. %
  }
  \label{fig:dynamic_volume_imbalance}
\end{figure}

\Cref{appx:nonstationary_setting} provides a detailed description of the construction of the economies for each week.
Briefly, there main differences (in comparison to the scenario in Section~\ref{sec:stationary_simulation}) are (i) we use the observed trip duration and prices of each week (instead of the average over all weeks) to fit the rider demand model, and (ii) we take the minimum amount of drivers needed to fulfill the realized trip flow of each week as the driver supply for that week.%
\footnote{With (i), we effectively attribute all %
  week-over-week variability in the data to the changes of underlying market conditions. This is  likely an overestimation, since even when demand and supply are fully stationary, the realized trip flows at different weeks will be different due to stochasticity.
  (ii) effectively assumes that the driver supply level responds very fast to changes in market conditions. Our mechanism does not explicit model drivers' decisions on whether to driver for a platform. With an over 80\% drop in trip volume, however, it is not meaningful to assume a fixed level of total supply. In practice, the total driver supply equilibrates in a few weeks after a substantial change in driver earnings level~\citep{hall2017labor}.}
Moreover, we excluded %
a total of \EventDays{} ``event days'', during which the McCormick Place (Chicago's main convention center) held large events %
that draw substantial rider demand into the neighborhood immediately south of the Chicago Loop during the morning rush hours.%
\footnote{Large events are typically arranged months ahead of time, thus in practice, the platform can maintain a separate set of OD-based adjustments %
  corresponding to the scenario
  where substantially more riders demand
  for trips ending in south of loop. %
}

\paragraph{Simulation Results}

In Figure~\ref{fig:dynamic_phi_pi}, we present the OD-based additive adjustments $\vphi\supt$ and the corresponding market-clearing multipliers $\vpi\supt$ under the %
INP mechanism with $\tau = 5$, without backtracking line search.
\Cref{fig:dynamic_welfare} %
compares the social welfare achieved by INP mechanism, the na\"ive origin-based surge (\ie the market clearing outcome in the absence of OD-based adjustments), and the welfare-optimal outcome for each week.
Note that the optimal outcome represents the highest possible welfare \emph{in hindsight} (\ie using the optimal OD-based adjustments for each particular week) and cannot be achieved in practice.
Overall, the INP mechanism %
consistently outperforms the na\"ive origin-based pricing, achieving over 95\% of the optimal welfare for most of the weeks.

\begin{figure}[t]
  \centering
  \subcaptionbox{The OD-based additive adjustments $\vphi^{(t)}$. \label{fig:dynamic_Newton_phi}}{\includegraphics[scale = \PyplotScale]{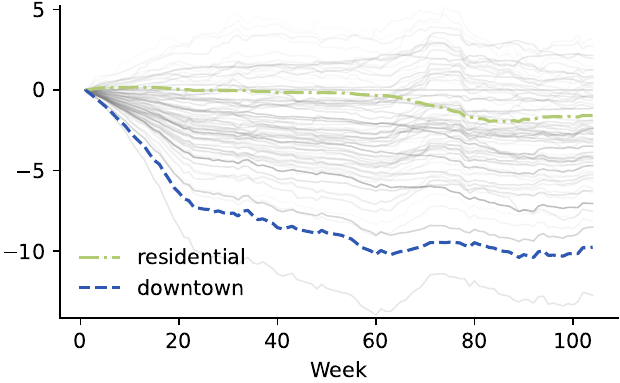}}
  \hfill
  \subcaptionbox{The multipliers $\vpi^{(t)}$.}{\includegraphics[scale = \PyplotScale]{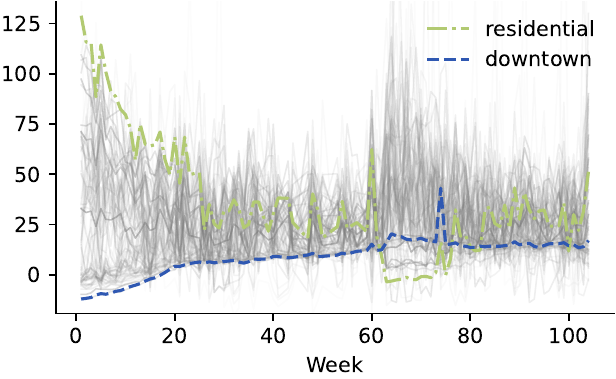}}
  \caption{$\vphi^{(t)}$ and $\vpi^{(t)}$ over weeks under the INP mechanism.}
  \label{fig:dynamic_phi_pi}
\end{figure}

\begin{figure}[t]
  \centering
  \subcaptionbox{Social welfare.}{\includegraphics[scale = \PyplotScale]{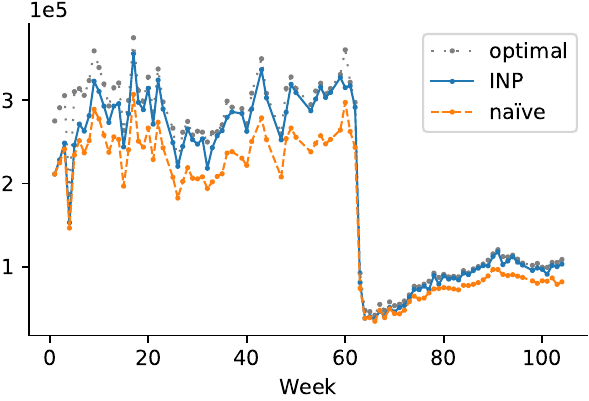}}
  \hfill
  \subcaptionbox{The welfare ratio.}{\includegraphics[scale = \PyplotScale]{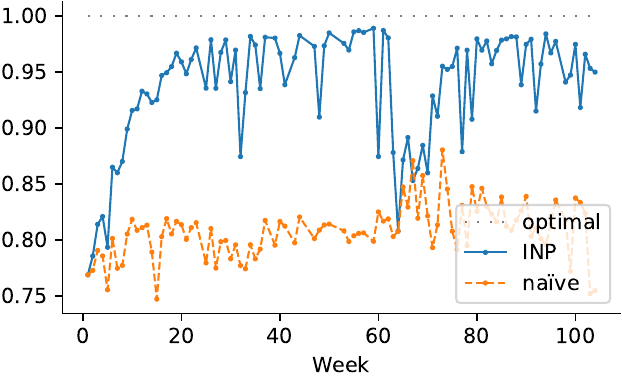}}
  \caption{Social welfare and the welfare ratio (relative to the optimal welfare of each week)} over weeks.
  \label{fig:dynamic_welfare}
\end{figure}

During the first 25 weeks, %
the welfare ratio achieved by the INP mechanism increases from 80\% to over 95\%, as the OD-based adjustments approaches the levels of the pre-COVID ``steady state''.
The difference between the OD-based adjustments of the residential area and downtown smoothly increases to around \$10, and the difference between the multipliers %
decreases from over \$100/hour to close to 0.
The na\"ive origin-based pricing without %
OD-based adjustments %
achieves around 80--85\% of the optimum welfare.
There %
are three weeks around week~$40$ where the INP mechanism achieves around 90\% of the optimal welfare. These are days during which the %
south of loop location (where convention center resides) observes larger than normal trip inflows, but did not reach the threshold to be excluded from the simulations. %

After the COVID shutdown, we observe an abrupt drop of the multiplier $\pi_i$ and a gradual decrease of the OD-based adjustment $\phi_i$ for the residential area. This is a result of the reduced imbalance for the residential area (\ie the area becoming less under-supplied), as shown in Figure~\ref{fig:imbalance_over_week}.
Despite of the substantial change in the overall market conditions after the COVID shutdown,
the INP mechanism maintains a social welfare higher than 90\% of the optimum, and recovers to over 95\% within %
a month's time. %
The performance of the na\"ive %
origin-based pricing increases %
slightly after the shutdown. This improvement %
is a result of an overall less imbalanced demand pattern (see \Cref{fig:dynamic_imbalance}), and diminishes as the economy returns to normal in 2021-2022.

\section{Conclusion} \label{sec:conclusion}

We introduce %
the \emph{iterative network pricing} (INP) mechanism, addressing a main challenge in the practical implementation of optimal origin-destination (OD) based prices, that the model for rider demand is hard to estimate.
Our mechanism, assuming real-time market clearance by origin through dynamic ``surge'' pricing, updates OD-based additive adjustments week-over-week, using only the data immediately available from previous weeks' market outcomes.
By mapping the mechanism to an %
iterative algorithm that gradually %
reduces the differences %
in driver surplus rates at different locations,
we prove that the INP mechanism converges to an %
approximately optimal outcome when the market condition is stationary. 
Using data from Chicago’s morning rush hours from early 2019 to late 2020, we demonstrate through simulations the robustness of our mechanisms, even amidst substantial fluctuations in market conditions.

\bibliographystyle{plainnat}
\bibliography{ref}

\newpage

\appendix

\begin{center}
\LARGE \textbf{Appendices}
\end{center}

\bigskip

\noindent We provide in Appendix~\ref{appx:proofs} the proofs omitted from the body of the paper.
Appendix~\ref{appx:additional_discussions} provides additional discussions and examples.
We include in Appendix~\ref{appx:additiona_simulations} detailed description of the data from
the City of Chicago, the market dynamics, the simulation setup as well as additional simulation results.

\section{Proofs} \label{appx:proofs}

We provide in this appendix proofs that are omitted from the body of the paper.

\localtableofcontents

\subsection{A Cycle Formulation} \label{appx:cycle_formulation}

Before proceeding with the proofs, we first
provide an alternative formulation of the welfare-optimization problem and its dual, which %
we use to prove Lemma~\ref{lem:welfare_theorem}. %
\begin{maxi!}
{\vx, \vy \in \R^{n^2}, \, \vw \in \R^{|\gC|}}{\sum_{i, j \in \gL} \biggl( \int_0^{x_{i, j}} v_{i, j}(s) \dl s - c_{i, j} y_{i, j} \biggr) \label{eq:primal_cycle_obj}}{\label{eq:primal_cycle}}{}
\addConstraint{x_{i, j}}{\le y_{i, j}, \quad \label{eq:primal_cycle_demand}}{\forall i, j \in \gL}
\addConstraint{\sum_{i, j \in \gL} d_{i, j} y_{i, j}}{\le m \label{eq:primal_cycle_total_supply}}
\addConstraint{\sum_{\kappa \in \gC} \one{(i, j) \in \kappa} w_\kappa}{= y_{i, j}, \quad \label{eq:primal_cycle_decomposition}}{\forall i, j \in \gL}
\addConstraint{w_\kappa}{\ge 0, \quad \label{eq:primal_cycle_w>=0}}{\forall \kappa \in \gC.}
\end{maxi!}

Compared to the flow formulation \eqref{eq:primal}, this \emph{cycle} formulation introduces additional decision variables $\vw \in \R^{|\gC|}$ representing the amount of driver flow on each directed cycle.
It is straightforward to see that the two formulations are equivalent.
The objective function \eqref{eq:primal_cycle_obj} and the constraints on demand and supply \eqref{eq:primal_cycle_demand}--\eqref{eq:primal_cycle_total_supply} are the same as those in the flow formulation \eqref{eq:primal}.
Constraints \eqref{eq:primal_cycle_decomposition}--\eqref{eq:primal_cycle_w>=0} are equivalent to constraint \eqref{eq:flow-balance} because, by the flow decomposition theorem, the driver flow is balanced if and only if it can be decomposed into cycles with non-negative weights, as formally stated in \eqref{eq:balance=cycles}.

\begin{claim} \label{claim:flow-cycle_primal_equivalence}
  The primal problem of the cycle formulation \eqref{eq:primal_cycle} is equivalent to the primal problem of the flow formulation \eqref{eq:primal} in the sense that for any \emph{feasible} solution to one problem, there exists a feasible solution to the other problem with the same objective value.
\end{claim}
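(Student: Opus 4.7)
The plan is to prove the two directions of the equivalence by invoking the flow decomposition characterization \eqref{eq:balance=cycles} already stated in the paper. Since the objective \eqref{eq:primal_cycle_obj} and the constraints \eqref{eq:primal_cycle_demand}--\eqref{eq:primal_cycle_total_supply} of the cycle formulation depend only on $(\vx, \vy)$ and coincide syntactically with the objective \eqref{eq:primal-obj} and constraints \eqref{eq:x<=y}--\eqref{eq:supply-constraint} of the flow formulation, the only thing that needs to be reconciled is the flow-balance constraint \eqref{eq:flow-balance} versus the cycle-decomposition constraints \eqref{eq:primal_cycle_decomposition}--\eqref{eq:primal_cycle_w>=0}.

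First I would handle the forward direction. Given any feasible $(\vx, \vy)$ to the flow formulation \eqref{eq:primal}, the flow-balance constraint \eqref{eq:flow-balance} is exactly the left-hand side of the biconditional in \eqref{eq:balance=cycles}. Hence there exists $\vw \in \R_{\geq 0}^{|\gC|}$ such that $\sum_{\kappa \in \gC} \one{(i, j) \in \kappa} w_\kappa = y_{i, j}$ for every $(i,j) \in \gL^2$, which is precisely \eqref{eq:primal_cycle_decomposition}--\eqref{eq:primal_cycle_w>=0}. The triple $(\vx, \vy, \vw)$ is then feasible for \eqref{eq:primal_cycle}, and since $\vw$ does not appear in the objective, the objective value is unchanged.

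Next I would handle the reverse direction. Given any feasible $(\vx, \vy, \vw)$ to the cycle formulation \eqref{eq:primal_cycle}, the existence of a non-negative cycle decomposition of $\vy$ satisfying \eqref{eq:primal_cycle_decomposition} is exactly the right-hand side of \eqref{eq:balance=cycles}, which implies that $\vy$ is balanced, i.e.\ satisfies \eqref{eq:flow-balance}. The remaining constraints \eqref{eq:primal_cycle_demand}--\eqref{eq:primal_cycle_total_supply} coincide with \eqref{eq:x<=y}--\eqref{eq:supply-constraint}, so $(\vx, \vy)$ is feasible for \eqref{eq:primal} with the same objective value.

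There is no real obstacle here beyond correctly citing \eqref{eq:balance=cycles}; the argument is essentially a direct application of the flow decomposition theorem in both directions, combined with the observation that the objective is $\vw$-independent. A brief closing remark would note that this equivalence immediately implies the two formulations share the same optimal value and that optimal solutions can be translated between them without loss.
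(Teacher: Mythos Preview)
Your proposal is correct and matches the paper's own justification essentially verbatim: the paper also observes that the objective and constraints \eqref{eq:primal_cycle_demand}--\eqref{eq:primal_cycle_total_supply} coincide with \eqref{eq:primal-obj}--\eqref{eq:supply-constraint}, and that \eqref{eq:primal_cycle_decomposition}--\eqref{eq:primal_cycle_w>=0} are equivalent to \eqref{eq:flow-balance} by the flow decomposition theorem \eqref{eq:balance=cycles}. There is nothing to add.
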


Let $p\subij$, $\omega$ and $\eta\subij$ be the dual variables corresponding to constraints \eqref{eq:primal_cycle_demand}--\eqref{eq:primal_cycle_decomposition}, respectively.
The dual problem can be written as:
\begin{mini!}
{\vp \in \R^{n^2}, \, \omega \in \R, \, \veta \in \R^{n^2}}{m \omega + \sum_{i, j \in \gL} \int_{0}^{q_{i, j}(p_{i, j})} (v_{i, j}(s) - p_{i, j}) \dl s}{\label{eq:dual_cycle}}{}
\addConstraint{p_{i, j}}{= c_{i, j} + d_{i, j} \omega - \eta_{i, j}, \quad \label{eq:p=c+dpi-eta}}{\forall i, j \in \gL}
\addConstraint{\sum_{(i, j) \in \kappa} \eta_{i, j}}{\ge 0, \quad \label{eq:sum_eta>=0}}{\forall \kappa \in \gC}
\addConstraint{p_{i, j}}{\ge 0, \quad \label{eq:dual_cycle_p>=0}}{\forall i, j \in \gL}
\addConstraint{\omega}{\ge 0. \label{eq:dual_cycle_omega>=0}}
\end{mini!}

Strong duality holds between the primal problem \eqref{eq:primal_cycle} and the dual problem \eqref{eq:dual_cycle} because the primal objective \eqref{eq:primal_cycle_obj} is concave and Slater's condition can be verified by the following solution.
Let $w_\kappa = \varepsilon > 0$ for all $\kappa \in \gC$ (strictly satisfying \eqref{eq:primal_cycle_w>=0}).
Let $y_{i, j} = \sum_{\kappa \in \gC} \one{(i, j) \in \kappa} w_\kappa$ for all $(i, j) \in \gL^2$ (satisfying \eqref{eq:primal_cycle_decomposition}).
Let $x_{i, j} = y_{i, j} / 2$ for all $(i, j) \in \gL^2$ (strictly satisfying \eqref{eq:primal_cycle_demand}).
Finally, let $\varepsilon$ be sufficiently small such that $\sum_{i, j \in \gL} d_{i, j} y_{i, j} \le m$ (strictly satisfying \eqref{eq:primal_cycle_total_supply}).

\paragraph{Difference Between Primal and Dual Objectives}

Given any feasible primal solution $(\vx, \vy, \vw)$ and any feasible dual solution $(\vp, \omega, \veta)$, the difference between the primal and dual objectives can be written as:
\begin{equation} \label{eq:diff_primal_dual}
  \sum_{i, j \in \gL} \int_{x_{i, j}}^{q_{i, j}(p_{i, j})} (v_{i, j}(s) - p_{i, j}) \dl s %
  + \sum_{\kappa \in \gC} w_\kappa \sum_{(i, j) \in \kappa} \eta_{i, j} %
  + \sum_{i, j \in \gL} p_{i, j} (y_{i, j} - x_{i, j})
  + \omega \Bigl( m - \sum_{i, j \in \gL} d_{i, j} y_{i, j} \Bigr).
\end{equation}
The first term of \eqref{eq:diff_primal_dual} is weakly positive because for every trip $(i, j) \in \gL^2$, one of the following three cases must hold: %
\begin{itemize}
  \item When $x_{i, j} = q_{i, j}(p_{i, j})$, the integral is zero.
  \item When $x_{i, j} < q_{i, j}(p_{i, j})$, the fact that $v\subij(\cdot)$ is strictly decreasing implies that for all $s \in [x_{i, j}, \ab q_{i, j}(p_{i, j}))$, $v_{i, j}(s) > p_{i, j}$. The integral is therefore strictly positive.
  \item If $x_{i, j} > q_{i, j}(p_{i, j})$, $v_{i, j}(s) < p_{i, j}$ for all $s \in (q_{i, j}(p_{i, j}), x_{i, j}]$ holds again due to the strict monotonicity of $v\subij(\cdot)$. The integral is therefore also strictly positive.
\end{itemize}
The %
remaining three terms are also always weakly positive due to the primal and dual constraints. As a result, \eqref{eq:diff_primal_dual} is zero (meaning that both the primal and the dual solutions are optimal) if and only if each of the four terms is zero.

\medskip

We now show an equivalence between the dual problems of the two formulations. Note that the equivalence between the two primal formulations and the two dual formulations imply that strong duality also holds between \eqref{eq:primal} and \eqref{eq:dual}.

\begin{claim} \label{claim:flow-cycle_dual_equivalence}
  The dual problem of the cycle formulation \eqref{eq:dual_cycle} is equivalent to the dual problem of the flow formulation \eqref{eq:dual} in the sense that for any \emph{optimal} solution to one problem, there exists an optimal solution to the other problem with the same objective value.
  Moreover, variables $\vp$ and $\omega$ are the same in the two dual problems, and the correspondence between $\veta$ and $\vphi$ is given by $\eta_{i, j} = \phi_j - \phi_i$ for all $(i, j) \in \gL^2$.
\end{claim}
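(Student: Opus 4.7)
The plan is to combine strong duality of both primal-dual pairs with the primal equivalence (Claim~\ref{claim:flow-cycle_primal_equivalence}) to conclude that the two dual problems share a common optimal value, and then to exhibit an explicit mapping from feasible flow dual solutions to feasible cycle dual solutions that preserves the objective. Note that the objectives of \eqref{eq:dual} and \eqref{eq:dual_cycle} are the identical expression $m\omega + \sum_{(i,j)\in\gL^2} \int_0^{q_{i,j}(p_{i,j})}(v_{i,j}(s)-p_{i,j})\dl s$, depending only on $(\vp,\omega)$, so once we have the correspondence of feasible regions the conclusion about optimal values is immediate.

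For strong duality, the cycle primal-dual pair has already been handled via Slater's condition in the paragraph preceding the claim. An analogous Slater point works for the flow primal-dual pair: setting $y_{i,j} = \varepsilon$ and $x_{i,j} = \varepsilon/2$ for all $(i,j)\in\gL^2$ with $\varepsilon > 0$ sufficiently small yields a strictly feasible point---the flow-balance constraint \eqref{eq:flow-balance} holds since $\sum_{j} y_{i,j} = n\varepsilon = \sum_j y_{j,i}$, while the demand constraint \eqref{eq:x<=y} and total-supply constraint \eqref{eq:supply-constraint} hold strictly. Together with Claim~\ref{claim:flow-cycle_primal_equivalence}, both dual problems have optimal value equal to the common primal value $W^\ast$. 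To build the correspondence, given any $(\vp,\omega,\vphi)$ feasible for \eqref{eq:dual}, define $\eta_{i,j} := \phi_j - \phi_i$; then \eqref{eq:p=c+dpi-eta} rewrites exactly as \eqref{eq:dual_price_structure}, and for any cycle $\kappa = ((i_1,i_2),\dots,(i_\ell,i_1))$,
\begin{equation*}
\sum_{(i,j)\in\kappa} \eta_{i,j} = \sum_{k=1}^{\ell} \bigl(\phi_{i_{k+1}} - \phi_{i_k}\bigr) = 0
\end{equation*}
(with indices mod $\ell$), so \eqref{eq:sum_eta>=0} is satisfied with equality. Non-negativity of $\vp$ and $\omega$ is inherited, and the objective value is preserved.

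Combining: any optimal flow dual solution $(\vp^\ast,\omega^\ast,\vphi^\ast)$ (with objective $W^\ast$) maps to a cycle dual feasible solution with objective $W^\ast$, which is therefore cycle dual optimal; conversely any optimal cycle dual solution has objective $W^\ast$ equal to that of any optimal flow dual solution, the latter guaranteed to exist by strong duality. The main subtlety worth flagging is that the reverse map (cycle dual $\to$ flow dual) is \emph{not} pointwise constructive via the same formula: complementary slackness only enforces $\sum_{(i,j)\in\kappa}\eta^\ast_{i,j} = 0$ on cycles $\kappa$ with $w^\ast_\kappa > 0$ in a corresponding optimal cycle primal, so an optimal cycle dual may carry strictly positive cycle sums and admit no potential $\vphi$ realizing $\eta^\ast_{i,j} = \phi_j - \phi_i$. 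This is not an obstacle to the statement of the claim, however, which only asserts existence of \emph{some} optimal solution on the other side with matching objective value---a direct consequence of strong duality.
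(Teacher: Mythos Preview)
Your argument correctly handles the flow-to-cycle direction and the equality of optimal values, but it misreads the ``Moreover'' clause and therefore leaves a genuine gap in the cycle-to-flow direction. The claim is not merely asserting that \emph{some} optimal flow dual solution exists with the same objective value; it is asserting that given any optimal cycle dual $(\vp^\ast,\omega^\ast,\veta^\ast)$, one can produce an optimal flow dual solution with \emph{the same} $\vp^\ast,\omega^\ast$ and with $\eta^\ast_{i,j}=\phi_j-\phi_i$. This stronger statement is exactly what the proof of Lemma~\ref{lem:welfare_theorem} needs in order to conclude that CE prices have the structure $p_{i,j}=c_{i,j}+d_{i,j}\omega+\phi_i-\phi_j$.

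You explicitly flag that an optimal cycle dual might have strictly positive cycle sums on some cycles and hence admit no potential $\vphi$ with $\eta^\ast_{i,j}=\phi_j-\phi_i$; but under the paper's standing assumption that each $q_{i,j}$ is strictly decreasing (hence strictly positive) on $\R_{\ge 0}$, this cannot happen. The paper's proof exploits this: complementary slackness gives $x^\ast_{i,j}=q_{i,j}(p^\ast_{i,j})>0$ for every $(i,j)$, so $y^\ast_{i,j}>0$ for every edge, and every edge lies on some cycle $\kappa$ with $w^\ast_\kappa>0$; on that cycle complementary slackness forces $\sum_{(i',j')\in\kappa}\eta^\ast_{i',j'}=0$. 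Defining $\phi_i$ as the shortest-path distance from node $n$ using edge weights $\eta^\ast_{i,j}$ (well-defined since there are no negative cycles), the shortest-path inequality gives $\eta^\ast_{i,j}\ge\phi_j-\phi_i$ everywhere, and the zero cycle sum together with telescoping forces equality on every edge. This is the missing constructive step your proposal needs.
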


\begin{proof} %
  Let $(\vp, \omega, \vphi)$ be an optimal solution to the flow dual problem \eqref{eq:dual}.
  We can construct a solution $(\vp, \omega, \veta)$ to the cycle dual problem \eqref{eq:dual_cycle} with $\vp$ and $\omega$ being the same as in the flow dual problem \eqref{eq:dual}, and with $\eta_{i, j} = \phi_j - \phi_i$ for all $(i, j) \in \gL^2$.
  We can verify that constraint \eqref{eq:p=c+dpi-eta} holds by construction; constraint \eqref{eq:sum_eta>=0} holds because $\sum_{(i, j) \in \kappa} \eta_{i, j} = \sum_{(i, j) \in \kappa} (\phi_j - \phi_i) = 0$ for all $\kappa \in \gC$; constraints \eqref{eq:dual_cycle_p>=0} and \eqref{eq:dual_cycle_omega>=0} hold since $(\vp, \omega, \vphi)$ is feasible solution to \eqref{eq:dual}.
  The objective functions %
  are the same as in the flow dual problem \eqref{eq:dual},
  therefore, $(\vp, \omega, \veta)$ is a feasible solution to the cycle dual problem \eqref{eq:dual_cycle} with the same (hence optimal) objective value.

  For the other direction, let $(\vp, \omega, \veta)$ be an optimal solution to the cycle dual problem \eqref{eq:dual_cycle}.
  We %
  construct a solution $(\vp, \omega, \vphi)$ to the flow dual problem \eqref{eq:dual} with $\vp$ and $\omega$ being the same as in the cycle dual problem \eqref{eq:dual_cycle}, and with $\phi_i$ for each $i \in \gL$ being the length of the shortest path from node $n$ to node $i$ in the complete directed graph where each edge $(i, j) \in \gL^2$ has weight $\eta_{i, j}$.
  The shortest paths are well-defined because, by \eqref{eq:sum_eta>=0}, there is no negative cycle in %
  this directed graph.
  The rest of this shows that $\{\eta_{i, j}\}_{i, j \in \gL}$ has the structure such that it can be written as $\eta_{i, j} = \phi_j - \phi_i$ for all $(i, j) \in \gL^2$.
  Then, \eqref{eq:dual_price_structure} holds by construction, and as a result, $(\vp, \omega, \vphi)$ is a feasible solution to the flow dual problem \eqref{eq:dual} with the same (hence optimal) objective value.

  First, observe that for any pair of nodes $i$ and $j$, the length of the shortest path from $n$ to $j$ is at most the length of the shortest path from $n$ to $i$ plus the length of edge $(i, j)$, which is $\eta_{i, j}$.
  Therefore we have $\phi_j \le \phi_i + \eta_{i, j}$, which implies $\eta_{i, j} -  \phi_j + \phi_i \ge 0$.

  Since $(\vp, \omega, \veta)$ is an optimal solution to the cycle dual problem \eqref{eq:dual_cycle}, we know from strong duality that for any optimal solution $(\vx, \vy, \vw)$ to the cycle primal problem \eqref{eq:primal_cycle}, the difference between the primal and dual objectives \eqref{eq:diff_primal_dual} is zero.
  This implies that every term in \eqref{eq:diff_primal_dual} is zero.
  In particular, we have (i) $x_{i, j} = q_{i, j}(p_{i, j})$ for all $(i, j) \in \gL^2$ and (ii) $\sum_{(i, j) \in \kappa} \eta_{i, j} = 0$ for all $\kappa \in \gC$ with $w_\kappa > 0$.
  (i) implies that $y_{i, j} \ge x_{i, j} > 0$ for all $(i,j) \in \gL^2$, \ie the driver flow on every edge is strictly positive (recall that $q\subij(r) > 0$ for all $r \geq 0$ since $q\subij$ is assumed to be strictly decreasing for all $r \geq 0$).
  Since $\vw$ is a cycle decomposition of $\vy$, for all $(i, j) \in \gL^2$, we have $\sum_{\kappa: (i, j) \in \kappa \in \gC} w_\kappa = y_{i, j} > 0$.
  Therefore, there exists a cycle $\kappa \in \gC$ such that $(i, j) \in \kappa$ and $w_\kappa > 0$.
  Applying %
  (ii), we know that for all $(i, j) \in \gL^2$, there exists a cycle $\kappa$ covering $(i, j)$ such that $\sum_{(i', j') \in \kappa} \eta_{i', j'} = 0$.
  For this cycle $\kappa$, we know that
  \[\sum_{(i', j') \in \kappa} (\eta_{i', j'} - \phi_{j'} + \phi_{i'}) = \sum_{(i', j') \in \kappa} \eta_{i', j'} + \sum_{(i', j') \in \kappa} (\phi_{i'} - \phi_{j'}) = 0,\]
  and that $\eta_{i', j'} - \phi_{j'} + \phi_{i'} \ge 0$ for all $(i', j') \in \kappa$.
  Therefore, it must be the case that $\eta_{i', j'} = \phi_{j'} - \phi_{i'}$ for all $(i', j') \in \kappa$, %
  and this completes the proof that $\eta_{i, j} = \phi_j - \phi_i$ for all $(i, j) \in \gL^2$.
\end{proof}

\subsection{Proof of Lemma \ref{lem:welfare_theorem}}
\label{appx:proof_welfare_thm}

\lemWelfareThm*

\begin{proof}

  The following two claims prove the  equivalence between welfare-optimal and competitive equilibrium outcomes.

  \begin{claim} \label{claim:optimal=>CE}
    Given any optimal solution $(\vx, \vy)$ to the primal problem \eqref{eq:primal}, there exists $\vp$ such that $(\vx, \vy, \vp)$ forms a competitive equilibrium.
  \end{claim}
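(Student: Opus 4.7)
My strategy is to work in the cycle formulation \eqref{eq:primal_cycle}--\eqref{eq:dual_cycle}, since its duality-gap decomposition \eqref{eq:diff_primal_dual} breaks cleanly into four non-negative terms that correspond one-to-one with the CE conditions \ref{cond:CE_rider_br}--\ref{cond:cycle_max} via complementary slackness. Given an optimal $(\vx,\vy)$ of \eqref{eq:primal}, I will use the flow decomposition theorem \eqref{eq:balance=cycles} to pick $\vw\ge\vzero$ with $\sum_{\kappa\ni(i,j)} w_\kappa = y_{i,j}$; by Claim \ref{claim:flow-cycle_primal_equivalence}, $(\vx,\vy,\vw)$ is then optimal in \eqref{eq:primal_cycle}. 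By strong duality (established via Slater's condition in the text) I will take an optimal cycle dual solution $(\vp,\omega,\veta)$ and use this $\vp$ as the candidate price vector. Non-negativity \ref{cond:feasibility_nonneg_price} of $\vp$ is immediate from \eqref{eq:dual_cycle_p>=0}, while the other feasibility conditions are inherited from the primal feasibility of $(\vx,\vy)$.

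\emph{Reading off the CE conditions.} Zero duality gap forces each of the four non-negative summands in \eqref{eq:diff_primal_dual} to vanish separately. The first summand's vanishing, together with the strict monotonicity of $v_{i,j}(\cdot)$, gives $x_{i,j} = q_{i,j}(p_{i,j})$ for every $(i,j)$, which is \ref{cond:CE_rider_br}. The third summand gives $y_{i,j} > x_{i,j} \Rightarrow p_{i,j} = 0$, which is \ref{cond:CE_relocation}. To handle \ref{cond:cycle_max}, I will substitute the dual price relation \eqref{eq:p=c+dpi-eta} into the cycle surplus rate \eqref{eq:cycle_surplus_rate} to obtain
\[
  \mu_\kappa \;=\; \omega \;-\; \frac{\sum_{(i,j)\in\kappa}\eta_{i,j}}{\sum_{(i,j)\in\kappa}d_{i,j}} \;\le\; \omega,
\]
using \eqref{eq:sum_eta>=0}, with equality precisely when $\sum_{(i,j)\in\kappa}\eta_{i,j}=0$. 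Combined with the second summand's vanishing (which gives $w_\kappa > 0 \Rightarrow \sum_{(i,j)\in\kappa}\eta_{i,j} = 0$), this shows that the decomposition $\vw$ chosen in the first step has the property that every $\kappa$ with $w_\kappa > 0$ attains $\mu_\kappa = \omega = \max_{\kappa'}\mu_{\kappa'}$, which is exactly \ref{cond:cycle_max}.

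\emph{Supply-side conditions.} For \ref{cond:CE_exhaust_supply}: if $\max_\kappa \mu_\kappa > 0$, then $\omega \ge \max_\kappa \mu_\kappa > 0$, so the fourth summand's vanishing forces $\sum_{i,j} d_{i,j} y_{i,j} = m$. For \ref{cond:CE_do_nothing}: if $\max_\kappa \mu_\kappa < 0$, then since $\omega \ge 0$ no cycle can attain $\mu_\kappa = \omega$, and the complementary-slackness condition $w_\kappa > 0 \Rightarrow \mu_\kappa = \omega$ derived above forces $w_\kappa = 0$ for all $\kappa$, hence $\vy = \vzero$.

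\emph{Main obstacle.} The most delicate point is the identification in \ref{cond:cycle_max}: the condition requires the \emph{existence} of a decomposition $\vw$ under which high-flow cycles attain the maximum surplus rate, and one must ensure that the particular decomposition produced by flow decomposition in the first step is the one that emerges from complementary slackness in the cycle dual. Passing through the cycle formulation, where $\vw$ appears as a primal decision variable rather than as an auxiliary object, is precisely what ties these together — this is the main reason for routing the argument through \eqref{eq:primal_cycle}--\eqref{eq:dual_cycle} instead of working directly with \eqref{eq:primal}--\eqref{eq:dual}.
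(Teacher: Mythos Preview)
Your proposal is correct and follows essentially the same route as the paper: pass to the cycle formulation, take an optimal dual $(\vp,\omega,\veta)$, and read off each CE condition from the vanishing of the corresponding term in the duality-gap decomposition \eqref{eq:diff_primal_dual}. The only cosmetic difference is that the paper first pins down $\omega = \max\{\max_\kappa \mu_\kappa,\,0\}$ from dual optimality (since $\omega$ is minimized in the objective), whereas you work only with the feasibility inequalities $\mu_\kappa \le \omega$ and $\omega \ge 0$; your weaker facts already suffice for \ref{cond:CE_exhaust_supply}--\ref{cond:cycle_max}, so nothing is lost.
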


  \begin{claim} \label{claim:CE=>optimal}
    Given any competitive equilibrium $(\vx, \vy, \vp)$, there exist $\vw$, $\omega$ and $\veta$ such that $(\vx, \vy, \vw)$ is an optimal solution to the primal problem \eqref{eq:primal_cycle} and $(\vp, \omega, \veta)$ is an optimal solution to the dual problem \eqref{eq:dual_cycle}.
  \end{claim}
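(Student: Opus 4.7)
The plan is to prove the welfare theorem by combining the strong duality between the cycle primal \eqref{eq:primal_cycle} and cycle dual \eqref{eq:dual_cycle} (established in Appendix A.1) with a term-by-term correspondence between the primal-dual gap \eqref{eq:diff_primal_dual} and the four groups of CE conditions. Both Claims \ref{claim:optimal=>CE} and \ref{claim:CE=>optimal} then reduce to matching each of the four non-negative terms in \eqref{eq:diff_primal_dual} with one of the conditions in Definition~\ref{def:CE}, and the additive price structure \eqref{eq:price_structure} is obtained by invoking Claim~\ref{claim:flow-cycle_dual_equivalence} to translate the cycle-dual variables $\veta$ into the flow-dual potentials $\vphi$.

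For Claim~\ref{claim:CE=>optimal} (CE implies optimal), the plan is to construct candidate optimal primal and dual solutions from a given CE outcome $(\vx, \vy, \vp)$ and verify that the gap \eqref{eq:diff_primal_dual} vanishes termwise. Since the driver flow is balanced, the flow decomposition theorem \eqref{eq:balance=cycles} supplies a cycle decomposition $\vw$ making $(\vx, \vy, \vw)$ feasible for \eqref{eq:primal_cycle}. For the dual, set $\omega = \max\{\max_{\kappa \in \gC} \mu_\kappa,\, 0\}$ and $\eta_{i,j} = c_{i,j} + d_{i,j} \omega - p_{i,j}$. A short calculation shows $\sum_{(i,j) \in \kappa} \eta_{i,j} = (\omega - \mu_\kappa) \sum_{(i,j) \in \kappa} d_{i,j} \ge 0$, so \eqref{eq:sum_eta>=0} holds, and constraints \eqref{eq:p=c+dpi-eta}, \eqref{eq:dual_cycle_p>=0}, \eqref{eq:dual_cycle_omega>=0} are immediate. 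The first term of \eqref{eq:diff_primal_dual} vanishes by \ref{cond:CE_rider_br}; the third by \ref{cond:CE_relocation}; the fourth by a case split on the sign of $\max_\kappa \mu_\kappa$ (using \ref{cond:CE_exhaust_supply} when positive and $\omega = 0$ otherwise); and the second by \ref{cond:cycle_max} together with the same case split (when $\max_\kappa \mu_\kappa < 0$, condition \ref{cond:CE_do_nothing} forces $\vy = \vzero$ and hence $\vw = \vzero$). Strong duality then certifies joint optimality, and applying Claim~\ref{claim:flow-cycle_dual_equivalence} yields flow-dual potentials $\vphi$ with $\eta_{i,j} = \phi_j - \phi_i$, giving the desired additive decomposition with $\omega \ge 0$.

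For Claim~\ref{claim:optimal=>CE} (optimal implies CE), the plan is to start with an optimal flow-primal $(\vx, \vy)$, lift it via Claim~\ref{claim:flow-cycle_primal_equivalence} to an optimal cycle-primal $(\vx, \vy, \vw)$ for any cycle decomposition $\vw$, and appeal to strong duality to obtain an optimal cycle-dual $(\vp, \omega, \veta)$. Because all four terms of \eqref{eq:diff_primal_dual} are non-negative, each individually vanishes. Reading off these equalities gives the four CE conditions: the first term yields \ref{cond:CE_rider_br} (using strict monotonicity of $v_{i,j}$); the third yields \ref{cond:CE_relocation}; and the same identity $\sum_{(i,j) \in \kappa} \eta_{i,j} = (\omega - \mu_\kappa) \sum_{(i,j) \in \kappa} d_{i,j}$ combined with \eqref{eq:sum_eta>=0} forces $\omega \ge \mu_\kappa$ for every cycle, which delivers the cycle-maximum condition \ref{cond:cycle_max} from the second term (positive $w_\kappa$ forces $\mu_\kappa = \omega$) and the supply conditions \ref{cond:CE_exhaust_supply}--\ref{cond:CE_do_nothing} from the fourth term (positive $\max_\kappa \mu_\kappa$ forces $\omega > 0$ and hence supply exhaustion; strictly negative $\max_\kappa \mu_\kappa$ combined with $\omega \ge 0$ would make the second term strictly positive unless $\vw = \vzero$).

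The main obstacle is the bookkeeping around conditions \ref{cond:CE_exhaust_supply}--\ref{cond:cycle_max}: they involve strict inequalities on $\max_\kappa \mu_\kappa$ that interact subtly with the sign of $\omega$, the non-negativity of $\omega$ in the dual, and the support of $\vw$. The key technical observation that dissolves these difficulties is the identity $\sum_{(i,j) \in \kappa} \eta_{i,j} = (\omega - \mu_\kappa) \sum_{(i,j) \in \kappa} d_{i,j}$, which simultaneously characterizes dual feasibility and complementary slackness for the cycle-based conditions. The canonical choice $\omega = \max\{\max_\kappa \mu_\kappa, 0\}$ is what reconciles dual non-negativity with the cycle-surplus constraint, and the rest of the proof is routine verification.
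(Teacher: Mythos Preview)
Your proposal is correct and follows essentially the same approach as the paper: you construct $\omega = \max\{\max_{\kappa} \mu_\kappa,\,0\}$ and $\eta_{i,j} = c_{i,j} + d_{i,j}\omega - p_{i,j}$, verify dual feasibility via the identity $\sum_{(i,j)\in\kappa}\eta_{i,j} = (\omega-\mu_\kappa)\sum_{(i,j)\in\kappa} d_{i,j}$, and then kill each of the four non-negative terms in \eqref{eq:diff_primal_dual} using \ref{cond:CE_rider_br}, \ref{cond:CE_relocation}, the sign split on $\max_\kappa \mu_\kappa$ with \ref{cond:CE_exhaust_supply}/\ref{cond:CE_do_nothing}, and \ref{cond:cycle_max} respectively---exactly as the paper does. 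One small point to make explicit: since \ref{cond:cycle_max} only guarantees the \emph{existence} of a decomposition $\vw$ with the maximal-surplus property, you should take that specific $\vw$ (rather than an arbitrary one supplied by the flow decomposition theorem) when verifying the second term vanishes.
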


  To establish the structure of the CE prices,
  note that for any %
  CE outcome $(\vx, \vy, \vp)$, by \Cref{claim:CE=>optimal}, there exist $\vw$, $\omega$ and $\veta$ such that $(\vx, \vy, \vw)$ is an optimal solution to the primal problem \eqref{eq:primal_cycle} and $(\vp, \omega, \veta)$ is an optimal solution to the dual problem \eqref{eq:dual_cycle}, with $p_{i, j} = c_{i, j} + d_{i, j} \omega - \eta_{i, j}$ for all $(i,j) \in \gL^2$.%

    Moreover, the equivalence between the dual problems \eqref{eq:dual} and \eqref{eq:dual_cycle} (Claim \ref{claim:flow-cycle_dual_equivalence}) imply that there exists $\vphi \in \R^n$ such that $(\vp, \omega, \vphi)$ forms an optimal solution to the dual problem \eqref{eq:dual}, and that $\eta_{i, j} = \phi_j - \phi_i$ holds for all $(i, j) \in \gL^2$.
    As a result, the CE prices must be of the form $p_{i, j} = c_{i, j} + d_{i, j} \omega - \phi_j + \phi_i$ for all $(i,j) \in \gL^2$.
  
\end{proof}

We now provide the proof for the two claims.

\begin{proof}[Proof of \Cref{claim:optimal=>CE}]
  Intuitively, given any set of optimal driver and rider flow $(\vx, \vy)$, we derive trip prices from the optimal dual solution, and argue that the equality between primal and dual objectives implies all rider best-response and driver best-response conditions in \Cref{def:CE}.

  Formally, for any decomposition $\vw$ of the driver flow $\vy$, $(\vx, \vy, \vw)$ is an optimal solution to the primal problem \eqref{eq:primal_cycle}.
  Let $(\vp, \omega, \veta)$ be any optimal solution to the dual problem~\eqref{eq:dual_cycle}. To prove that $(\vx, \vy, \vp)$ forms a competitive equilibrium, we first derive a useful property of the optimal $\omega$.
  For all $\kappa \in \gC$, constraints \eqref{eq:p=c+dpi-eta} and \eqref{eq:sum_eta>=0} imply that
  \[
    \omega \sum_{(i, j) \in \kappa} d_{i, j} - \sum_{(i, j) \in \kappa} (p_{i, j} - c_{i, j}) = \sum_{(i, j) \in \kappa} \eta_{i, j} \ge 0.
  \]
  Rearranging the terms, we get $\omega \ge \sum_{(i, j) \in \kappa} (p_{i, j} - c_{i, j}) \big/ \sum_{(i, j) \in \kappa} d_{i, j}$ for all $\kappa \in \gC$. Moreover, constraint \eqref{eq:dual_cycle_omega>=0} ensures that $\omega \ge 0$.
  Since $\omega$ appears in the dual objective that shall be minimized, in every optimal solution, we must have
  \begin{equation} \label{eq:omega=max}
    \omega = \max \biggl\{ \max_{\kappa \in \gC} \frac{ \sum_{(i, j) \in \kappa} (p_{i, j} - c_{i, j})}{\sum_{(i, j) \in \kappa} d_{i, j}}, \, 0 \biggr\} = \max \Bigl\{ \max_{\kappa \in \gC} \mu_{\kappa}, \, 0 \Bigr\},
  \end{equation}
  where $\mu_\kappa = \sum_{(i, j) \in \kappa} (p_{i, j} - c_{i, j}) \big/ \sum_{(i, j) \in \kappa} d_{i, j}$ is the surplus rate of cycle $\kappa$ as defined in \eqref{eq:cycle_surplus_rate}.

  Strong duality implies that given optimal primal and solutions, the difference between the primal and dual objectives \eqref{eq:diff_primal_dual} is zero, and thus all four terms must be zero since each term is non-negative. This allows us to verify all conditions for competitive equilibria:
  \begin{itemize}[leftmargin=*]
    \item %
          To establish rider best-response \ref{cond:CE_rider_br}, observe that $\sum_{i, j \in \gL} \int_{x_{i, j}}^{q_{i, j}(p_{i, j})} (v_{i, j}(s) - p_{i, j}) \dl s = 0$ %
          implies that $x_{i, j} = q_{i, j}(p_{i, j})$ must hold for all $i, j \in \gL$. %
          As we've discussed in Appendix~\ref{appx:cycle_formulation} above, the integral $\int_{x_{i, j}}^{q_{i, j}(p_{i, j})} (v_{i, j}(s) - p_{i, j}) \dl s$ is strictly positive when $x_{i, j} \neq q_{i, j}(p_{i, j})$.
    \item The third term $\sum_{i, j \in \gL} p_{i, j} (y_{i, j} - x_{i, j}) = 0$ implies that either $p\subij = 0$ or $y_{i, j} = x_{i, j}$, since both $p\subij \geq 0$ and $y\subij - x\subij \geq 0$ hold for all $(i,j) \in \gL^2$. This gives us \ref{cond:CE_relocation}, \ie trips with relocation flows have zero prices.
    \item We now show \ref{cond:CE_exhaust_supply}, that all drivers are used up when some cycle has a positive surplus.
          This is because $ \omega \bigl( m - \sum_{i, j \in \gL} d_{i, j} y_{i, j} \bigr) = 0$ together with \eqref{eq:omega=max} and primal feasibility imply that whenever $\max_{\kappa \in \gC} \mu_{\kappa} > 0$, we must have $\omega > 0$ and thus $ m - \sum_{i, j \in \gL} d_{i, j} y_{i, j} = 0$.
              \item %
          Finally, the second term $\sum_{\kappa \in \gC} w_\kappa \sum_{(i, j) \in \kappa} \eta_{i, j} = 0$, together with constraint \eqref{eq:sum_eta>=0}, imply that for each cycle $\kappa \in \gC$, either $w_\kappa = 0$ and $\sum_{(i, j) \in \kappa} \eta_{i, j} = 0$.
          For each cycle $\kappa$ with $w_\kappa > 0$, it must be the case that $\sum_{(i, j) \in \kappa} \eta_{i, j} = 0$.
          Given \eqref{eq:p=c+dpi-eta}, we know that $\omega = \sum_{(i, j) \in \kappa} (p_{i, j} - c_{i, j}) \big/ \sum_{(i, j) \in \kappa} d_{i, j} = \mu_\kappa$ must hold. %
          \begin{itemize}[leftmargin=*]
            \item %
                  When $\max_{\kappa \in \gC} \mu_\kappa \ge 0$, $\omega = \max_{\kappa \in \gC} \mu_\kappa$ given  \eqref{eq:omega=max}. As a result,
                  for each cycle $\kappa$ with $w_\kappa > 0$, we have $\mu_\kappa = \omega = \max_{\kappa' \in \gC} \mu_{\kappa'}$. %
                  This gives us \ref{cond:cycle_max}, that all drivers get the highest possible surplus rate among all cycles.
            \item When $\max_{\kappa \in \gC} \mu_\kappa < 0$, we need to show that $y\subij = 0$ for all $i,j \in \gL$ so that \ref{cond:CE_do_nothing} holds.
                  Note that \eqref{eq:omega=max} implies $\omega = 0$ and therefore $\omega > \mu_\kappa$ for all $\kappa \in \gC$.
                  Since $\omega = \mu_\kappa$ for all $\kappa \in \gC$ with $w_\kappa > 0$, %
                  it must be the case that $w_\kappa = 0$ for all $\kappa \in \gC$, implying that all driver flows are zero.%
                  \footnote{Note that this also implies that given any pair of optimal solutions, $\max_{\kappa \in \gC} \mu_\kappa < 0$ is not possible under the assumption that $q\subij(\cdot)$ is strictly decreasing for all $r \geq 0$, for all $(i,j) \in \gL^2$.
                      This is because these conditions imply that $q\subij(r) > 0$ for all $r \geq 0$, %
                      meaning that when rider best-response is satisfied (as we have established above), $x\subij > 0$ must hold for all $(i,j) \in \gL^2$, implying $y\subij > 0$ as well due to primal feasibility.
                  }
          \end{itemize}
  \end{itemize}

  This completes the proof that the outcome $(\vx, \vy, \vp)$ constructed above is a competitive equilibrium.
  Additionally, we emphasize that
  since the difference of the primal and dual objective is zero for \emph{any} optimal primal and dual solution, this above argument holds for any non-negative decomposition of $\vy$.
  As a result, we have a (seemingly) stronger version of condition \ref{cond:cycle_max} with ``$\forall \vw$'' replacing ``$\exists \vw$'':
  \begin{equation} \label{eq:cycle_max_strong}
    \forall \vw, \ \forall \kappa \in \gC, \quad w_\kappa > 0 \implies \mu_\kappa = \max_{\kappa' \in \gC} \mu_{\kappa'}.
  \end{equation}
\end{proof}

\begin{proof}[Proof of \Cref{claim:CE=>optimal}]
  Given any CE outcome $(\vx, \vy, \vp)$, for any cycle decomposition of the driver flow $\vw$,
  we know $(\vx, \vy, \vw)$ is a feasible solution to the primal problem \eqref{eq:primal_cycle}.
  We will prove the optimality of $(\vx, \vy, \vw)$ and $(\vp, \omega, \veta)$ by constructing
  $\omega$ and $\veta$ such that $(\vp, \omega, \veta)$ forms a feasible %
  dual solution, and
  showing that the difference between the primal and dual objectives \eqref{eq:diff_primal_dual} is zero.

  Let $\mu_\kappa$ be the surplus rate of each cycle $\kappa \in \gC$ as defined in \eqref{eq:cycle_surplus_rate}.
  We construct $\omega$ and $\veta$ as follows:
  \begin{align}
    \omega      & \triangleq \max \biggl\{ \max_{\kappa \in \gC} \frac{\sum_{(i, j) \in \kappa} (p_{i, j} - c_{i, j})}{\sum_{(i, j) \in \kappa} d_{i, j}}, \, 0 \biggr\} = \max \Bigl\{ \max_{\kappa \in \gC} \mu_{\kappa}, \, 0 \Bigr\}, \label{eq:let_omega} \\
    \eta_{i, j} & \triangleq c_{i, j} + d_{i, j} \omega - p_{i, j}, \qquad \forall (i,j) \in \gL^2. \label{eq:let_eta}
  \end{align}

  We first argue that $(\vp, \omega, \veta)$ forms a feasible solution to the dual problem \eqref{eq:dual_cycle}.
  The feasibility constraints \eqref{eq:p=c+dpi-eta} and \eqref{eq:dual_cycle_omega>=0} hold by construction.
  Constraint \eqref{eq:dual_cycle_p>=0} holds because trip prices are non-negative (as in condition \ref{cond:feasibility_nonneg_price}).
  To see \eqref{eq:sum_eta>=0}, observe that by construction of $\omega$, given any cycle $\kappa \in \gC$, we have $\omega \ge \sum_{(i, j) \in \kappa} (p_{i, j} \ab - c_{i, j}) \big/ \sum_{(i, j) \in \kappa} d_{i, j}$.
  This implies
  \begin{align}
    \sum_{(i, j) \in \kappa} \eta_{i, j} = \sum_{(i, j) \in \kappa} (c_{i, j} + d_{i, j} \omega - p_{i, j}) = \omega \sum_{(i, j) \in \kappa} d_{i, j} - \sum_{(i, j) \in \kappa} (p_{i, j} - c_{i, j}) \geq 0.
  \end{align}

  We now prove that the difference between the primal and dual objectives is zero. Consider each of the four terms in \eqref{eq:diff_primal_dual}:
  \begin{itemize}[leftmargin=*]
    \item The rider best response condition \ref{cond:CE_rider_br} implies that $x\subij = q_{i, j}(p_{i, j})$ for all $(i,j) \in \gL^2$, thus the first term
          $\sum_{i, j \in \gL} \int_{x_{i, j}}^{q_{i, j}(p_{i, j})} (v_{i, j}(s) - p_{i, j}) \dl s $ is zero.
    \item Condition \ref{cond:CE_relocation}, \ie trip price must be zero when the driver flow exceeds the rider flow, implies that the third term $ \sum_{i, j \in \gL} p_{i, j} (y_{i, j} - x_{i, j})$ is zero.
    \item For the fourth term $\omega \bigl( m - \sum_{i, j \in \gL} d_{i, j} y_{i, j} \bigr)$, note that by construction \eqref{eq:let_omega}, $\omega$ is strictly positive only if $\mu_\kappa > 0$ for some $\kappa \in \gC$, \ie there exists some cycle with strictly positive surplus. In this case, condition \ref{cond:CE_exhaust_supply} implies that all drivers are used up, \ie $m - \sum_{i, j \in \gL} d_{i, j} y_{i, j} = 0$.
    \item What is left to prove that the second term of \eqref{eq:diff_primal_dual}, \ie $\sum_{\kappa \in \gC} w_\kappa \sum_{(i, j) \in \kappa} \eta_{i, j}$, is also zero. Intuitively, $\sum_{(i, j) \in \kappa} \eta_{i, j}$ is the total earnings a driver on cycle $\kappa$ loses per cycle, relative to making a surplus of $\omega$ per unit of time. Driver best response therefore requires that either $\sum_{(i, j) \in \kappa} \eta_{i, j} = 0$ (\ie drivers on this cycle is making the highest possible surplus rate) or $w_\kappa = 0$ (\ie no driver is on the cycle).
          Formally, we need to discuss the following two cases:
          \begin{itemize}[leftmargin=*]
            \item
                  If $\max_{\kappa \in \gC} \mu_\kappa < 0$, then by condition \ref{cond:CE_do_nothing}, $y_{i, j} = 0$ for all trips $(i, j) \in \gL$. In this case, $w_{\kappa} = 0$ for all cycles $\kappa \in \gL$ thus $\sum_{\kappa \in \gC} w_\kappa \sum_{(i, j) \in \kappa} \eta_{i, j} = 0$ trivially holds.%
                  \footnote{Similar to the discussion in the proof of Claim~\ref{claim:optimal=>CE}, we note here that this case is not possible for any competitive equilibrium under the assumption that $q\subij(r)$ is strictly decreasing for all $r\geq 0$ for all $(i,j) \in \gL^2$.}
            \item
                  If $\max_{\kappa \in \gC} \mu_\kappa \ge 0$, $\omega = \max_{\kappa \in \gL} \mu_\kappa$ given \eqref{eq:let_omega}.
                  Condition \ref{cond:cycle_max} then implies that for all cycles $\kappa \in \gC$ with $w_\kappa > 0$, we must have $\mu_\kappa = \omega$. %
                  This implies that for all $\kappa \in \gC$ \st $w_\kappa > 0$, $\omega \sum_{(i, j) \in \kappa} d_{i, j} = \sum_{(i, j) \in \kappa} (p_{i, j} - c_{i, j})$,
                  which further implies $\sum_{(i, j) \in \kappa} \eta_{i, j} = \sum_{(i, j) \in \kappa} c_{i, j} + d_{i, j} \omega - p_{i, j}
                    = 0$. %
          \end{itemize}
  \end{itemize}
  This completes the proof that $(\vx, \vy, \vw)$ and $(\vp, \omega, \veta)$ are optimal solutions to the primal problem \eqref{eq:primal_cycle} and the dual problem \eqref{eq:dual_cycle}, respectively.
\end{proof}

\subsection{Uniqueness of Optimal Dual Solution}\label{appx:proof_dual_uniqueness}

\begin{proposition}\label{prop:dual_uniqueness}

  There exists $\omega^\ast \geq 0$ and $\vphi^\ast \in \R^n$ with $\phi_n^\ast = 0$ such that any optimal solution $(\vp, \omega, \vphi)$ of the dual problem \eqref{eq:dual} satisfies
  \begin{enumerate}[label=(\roman*)]
    \item $\omega = \omega^\ast$,
    \item $\phi_i - \phi_n = \phi_i^\ast$ for all $i \in \gL$.
  \end{enumerate}
\end{proposition}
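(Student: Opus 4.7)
The plan is to use strong duality and the strict concavity of the primal objective to pin down the prices $\vp$ at any optimum, and then invert the structural constraint $p_{i,j} = c_{i,j} + d_{i,j}\omega + \phi_i - \phi_j$ to recover $\omega$ and $\vphi$. Because each $v_{i,j}$ is strictly decreasing, $\int_0^{x_{i,j}} v_{i,j}(s)\,\mathrm{d}s$ is strictly concave in $x_{i,j}$, so the primal objective \eqref{eq:primal-obj} is strictly concave in $\vx$. Hence the optimal rider flow $\vx^\ast$ is common to all primal optimal solutions.

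Next, I would show that rider best-response $x^\ast_{i,j} = q_{i,j}(p_{i,j})$ holds at every optimal dual solution $(\vp, \omega, \vphi)$. By Claim~\ref{claim:flow-cycle_dual_equivalence}, any such solution corresponds to a cycle-formulation dual optimum with the same $\vp$ and $\omega$; strong duality there forces the four-term gap decomposition \eqref{eq:diff_primal_dual} to vanish term-by-term, since each summand is individually nonnegative. In particular $\sum_{i,j}\int_{x^\ast_{i,j}}^{q_{i,j}(p_{i,j})} (v_{i,j}(s) - p_{i,j})\,\mathrm{d}s = 0$, and strict monotonicity of $v_{i,j}$ then forces $x^\ast_{i,j} = q_{i,j}(p_{i,j})$ for every $(i,j)$. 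Since each $q_{i,j}$ is a strictly decreasing bijection from $[0,\infty)$ onto $(0, q_{i,j}(0)]$, the prices $p_{i,j} = v_{i,j}(x^\ast_{i,j})$ are uniquely determined across all optimal dual solutions.

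With $\vp$ uniquely determined, I would invert the dual structural constraint \eqref{eq:dual_price_structure}. Setting $i = j$ gives $p_{i,i} = c_{i,i} + d_{i,i}\omega$, hence $\omega = (p_{i,i} - c_{i,i})/d_{i,i}$ (using $d_{i,i} > 0$); this pins down a unique value $\omega^\ast$, with consistency across different $i$ automatic from dual feasibility. Then imposing $\phi_n = 0$ and setting $j = n$ in the structural constraint yields $\phi^\ast_i = p_{i,n} - c_{i,n} - d_{i,n}\omega^\ast$ for every $i \in \gL$, so $\vphi^\ast$ is uniquely determined.

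The main subtlety is establishing rider best-response at \emph{every} optimal dual solution without circularly invoking the very uniqueness we are proving. The cleanest route, as described above, is the cycle-formulation gap decomposition \eqref{eq:diff_primal_dual}, whose four summands are each explicitly nonnegative, combined with the formulation equivalence in Claim~\ref{claim:flow-cycle_dual_equivalence}. Once rider best-response is in hand, the remaining argument is pure linear algebra on the structural constraint.
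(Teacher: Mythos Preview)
Your proposal is correct and follows essentially the same route as the paper: use the primal--dual gap decomposition (via the cycle formulation) to force rider best-response $x^\ast_{i,j}=q_{i,j}(p_{i,j})$ at every optimal dual solution, invoke strict monotonicity of each $q_{i,j}$ to pin down $\vp$ uniquely, and then read off $\omega^\ast$ from $p_{n,n}=c_{n,n}+d_{n,n}\omega$ and $\phi_i^\ast$ from $p_{i,n}$. The paper is slightly terser (it just says ``similar to the proof of Lemma~\ref{lem:welfare_theorem}'') whereas you spell out the route through Claim~\ref{claim:flow-cycle_dual_equivalence} and explicitly note the uniqueness of $\vx^\ast$ from strict concavity, but the substance is identical.
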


\begin{proof}
  Let $(\vx, \vy)$ be an optimal solution to the primal problem \eqref{eq:primal}, and $(\vp, \omega, \vphi)$ and optimal solution to the dual problem \eqref{eq:dual}.
  Similar to the proof of Lemma~\ref{lem:welfare_theorem}, we can decompose the difference of the primal objective \eqref{eq:primal-obj} and dual objective \eqref{eq:dual_obj} in way similar to \eqref{eq:diff_primal_dual}, and prove that in order for the difference in the objectives to be zero, we must have $x\subij = q\subij (p\subij)$ for all $i,j \in \gL$.

  Note that the difference in the objectives is zero between any pair of optimal primal and dual solutions. The strict monotonicity of $q\subij$ for all $i,j \in \gL$ therefore implies that all optimal dual solutions must have the same prices $\vq$.
  This immediately pin down the unique $\omega^\ast$ and gives us (i), since %
  $p_{n,n} = c_{n,n} + \omega d_{n,n}$ must be the same for all optimal dual solutions thus it is impossible for two optimal dual solutions to have different $\omega$'s.

  Now consider the prices $p_{i,n}$ for all $i \neq n$. The fact that $p_{i,n} = c_{i,n} + \omega d_{i,n} + \phi_i - \phi_n$ must be the same across all optimal dual solutions, together with the uniqueness of optimal $\omega$, imply that $\phi_i - \phi_n$ must be the same for all optimal $\phi$'s.
  This finishes the proof of part (ii) and concludes the proof of this proposition.
\end{proof}

\subsection{Proof of Lemma~\ref{lem:unique}} \label{appx:proof_lem_uniqueness}

\newcommand{\rhohat}{\hat{\rho}}
\newcommand{\vrhohat}{\hat{\vrho}}
\newcommand{\piushort}{\ushort\pi}
\newcommand{\vpiushort}{\ushort\vpi}
\newcommand{\pibar}{\widebar\pi}
\newcommand{\balance}{b}
\newcommand{\vbalance}{\vb}

\newcommand{\gFhat}{\hat{\gF}}

\newcommand{\vpiast}{\vpi^{\ast}}
\newcommand{\piast}{\pi^\ast}

\newcommand{\Shat}{\hat{S}}

\newcommand{\anInt}{a}
\newcommand{\supInd}{^{(\anInt)}}

\noindent{}%
We first state and prove the following result, which we use in the proofs of \Cref{lem:unique} and \Cref{thm:differentiable}.

\begin{claim} \label{lem:inverse-positive}
  Suppose $\mM \in \R^{n\times n}$ is a strictly diagonally dominant matrix (\ie $|M_{i, i}| > \sum_{j \ne i} |M_{i, j}|$ for all $i = 1, 2, \dots, n$) with strictly positive diagonal entries and strictly negative non-diagonal entries.
  Then $\mM$ is invertible and all entries of $\mM^{-1}$ are strictly positive.
\end{claim}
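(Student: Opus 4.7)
The plan is to combine a standard strict diagonal dominance argument for invertibility with a Neumann series expansion to read off the sign pattern of the inverse. The key idea is to isolate the diagonal and write $\mM = \mD - \mN$ where $\mD$ is the (strictly positive) diagonal part and $\mN \ge 0$ has zero diagonal and strictly positive off-diagonal entries. Then $\mM = \mD(\mI - \mB)$ with $\mB \triangleq \mD^{-1}\mN$, and the goal reduces to analyzing $(\mI - \mB)^{-1}$.

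First I would establish invertibility directly from strict diagonal dominance: each row sum of $|\mB|$ equals $\sum_{j \neq i} |M_{i,j}|/|M_{i,i}| < 1$, so $\|\mB\|_\infty < 1$. This yields $\rho(\mB) \le \|\mB\|_\infty < 1$, which both implies that $\mI - \mB$ is invertible (hence $\mM = \mD(\mI - \mB)$ is) and that the Neumann series
\begin{equation*}
(\mI - \mB)^{-1} = \sum_{k=0}^{\infty} \mB^k
\end{equation*}
converges. Consequently,
\begin{equation*}
\mM^{-1} = (\mI - \mB)^{-1}\mD^{-1} = \sum_{k=0}^\infty \mB^k \mD^{-1}.
\end{equation*}

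Next I would read off positivity entrywise. By construction, every entry of $\mB$ is non-negative (diagonal zero, off-diagonal equal to $-M_{i,j}/M_{i,i} > 0$ since $M_{i,j} < 0$ and $M_{i,i} > 0$), and every entry of $\mD^{-1}$ is non-negative. Hence each summand $\mB^k \mD^{-1}$ is entrywise non-negative, so $\mM^{-1} \ge 0$. To upgrade to strict positivity, observe that just the first two terms already give
\begin{equation*}
(\mI + \mB)\mD^{-1},
\end{equation*}
whose $(i,j)$ entry equals $1/M_{j,j} > 0$ if $i = j$ and equals $-M_{i,j}/(M_{i,i} M_{j,j}) > 0$ if $i \ne j$. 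Thus $(\mI + \mB)\mD^{-1}$ has strictly positive entries, and adding the remaining non-negative terms $\sum_{k \ge 2}\mB^k\mD^{-1}$ preserves strict positivity.

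The argument is essentially routine once the decomposition $\mM = \mD(\mI - \mB)$ is in place; there is no real obstacle, only the small care needed to convert the off-diagonal negativity of $\mM$ into the positivity of the off-diagonals of $\mB$, and the observation that $\mI + \mB$ alone is already entrywise positive so the Neumann tail does not need to be examined further. The case $n = 1$ is trivial (a positive scalar) and can be dispensed with in one line.
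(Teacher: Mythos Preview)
Your proof is correct and is essentially the same Neumann-series argument as the paper's: the paper scales $\mM$ so that $\max_i M_{i,i}<1$ and expands $(\mI-\mW)^{-1}$ with $\mW=\mI-\mM$ entrywise positive, whereas you factor out the diagonal and expand $(\mI-\mD^{-1}\mN)^{-1}$, but both reduce to $\|\cdot\|_\infty<1$ from strict diagonal dominance and then read off strict positivity from the first two terms of the series. The only cosmetic difference is that your decomposition avoids the preliminary scaling step.
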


\begin{proof}
  Assume without loss of generality that $\max_i M_{i, i} < 1$ (otherwise scale $\mM$ by a positive factor).
  Let $\mW = \mI - \mM$, where $\mI$ is the identity matrix. %
  We know $\mW$ is a matrix with positive entries, and for each $i$,
  \[\sum_{j} |W_{i, j}| = W_{i, i} + \sum_{j \ne i} W_{i, j} = 1 - M_{i, i} + \sum_{j \ne i} |M_{i, j}| < 1.\]
  This implies $\|\mW\|_\infty < 1$.
  Therefore, as $k$ goes to infinity, $\mW^k$ converges to the zero matrix, and $\mI + \mW + \mW^2 + \cdots + \mW^k$ converges to $(\mI - \mW)^{-1} = \mM^{-1}$.
  Since all entries of $\mW$ are positive, entries of $\mM^{-1}$ are also positive.
\end{proof}

Define $\qhat_{i, j}(r) \triangleq q_{i, j}(r) + \qtilde_{i, j}(r)$ for all $i, j \in \gL$ and $r \in \R$.
This is the need of drivers on trip $(i, j)$ to satisfy both demand and relocation at price $r$.
The feasibility constraints \ref{cond:feasibility_nonneg_price}--\ref{cond:feasibility_flow_balance}, the price structure, and the market clearing conditions \ref{cond:mc_rider_reloc_flow} and \ref{cond:mc_total_supply} together impose the following constraints on $\vpi$ and $\vphi$:
\begin{align*}
  \sum_{j \in \gL} \qhat_{i, j}(c_{i, j} + d_{i, j} \pi_i + \phi_i - \phi_j)             & = \sum_{j \in \gL} \qhat_{j, i}(c_{j, i} + d_{j, i} \pi_j + \phi_j - \phi_i), \quad \forall i \in \gL, \\
  \sum_{i, j \in \gL} d_{i, j} \qhat_{i, j}(c_{i, j} + d_{i, j} \pi_i + \phi_i - \phi_j) & = m,                                                                                                   \\
  c_{i, j} + d_{i, j} \pi_i + \phi_i - \phi_j                                            & \ge 0, \quad \forall i, j \in \gL.
\end{align*}

We prove \Cref{lem:unique}, which states that the origin-based market-clearing outcome exists and is unique.

\lemUniqueness*

\begin{proof}[Proof of \Cref{lem:unique}]
    The fact that $\vpi$ clears the market means that
  trip prices are non-negative, %
  the driver flow is balanced for each location, and %
  exactly $m$ units of drivers are dispatched.

  We first prove the uniqueness by showing that the set of surge multipliers that achieves flow-balance is ``ordered'', \ie if both $\vpi$ and $\vpi'$ lead to flow-balanced outcomes, $\pi_i ' < \pi_i$ for some $i \in \gL$ implies that $\pi_j ' < \pi_j$ for all $j \in \gL$.
  Then, we prove that there exists a vector $\vpi \in \R^n$ that satisfies all three conditions above.

  \subparagraph{(i) Orderedness/Uniqueness.}
  We first prove that for some fixed $\vphi \in \R^n$, if both $\vpi$ and $\vpi'$ lead to flow balanced outcomes, then $\exists i \in \gL \text{ s.t. } \pi_i > \pi_i' \implies \pi_j > \pi_j',~\forall j \in \gL$. %
  Equivalently, we show that $\gJ \ne \varnothing \implies \gJ = \gL$, where $\gJ \triangleq \{i \in \gL \,|\, \pi_i' < \pi_i\}$.
  Assume towards a contradiction, that $\gJ \ne \varnothing$ and $\gJ \ne \gL$. %
  The total flow from $\gJ$ to $\gL \setminus \gJ$ is strictly higher under $\vpi'$ in comparison to that under $\vpi$, \ie
  \begin{equation}
    \sum_{i \in \gJ, \, j \in \gL \setminus \gJ} \qhat_{i, j}(c\subij + d\subij \pi_i' + \phi_i - \phi_j) \ab > \sum_{i \in \gJ, \, j \in \gL \setminus \gJ} \qhat_{i, j}(c\subij + d\subij \pi_i + \phi_i - \phi_j). \label{eq:flow_ooU_1}
  \end{equation}
  At the same time, the total flow from $\gL \setminus \gJ$ to $\gJ$ is weakly lower under $\vpi'$, \ie
  \begin{equation}
    \sum_{i \in \gL \setminus \gJ, \, j \in \gJ} \qhat_{i, j}(c\subij + d\subij \pi_i' + \phi_i - \phi_j) \ab \le \sum_{i \in \gL \setminus \gJ, \, j \in \gJ} \qhat_{i, j}(c\subij + d\subij \pi_i + \phi_i - \phi_j). \label{eq:flow_ooU_2}
  \end{equation}
  This implies that the flow-balance constraints for locations in $\gU$ cannot be satisfied under both $\vpi$ and $\vpi'$, since under flow balance, the left-hand-sides of \eqref{eq:flow_ooU_1} and \eqref{eq:flow_ooU_2} must be equal, and so do the right-hand-sides of the two inequalities. This %
  completes the proof of part (i).

  Note that this also implies the \emph{uniqueness} of origin-based market-clearing outcomes, \ie there can be at most one market-clearing $\vpi$, if any exists at all.
  This is because if $\vpi$ and $\vpi'$ both satisfy flow-balance and $\vpi \neq \vpi'$,
  either $\pi_i > \pi_i'$ for all $i \in \gL$ or $\pi_i < \pi_i'$ for all $i \in \gL$. The resulting outcomes %
  therefore cannot use the same number of drivers, thus \ref{cond:mc_total_supply} cannot hold under both $\vpi$ and $\vpi'$.

  \subparagraph{(ii) Existence.}
  Before proving that a market clearing $\vpi \in \R^n$ exists for any $\vphi \in \R^n$, we first introduce some notations.

  For simplicity, define for all $i, j \in \gL$ and all $\pi_i \in \R$ \st $c_{i, j} + d_{i, j} \pi_i + \phi_i - \phi_j \geq 0$,
  \begin{equation*}
    \rho_{i, j}(\pi_i) \triangleq \qhat_{i, j}(c_{i, j} + d_{i, j} \pi_i + \phi_i - \phi_j).
  \end{equation*}
  Note that $\rho\subij(\cdot)$ is defined for fixed $\vphi \in \R^n$, and represents the rate of driver flow on the $(i,j)$ trip as a function of the surge multiplier $\pi_i$.
  Moreover, for all $\pi_i \geq (\phi_j - \phi_i - c_{i, j})/d_{i, j}$, $\rho\subij(\cdot)$ is strictly positive, continuously differentiable and strictly decreasing in $\pi_i$.

  We also denote the number of drivers needed to satisfy the demand originating from each location $i \in \gL$ (as a function of $\pi_i$) as %
  \begin{equation*}
    S_i(\pi_i) \triangleq \sum_{j \in \gL} d_{i, j} %
    \rho_{i, j}(\pi_i).
  \end{equation*}
  Since trip prices must be non-negative, $S_i(\pi_i)$ is defined only for $\pi_i \geq \piushort_i$, where
  $\piushort_i$ is the lowest value of $\pi_i$ %
  such that $p_{i, j} \ge 0$ for all $j \in \gL$:
  \begin{equation}
    \piushort_i \triangleq \max_{j \in \gL} \frac{\phi_j - \phi_i - c_{i, j}}{d_{i, j}}, \quad \forall i \in \gL. \label{eq:defn_piushort}
  \end{equation}
  Given properties of $\{\rho\subij(\cdot) \}_{j \in \gL}$,
  $S_i(\pi_i)$ is also strictly positive, continuously differentiable, and strictly decreasing in $\pi_i$ for all $\pi_i \geq \piushort_i$.
  Finally, denote total the number of drivers needed to satisfy the driver flow %
  as:
  \begin{equation*}
    S(\vpi) \triangleq %
    \sum_{i \in \gL} S_i(\pi_i).
  \end{equation*}
  $S$ is well-defined for $\vpi \in \R^n$ s.t. $\pi_i \geq  \piushort_i$ for all $i\in \gL$. We denote the domain of $S$ as
  \begin{equation}
    \gD \triangleq \{\vpi \in \R^n | \pi_i \geq \piushort_i,~\forall i \in \gL \} \label{equ:domain_nonneg_price}
  \end{equation}

  To establish that a market-clearing outcome exists, what we need to prove is that there exists a $\vpi \in \gD$ such that %
  the resulting outcome is flow balanced and uses up all drivers: %
  \begin{align}
    \sum_{j \in \gL} %
    \rho_{k, j}(\pi_k) & = \sum_{i \in \gL}                                          %
    \rho_{i, k}(\pi_i), ~ \forall k \in \gL, \label{eq:existence_proof_flow_balance} \\
    S(\vpi)            & = m. \label{eq:existence_proof_total_supply}
  \end{align}
  We prove this by showing that the set of multipliers
  satisfying the flow balance constraints \eqref{eq:existence_proof_flow_balance}
  forms a one-dimensional manifold,
    and then proving that there exist a $\vpi$ in this manifold \st \eqref{eq:existence_proof_total_supply} is satisfied.

  A first challenge %
  is that it is not straightforward to prove that %
  there exists any $\vpi \in \gD$ \st \eqref{eq:existence_proof_flow_balance} is satisfied. %
  To overcome this issue, we construct an alternative $\vrhohat = (\rho\subij(\cdot))_{(i,j) \in \gL^2}$ \st
  \begin{enumerate}[label=(\Roman*)]
    \item flow-balance \eqref{eq:existence_proof_flow_balance} (stated in terms of $\vrhohat$ instead of $\vrho$) is satisfied by construction for $\vpiushort$, and
    \item $\vrhohat$ coincides with $\vrho$ for all outcomes that use $m$ units of drivers supply (\ie $\vrhohat(\vpi) = \vrho(\vpi)$ for all $\vpi \in \gD$ s.t. \eqref{eq:existence_proof_total_supply} %
          holds under $\vrhohat$).
  \end{enumerate}
  First, observe that $S_i\left(\piushort_i\right) > m$ for all $i \in \gL$ since given \eqref{eq:defn_piushort}, $p\subij = 0$ for at least one destination $j \in \gL$,  %
  in which case the $(i,j)$ trip would use $d\subij \rho\subij(\piushort_i) = d\subij \qhat\subij(0) \geq m$ units of drivers. %
  Given that $S_i(\pi_i)$ is continuous and strictly decreasing in $\pi_i$, there must exist a unique
  $\pibar_i > \piushort_i$ \st %
  $S_i(\pibar_i) = m$.
  Let $\varepsilon > 0$ be a small enough constant such that $\piushort_i + \varepsilon < \pibar_i$ for all $i \in \gL$.
  Denote %
  \[
    K \triangleq \max_{i, j \in \gL} %
    \rho_{i, j}\left(\piushort_i + \varepsilon\right),
  \]
  and define $\rhohat_{i, j}(\cdot)$ for all $(i,j) \in \gL^2$ as
  \[\rhohat_{i, j}(\pi_i) \triangleq \rho_{i, j}(\pi_i) + \bigl( K - \rho_{i, j} \bigl( \piushort_i + \varepsilon \bigr) \bigr) \biggl(\frac{\max\{\pibar_i - \pi_i, 0\}}{\pibar_i - (\piushort_i + \varepsilon)} \biggr)^2.\]
  We know that $%
    \rhohat_{i, j}\left(\piushort_i + \varepsilon\right) = K$ for all $(i,j) \in \gL$, and that $%
    \rhohat_{i, j}(\pi_i) = %
    \rho_{i, j}(\pi_i)$ for all $\pi_i \ge \pibar_i$.
  See \Cref{fig:sketch} for an illustation.

  \begin{figure}[t]
    \centering

    \begin{tikzpicture}
      \begin{axis}[
          width=0.5\textwidth,
          height=0.3\textwidth,
          xlabel={$\pi_i$},
          xtick={0, 1, 2},
          xticklabels={$\piushort_i$, $\piushort_i + \varepsilon$, $\pibar_i$},
          ytick={1.25},
          yticklabels={$K$},
          domain=0:3,
          samples=100,
          legend pos=north east,
          ymax=3,
          ymin=0,
          axis lines=left,
        ]
        \draw[dotted] (axis cs:1,0) -- (axis cs:1,3); %
        \draw[dotted] (axis cs:0,1.25) -- (axis cs:3,1.25); %
        \addplot[smooth, solid, thick, color=blue] {1.5^(-x)};
        \addlegendentry{$\rho_{i, j}(\pi_i)$}
        \addplot[smooth, dashed, thick, color=red] {1.5^(-x) + 7 / 3 * (max(0, 1 - x / 2))^2};
        \addlegendentry{$\rhohat_{i, j}(\pi_i)$}
      \end{axis}
    \end{tikzpicture}

    \caption{Sketch of functions $\rho_{i, j}(\pi_i)$ and $\rhohat_{i, j}(\pi_i)$.}
    \label{fig:sketch}
  \end{figure}
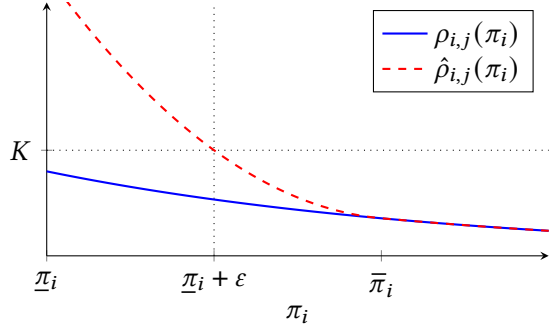

  It is straightforward to verify that (I) holds--- the flow balance constraints are trivially satisfied %
  since $\rhohat_{i, j}\left(\piushort_i\right) = K$ for all $(i,j) \in \gL^2$.
  We now argue that (II) also holds, by showing that %
  for any $\vpi \in \gD$ s.t. \eqref{eq:existence_proof_total_supply}, %
  we must have $\pi_i > \pibar_i$ for all $i \in \gL$
  since by definition of $\pibar_i$. $S_i(\pibar_i) = m$ for all $i\in \gL$ and for all other $j \in \gL$, $S_j(\pi_j)$ is strictly positive for all $\pi_j \geq \piushort_j$. %
  As a result, if there exists any $i \in \gL$ s.t. $\pi_i \leq \pibar_i$, $S(\vpi) > m$ must hold and \eqref{eq:existence_proof_total_supply} cannot be satisfied.

  Denote $ \Shat(\vpi) \triangleq \sum_{(i,j\in \gL)} d\subij \rhohat\subij(\pi_i)$,
  we now define %
  the following two subsets of $\gD$ where the flow-balance constraint and the total supply constraint are satisfied, respectively:
  \begin{align}
    \gF & = \Biggl\{ \vpi \in \gD \Biggm| \sum_{j \in \gL} %
    \rhohat_{k, j}(\pi_k) = \sum_{i \in \gL} %
    \rhohat_{i, k}(\pi_i), ~ \forall k \in \gL %
    \Biggr\} \label{eq:flow_balance_set},                  \\
    \gT & = \bigl\{ \vpi \in \gD \bigm|
    \Shat(\vpi) =  m
    \bigr\}. \label{eq:total_supply_set}
  \end{align}

  What is left to prove is that $\gT \cap \gF \neq \varnothing$.
  The fact that $\vrhohat = \vrho$ for all $\vpi \in \gT$ then implies that \eqref{eq:existence_proof_flow_balance} and \eqref{eq:existence_proof_total_supply} are both satisfied for any $\vrho \in \gT \cap \gF$, completing the proof of %
  of the existence of an origin-based market-clearing outcome.

  \medskip

  To prove $\gT \cap \gF \neq \varnothing$, we first define the following auxiliary function $\vbalance: \gD \to \R^{n-1}$ such that
  \[
    \balance_k(\vpi) = \sum_{i \in \gL} \rhohat_{i, k}(\pi_i) - \sum_{j \in \gL} \rhohat_{k, j}(\pi_k), \quad k = 1, \dots, n-1.
  \]
  Intuitively, $\balance_k$ represents the total amount of driver flow into location $k$ minus the total driver flow out of location $k$.  %
  As a result, $\vbalance(\vpi) = \vzero$ holds if and only if $\vpi \in \gF$, %
  meaning that $\gF$ is the preimage of $\vzero_{n-1}$ under function $\vbalance$.

  Given that $\rhohat_{i, j}(\cdot)$ is strictly positive, strictly decreasing, and continuously differentiable for each $(i,j) \in \gL^2$, we know that for each $k = 1, \dots, n-1$, $\balance_k(\vpi)$ is also continuously differentiable in $\vpi$.
  Now, consider the set %
  \[
    \gS \triangleq  \gF \cap \{\vpi \in \gD \,|\, \Shat(\vpi) \geq m \},
  \]
  we know $\vpiushort = \left(\piushort_i\right)_{i \in \gL} \in \gS$ since as we've discussed earlier, $\vbalance \left(\vpiushort \right) = \vzero$, and the corresponding outcome uses strictly more than $m$ units of drivers.
  We also know that $\gS$ is closed, since
  \begin{itemize}
    \item $\gD$ is closed,
    \item $\gF$ as the preimage of $\vzero$ under the continuous function $\vbalance$ is closed, and
    \item set $\{\vpi \in \gD \,|\, \Shat(\vpi) \geq m \}$ as the preimage of the closed set $[m, +\infty)$ under the continuous function $\Shat$ is also closed.
  \end{itemize}
  The set $\gS$ is also bounded above.
  Otherwise, if $\pi_i \to \infty$ for any $i \in \gL$, then the prices $p_{i, j} \to \infty$ for all $j \in \gL$ and the flow out of location $i$ must go to zero.
  To satisfy the flow balance constraint, the flow into location $i$ must also go to zero.
  Since $\vphi$ is fixed, this implies that $\pi_j \to \infty$ for all $j \in \gL$.
  This further implies that $\Shat(\vpi) \to 0$, contradicting the fact that $\Shat(\vpi) \geq m$.

  Let $\vpiast = (\piast_1, \dots, \piast_n)$ be the $\vpi \in \gS$ with the largest $\pi_n$ coordinate, \ie $\piast_n = \sup_{\vpi \in \gS} \pi_n$.
  We argue in the rest of the proof that %
  $\Shat(\vpiast) = m$ must hold, \ie $\vpiast \in \gF \cap \gT$.

  \medskip

  We first show that the mapping $\vbalance$ has no critical point. To see, this,
  divide the Jacobian matrix of function $\vbalance$ %
  into two blocks, with $\mA \in \R^{(n-1) \times (n-1)}$, $\vbeta \in \R^{n-1}$:
  \begin{equation*}
    \D \vbalance(\vpi) = \left[ \begin{array}{@{}c|c@{}} \begin{matrix} \diffp{\balance_1}{\pi_1}(\vpi) & \cdots & \diffp{\balance_1}{\pi_{n-1}}(\vpi) \\ \vdots & \ddots & \vdots \\ \diffp{\balance_{n-1}}{\pi_1}(\vpi) & \cdots & \diffp{\balance_{n-1}}{\pi_{n-1}}(\vpi) \end{matrix} & \begin{matrix} \diffp{\balance_1}{\pi_n}(\vpi) \\ \vdots \\ \diffp{\balance_{n-1}}{\pi_n}(\vpi) \end{matrix} \end{array} \right] \triangleq \left[ \begin{array}{@{}c|c@{}} \mA & \vbeta \end{array} \right].
  \end{equation*}
  For $k, j = 1, 2, \dots, n - 1$, %
  $j \ne k$, we have:
  \begin{align*}
    A_{k,k} = & \diffp{\balance_k}{\pi_k}(\vpi) = - \sum_{j \in \gL \setminus \{k\}} %
    \diff*{\rhohat_{k, j}(\pi_k)}{\pi_k} > 0,                                        \\
    A_{k,j} = & \diffp{\balance_k}{\pi_j}(\vpi) =                                    %
    \diff*{\rhohat_{j, k}(\pi_j)}{\pi_j} < 0.
  \end{align*}
  Moreover, observe that
  \[
    \left|A_{k,k} \right| = \sum_{j \in \gL \setminus \{k\}} \left| \diff*{\rhohat_{k, j}(\pi_k)}{\pi_k} \right| >
    \sum_{j \in \gL \setminus \{k, n\}} \left| %
    \diff*{\rhohat_{k, j}(\pi_k)}{\pi_k} \right| = \sum_{j \in \gL \setminus \{k, n\}} \left|A_{j,k} \right|.
  \]
  By \Cref{lem:inverse-positive}, $\mA$ is invertible, %
  and this holds at any $\vpi$, \ie $\vbalance$ does not have any critical point.
  By the implicit function theorem, there exists an open set $\gU \in \R$ containing $\pi_n^\ast$ such that there exists a unique continuously differentiable function $u: \gU \to \R^{n-1}$ with $u(\pi_n^\ast) = (\pi_1^\ast, \dots, \pi_{n-1}^\ast)$ and $\vbalance(u(\pi_n), \pi_n) = 0$ for all $\pi_n \in \gU$.

  We now assume towards a contradiction, that $S(\vpiast) \neq m$. By definition of $\gS$, we must have $S(\vpiast) > m$.
  Let $\{\pi_n\supInd \}_{\anInt = 1}^\infty$ be a sequence in $\gU$ that converges to $\pi_n^\ast$ \st $\pi_n\supInd > \pi_n^\ast$ for all $\anInt = 1, 2, \dots$, and denote $\vpi\supInd \triangleq (u(\pi_n\supInd), \pi_n\supInd)$.
  Given the continuity of $u$ %
  and %
  $\Shat$, we know that $\Shat(\vpi\supInd)$ converges to $\Shat(\pi^\ast)$ as $a \rightarrow +\infty$. This implies for $a$ that is sufficiently large, $\Shat(\vpi\supInd) > m$ holds, which implies that $\vpi\supInd \in \{\vpi \in \gD | \Shat(\vpi) \geq m \}$.
  Since $\pi_n\supInd \in \gU$, we also know that $\vpi\supInd \in \gF$. This implies that $\vpi\supInd \in \gS$, and this contradicts the fact that $\vpiast$ is the point in $\gS$ with the largest $\pi_n$.
  This completes the proof of this lemma.
  \end{proof}

\subsection{Proof of Lemma~\ref{thm:differentiable}} \label{appx:proof_thm_differentiable}

\thmDifferentiable*

\begin{proof}
  We first define an auxiliary function $\vg: \R^{2n-1} \to \R^n$, where
  \begin{align}
    g_k(\vpi, \vphi) & \triangleq \sum_{i \in \gL} \qhat_{i, k}(c_{i, k} + d_{i, k} \pi_i + \phi_i - \phi_k) - \sum_{j \in \gL} \qhat_{k, j}(c_{k, j} + d_{k, j} \pi_k + \phi_k - \phi_j), ~k = 1, 2, \dots, n-1, \label{eq:g_1} \\
    g_n(\vpi, \vphi) & \triangleq  m - \sum_{i, j \in \gL} d_{i, j} \qhat_{i, j}(c_{i, j} + d_{i, j} \pi_i + \phi_i - \phi_j). \label{eq:g_2}
  \end{align}
  Intuitively, given \ref{cond:mc_rider_reloc_flow}, the first $n - 1$ components of $\vg(\vpi, \vphi)$ correspond to the total flow of drivers into each location $k = 1, \dots, n-1$ minus the flow of drivers out of location $k$ per unit of time, when trip prices are given by $p\subij = c_{i, j} + d_{i, j} \pi_i + \phi_i - \phi_j$.
    The last component of $\vg(\vpi, \vphi)$ is the difference between the amount of available drivers $m$ and the %
  amount of drivers needed to fulfill the flow.
  $\vg(\vpi, \vphi) = \vzero$ thereby implies that the flow balance constraints
  and the total supply constraint \ref{cond:mc_total_supply} both hold, \ie $\vpi$ clears the market given destination-based adjustment $\vphi$ (recall that the first $n-1$ flow balance constraints imply the flow-balance at location $n$).
  Fixing any $\vphi \in \R^{n-1}$, Lemma~\ref{lem:unique} implies that there exists a unique $\vpi \in \R^n$ \st $\vg(\vpi, \vphi) = \vzero$, which is denoted as $\vPi(\vphi)$.

  The function $\vg$ is continuously differentiable since $\vqhat$ is continuously differentiable. The Jacobian matrix of $\vg$ can be partitioned into two blocks, representing the Jacobian of $\vg$ \wrt $\vpi$ and $\vphi$, respectively:
  \begin{equation} \label{eq:Jacobian_g_pi_phi}
    (\D\vg) (\vpi, \vphi) = \left[ \begin{array}{@{}c|c@{}}
        \begin{matrix}
          \diffp{g_1}{\pi_1}(\vpi, \vphi) & \cdots & \diffp{g_1}{\pi_n}(\vpi, \vphi) \\
          \vdots                          & \ddots & \vdots                          \\
          \diffp{g_n}{\pi_1}(\vpi, \vphi) & \cdots & \diffp{g_n}{\pi_n}(\vpi, \vphi)
        \end{matrix} &
        \begin{matrix}
          \diffp{g_1}{\phi_1}(\vpi, \vphi) & \cdots & \diffp{g_1}{\phi_{n-1}}(\vpi, \vphi) \\
          \vdots                           & \ddots & \vdots                               \\
          \diffp{g_n}{\phi_1}(\vpi, \vphi) & \cdots & \diffp{g_n}{\phi_{n-1}}(\vpi, \vphi)
        \end{matrix}
      \end{array} \right] \triangleq \left[ \begin{array}{@{}c|c@{}} \D_\vpi \vg & \D_\vphi \vg \end{array} \right].
  \end{equation}

  When $\D_\vpi \vg$ is invertable,
  the implicit function theorem%
  implies that $\vPi(\vphi)$ is continuously differentiable, with %
  \begin{equation}
    \D\vPi(\vphi) = - (\D_\vpi \vg)^{-1} \D_\vphi \vg. \label{eq:Jacobian_of_pi}
  \end{equation}
  The rest of this proof establishes that %
  $\D\vPi(\vphi)$ is invertable, and that the entries of $\D_\vpi \vg$ and $\D_\vphi \vg$ depend
  on the trip durations $\{ d\subij \}_{(i,j) \in \gL^2}$ and the local slope of the augmented demand function $\{\qhat_{i, j}'(p_{i, j})\}_{(i,j) \in \gL^2}$, both of which can be observed by the platform.

  \medskip

  We start from proving that $\D_\vpi \vg$ is invertable.
  First, partition $\D_\vpi \vg$ into four blocks, with $\mA \in \R^{(n-1) \times (n-1)}$, $\vbeta \in \R^{n-1}$, $\vgamma \in \R^{n-1}$, and $\lambda \in \R$:
  \begin{align}
    \D_\vpi \vg \triangleq \begin{bmatrix}
                             \mA          & \vbeta  \\
                             \vgamma^\trp & \lambda
                           \end{bmatrix}. \label{eq:Jac_g_wrt_pi}
  \end{align}
  When $A$ is invertable, the determinant of $\D_\vpi \vg$ can be written as:
  \[
    \det(\D_\vpi \vg) = \det(\mA) \det(\lambda - \vgamma^\trp \mA^{-1} \vbeta) \ne 0.
  \]
  We will prove that %
  $\det(\D_\vpi \vg) \neq 0$
  by proving that (i) $\mA$ is indeed invertable, and (ii) $\lambda - \bm\gamma^\trp \mA^{-1} \vbeta > 0$.
    
  To prove that (i) $\mA$ is indeed invertable, observe that the diagonal entries of $\mA$ is given by
  \begin{align}
    A_{k,k} = %
    \diffp{g_k}{\pi_k}(\vpi, \vphi) = - \sum_{j \in \gL \setminus \{k\}} d_{k, j} \qhat_{k, j}'(c_{k, j} + d_{k, j} \pi_k + \phi_k - \phi_j), ~\forall k = 1, \dots, n-1,   \label{eq:Jac_wrt_pi_diag}
  \end{align}
  and that the off-diagonal entries are of the form%
  \begin{align}
    A_{k,\ell} =
    \diffp{g_k}{\pi_\ell}(\vpi, \vphi) = d_{\ell, k} \qhat_{\ell, k}'(c_{\ell, k} + d_{\ell, k} \pi_\ell + \phi_\ell - \phi_k), ~\forall k = 1, \dots, n-1, ~\forall \ell = 1, \dots, n-1,~ \ell \neq k \label{eq:Jac_wrt_pi_off_diag}
  \end{align}
  Since $\qhat\subij(\cdot)$ is strictly decreasing for all $(i,j) \in \gL^2$, we know that $A_{k,k} > 0$ for all $k \leq n-1$, and that $A_{k,\ell} < 0$ for all $k \leq n-1$ and all $\ell \neq k$. Moreover, note that
  \begin{align*}
    |A_{k,k}| - \sum_{\ell \neq k} |A_{\ell,k}| = - d_{k,n} \qhat_{k, n}'(c_{k, n} + d_{k, n} \pi_k + \phi_k - \phi_n) > 0.
  \end{align*}
  It follows from \Cref{lem:inverse-positive} that $\mA$ is invertible and all entries of $\mA^{-1}$ are strictly positive.

  \smallskip

  To prove that (ii) $\lambda - \bm\gamma^\trp \mA^{-1} \bm\beta > 0$, observe that
  \begin{align}
    \beta_k       & = \diffp{g_k}{\pi_n}(\vpi, \vphi) = d_{\ell, k} \qhat_{n, k}'(c_{n, k} + d_{n, k} \pi_n + \phi_n - \phi_k) < 0, ~\forall k = 1, \dots, n-1, \label{eq:Jac_wrt_pi_beta}                                       \\
    \gamma_{\ell} & = \diffp{g_n}{\pi_\ell}(\vpi, \vphi) = -\sum_{j \in \gL} d_{\ell, j}^2 q_{\ell, j}'(c_{\ell, j} + d_{\ell, j} \pi_\ell + \phi_\ell - \phi_j) > 0, ~\forall \ell = 1, \dots, n-1, \label{eq:Jac_wrt_pi_gamma} \\
    \lambda       & = \diffp{g_n}{\pi_n}(\vpi, \vphi) = -\sum_{j \in \gL} d_{n, j}^2 q_{n, j}'(c_{n, j} + d_{n, j} \pi_h + \phi_n - \phi_j) > 0. \label{eq:Jac_wrt_pi_lambda}
  \end{align}
  As a result, $\bm\gamma^\trp \mA^{-1} \bm\beta < 0$, and thus $\lambda - \bm\gamma^\trp \mA^{-1} \bm\beta > 0$.

  \medskip

  What is left to show is that the entries of $\D_\vpi \vg$ and $\D_\vphi \vg$ depend
  on the trip durations and the local slope of the augmented demand function. For $\D_\vpi \vg$, this is shown in Equations \eqref{eq:Jac_wrt_pi_diag}--\eqref{eq:Jac_wrt_pi_lambda}.
  For $\D_\vphi \vg$, %
  it is convenient to partition the Jacobian matrix as
  \begin{align}
    \D_\vphi \vg \triangleq \begin{bmatrix}
                              \mB \\
                              \vtheta^\trp \label{eq:Jac_g_wrt_phi}
                            \end{bmatrix},
  \end{align}
  where $\mB \in \R^{(n-1) \times (n-1)}$ and $\vtheta \in \R^{(n-1)}$.
  For $k = 1, 2, \dots, n - 1$, the diagonal entries of $\mB$ are given by
  \begin{align}
    B_{k,k} = %
    \diffp{g_k}{\phi_k}(\vpi, \vphi) = - \sum_{i \in \gL \setminus \{k\}} \qhat_{i, k}'(c_{i, k} + d_{i, k} \pi_i + \phi_i - \phi_k) -
    \sum_{j \in \gL \setminus \{k\}} \qhat_{k, j}'(c_{k, j} + d_{k, j} \pi_k + \phi_k - \phi_j). \label{eq:Jac_wrt_phi_diag}
  \end{align}
  For $k = 1, 2, \dots, n - 1$ and $\ell = 1, 2, \dots, n - 1$ with $\ell \ne k$, the off-diagonal entry
  \begin{align}
    B_{k,\ell} = %
    \diffp{g_k}{\phi_\ell}(\vpi, \vphi) = \qhat_{\ell, k}'(c_{\ell, k} + d_{\ell, k} \pi_\ell + \phi_\ell - \phi_k) + %
    \qhat_{k, \ell}'(c_{k, \ell} + d_{k, \ell} \pi_k + \phi_k - \phi_\ell).
    \label{eq:Jac_wrt_phi_off_diag}
  \end{align}
  Finally, for $\ell = 1, 2, \dots, n-1$,
  \begin{align}
    \theta_\ell = %
    \diffp{g_n}{\phi_\ell}(\vpi, \vphi) = - \sum_{j \in \gL \setminus \{\ell\}} d_{\ell, j} \qhat_{\ell, j}'(c_{\ell, j} + d_{\ell, j} \pi_\ell + \phi_\ell - \phi_j) +
    \sum_{i \in \gL \setminus \{\ell\}} d_{i, \ell} \qhat_{i, \ell}'(c_{i, \ell} + d_{i, \ell} \pi_i + \phi_i - \phi_\ell).
    \label{eq:Jac_wrt_phi_theta}
  \end{align}
  This completes the proof of this lemma.
\end{proof}

\subsection{The Update Direction}
\label{appx:the_update_direction}

In this section, we prove the following result, which shows that any the market-clearing outcome corresponding to any $
  \vphi \in \R^{n-1}$, there exists a unique $\vphi'\in \R^{n-1}$ at which all $\pi_i$'s are equalized in the linear approximation of $\vPi$, \ie there exists $\xi \in \R$ \st  $\Pi(\vphi) + \D\vPi(\vphi)(\vphi'-\vphi) = \xi \vone_n$.
This is the direction towards which the INP mechanism updates the OD-based price adjustments week-over-week.

\begin{lemma} \label{clm:full_rank}
  The matrix $\begin{bmatrix} \D\vPi(\vphi) & -\vone \end{bmatrix} \in \R^{n \times n}$ has full rank at any $\vphi \in \R^{n-1}$.
\end{lemma}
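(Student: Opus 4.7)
The plan is to reduce the claim to the positive definiteness of an associated $n \times n$ symmetric matrix. From the proof of Lemma~\ref{thm:differentiable} we have $\D\vPi(\vphi) = -(\D_\vpi \vg)^{-1}\D_\vphi \vg$ with $\D_\vpi \vg$ invertible, so
\[
\begin{bmatrix} \D\vPi(\vphi) & -\vonen \end{bmatrix} = -(\D_\vpi \vg)^{-1} \begin{bmatrix} \D_\vphi \vg & (\D_\vpi \vg)\, \vonen \end{bmatrix}.
\]
It therefore suffices to show the $n \times n$ matrix $\mM \triangleq \begin{bmatrix} \D_\vphi \vg & (\D_\vpi \vg)\, \vonen \end{bmatrix}$ is nonsingular at any $\vphi \in \R^{n-1}$.

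Next I would compute $\mM$ explicitly in block form from the partial derivatives \eqref{eq:Jac_wrt_pi_diag}--\eqref{eq:Jac_wrt_phi_theta}, obtaining
\[
\mM = \begin{pmatrix} \mL_{\mathrm{sub}} & \va \\ \va^\trp & D \end{pmatrix},
\]
where $\mL_{\mathrm{sub}} \in \R^{(n-1)\times(n-1)}$ is the top-left principal submatrix of the weighted graph Laplacian on $\gL$ with positive edge weights $w_{k,\ell} = -(\qhat'_{k,\ell}(p_{k,\ell}) + \qhat'_{\ell,k}(p_{\ell,k}))$; $D = -\sum_{i,j} d_{i,j}^2 \qhat'_{i,j}(p_{i,j}) > 0$; and $\va \in \R^{n-1}$ has entries $a_k = \sum_{i\ne k} d_{i,k}\qhat'_{i,k}(p_{i,k}) - \sum_{j\ne k} d_{k,j}\qhat'_{k,j}(p_{k,j})$. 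The nontrivial part of the symmetry is the identity $\theta_\ell = a_\ell$, which one verifies by directly comparing \eqref{eq:Jac_wrt_phi_theta} with the corresponding component of $(\D_\vpi \vg)\vonen$ computed from \eqref{eq:Jac_wrt_pi_diag}--\eqref{eq:Jac_wrt_pi_off_diag}.

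The heart of the proof is a sum-of-squares identity. For any $\vv = (\vdelta, \xi) \in \R^n$, extend $\vdelta$ to $\tilde{\vdelta} = (\vdelta, 0) \in \R^n$. The standard Laplacian identity gives $\vdelta^\trp \mL_{\mathrm{sub}} \vdelta = \sum_{i,j}(-\qhat'_{i,j})(\tilde\delta_i - \tilde\delta_j)^2$, and by direct bookkeeping $2\xi\, \va^\trp \vdelta = 2\xi \sum_{i,j}(-d_{i,j}\qhat'_{i,j})(\tilde\delta_i - \tilde\delta_j)$ and $\xi^2 D = \sum_{i,j}(-\qhat'_{i,j})\,\xi^2 d_{i,j}^2$. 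These three pieces complete the square into
\[
\vv^\trp \mM \vv = \sum_{i,j \in \gL}\bigl(-\qhat'_{i,j}(p_{i,j})\bigr)\bigl[(\tilde\delta_i - \tilde\delta_j) + \xi d_{i,j}\bigr]^2 \,\geq\, 0.
\]
Since every coefficient $-\qhat'_{i,j}(p_{i,j})$ is strictly positive, $\vv^\trp \mM \vv = 0$ forces $\tilde\delta_i - \tilde\delta_j + \xi d_{i,j} = 0$ for every $(i,j) \in \gL^2$. Taking $i = j$ gives $\xi d_{i,i} = 0$, and the model assumption $d_{i,i} > 0$ forces $\xi = 0$; then $\tilde\delta_i = \tilde\delta_j$ for all $i,j$, and $\tilde\delta_n = 0$ forces $\tilde{\vdelta} = \vzero$, hence $\vv = \vzero$. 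Thus $\mM \succ 0$ and in particular nonsingular, giving the claim.

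The main obstacle will be spotting the symmetric block structure of $\mM$ and the completion of the square: the cross term $2\xi\, \va^\trp \vdelta$ has to mesh exactly with the Laplacian quadratic form and the $\xi^2 D$ term, which happens because the entries of $\va$ pair the quantities $d_{i,j}\qhat'_{i,j}$ in precisely the way needed. The interpretation is that a perturbation $(\vdelta, \xi)$ shifts the trip price $p_{i,j}$ to first order by $\tilde\delta_i - \tilde\delta_j + \xi d_{i,j}$, and $\vv^\trp \mM \vv$ is the demand-weighted squared price change across all trips; the self-loop assumption $d_{i,i} > 0$ is what ultimately rules out $\xi \neq 0$, since a self-loop price change $\xi d_{i,i}$ cannot be undone by any OD-based adjustment $\phi_i - \phi_i = 0$.
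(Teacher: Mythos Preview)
Your proof is correct and follows the same overall architecture as the paper's: reduce via $\D\vPi = -(\D_\vpi \vg)^{-1}\D_\vphi \vg$ to the nonsingularity of $\mM = \begin{bmatrix} \D_\vphi \vg & (\D_\vpi \vg)\vonen \end{bmatrix}$, identify the symmetric block structure $\mM = \begin{pmatrix} \mB & \vtheta \\ \vtheta^\trp & \barvgamma^\trp\vonen \end{pmatrix}$ (the paper's notation), and show $\mM$ is positive definite. The only difference is in the last step. The paper takes a Schur-complement route: it factors $\mM$ as a block LDU and reduces to showing $(\barvgamma^\trp\vonen)(\vzeta^\trp\mB\vzeta) > (\vzeta^\trp\vtheta)^2$ for $\vzeta \neq \vzero$, which it proves by writing all three quantities as weighted sums over $(i,j) \in \gL^2$ with weights $-\qhat'_{i,j}$ and applying Cauchy--Schwarz. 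You instead complete the square directly to obtain $\vv^\trp\mM\vv = \sum_{i,j}(-\qhat'_{i,j})[(\tilde\delta_i - \tilde\delta_j) + \xi d_{i,j}]^2$. These are two presentations of the same inequality (Cauchy--Schwarz is exactly the statement that this square is nonnegative), and both hinge on the self-loop $d_{i,i} > 0$ to rule out equality. Your packaging is a bit more direct---it avoids the LDU detour and the separate check that $\barvgamma^\trp\vonen > 0$---and the price-perturbation interpretation you give is a nice bonus.
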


\begin{proof}
  Recall from the proof of \Cref{thm:differentiable} that the Jacobian of $\vPi$ %
  is of the form $\D\vPi(\vphi) = - (\D_\vpi \vg)^{-1} \D_\vphi \vg$, where $\vg$ is given by \eqref{eq:g_1} and \eqref{eq:g_2} and $\D_\vpi \vg$ is full rank.
  The matrix $\begin{bmatrix} \D\vPi & -\vone \end{bmatrix}$ can therefore be written as
  \[
    \begin{bmatrix} \D\vPi & -\vone \end{bmatrix}
    = \begin{bmatrix} -(\D_\vpi \vg)^{-1} \D_\vphi \vg & -\vone \end{bmatrix}
    = - (\D_\vpi \vg)^{-1} \begin{bmatrix} \D_\vphi \vg & (\D_\vpi \vg) \vone \end{bmatrix}.
  \]

  As a result, it is sufficient to show that $\begin{bmatrix} \D_\vphi \vg & (\D_\vpi \vg) \vone \end{bmatrix}$ also has full rank. To prove this,
  we partition $\D_\vpi \vg$ and $\D_\vphi \vg$ as:
  \begin{align*}
    \D_\vpi \vg  & =
    \begin{bmatrix}
      \barmA \\
      \barvgamma^\trp
    \end{bmatrix},   \\
    \D_\vphi \vg & =
    \begin{bmatrix}
      \mB \\
      \vtheta^\trp
    \end{bmatrix}.
  \end{align*}
  Here, $\mB \in \R^{(n-1) \times (n-1)}$ and $\vtheta \in \R^{n-1}$ are exactly the same as in the partition \eqref{eq:Jac_g_wrt_phi}, given by \eqref{eq:Jac_wrt_phi_diag}--\eqref{eq:Jac_wrt_phi_theta}.
  On the other hand,
  $\barmA \in \R^{(n-1) \times n}$ and $\barvgamma \in \R^n$ %
  are given by
  \begin{align*}
    \barmA          & = \begin{bmatrix} \mA & \vbeta \end{bmatrix},           \\
    \barvgamma^\trp & = \begin{bmatrix} \vgamma^\trp & \lambda \end{bmatrix},
  \end{align*}
  where $\mA$, $\vbeta$, $\vgamma$, and $\lambda$ are components of the partition \eqref{eq:Jac_g_wrt_pi} of $\D_\vpi \vg$ (see \eqref{eq:Jac_wrt_pi_diag}--\eqref{eq:Jac_wrt_pi_lambda}).

  \smallskip

  It is straightforward to verify that
  \begin{itemize}
    \item the row sum of $\barmA$ is equal to $\vtheta$, \ie $\barmA \vone = \vtheta$, following from \eqref{eq:Jac_wrt_pi_diag}--\eqref{eq:Jac_wrt_pi_beta} and \eqref{eq:Jac_wrt_phi_theta},
    \item $\barvgamma^\trp \vone > 0$ because $\bargamma_i > 0$ for all $i \in \gL$ given \eqref{eq:Jac_wrt_pi_gamma} and \eqref{eq:Jac_wrt_pi_lambda}, and
    \item $\mB$ is symmetric, given \eqref{eq:Jac_wrt_phi_off_diag}.
  \end{itemize}

  \smallskip

  The matrix $\begin{bmatrix} \D_\vphi \vg & (\D_\vpi \vg) \vone \end{bmatrix}$ can therefore be written as: %
  \[
    \begin{bmatrix} \D_\vphi \vg & (\D_\vpi \vg) \vone \end{bmatrix} =
    \begin{bmatrix}
      \mB          & \barmA \vone          \\
      \vtheta^\trp & \barvgamma^\trp \vone
    \end{bmatrix} =
    \begin{bmatrix}
      \mB          & \vtheta               \\
      \vtheta^\trp & \barvgamma^\trp \vone
    \end{bmatrix} =
    \begin{bmatrix}
      \mI    & \vtheta / \barvgamma^\trp \vone \\
      \vzero & 1
    \end{bmatrix}
    \begin{bmatrix}
      \mB - \vtheta \vtheta^\trp / \barvgamma^\trp \vone & \vzero                \\
      \vzero^\trp                                        & \barvgamma^\trp \vone
    \end{bmatrix}
    \begin{bmatrix}
      \mI                                  & \vzero \\
      \vtheta^\trp / \barvgamma^\trp \vone & 1
    \end{bmatrix}.\]
  It suffices to show that $\mB - \vtheta \vtheta^\trp / \barvgamma^\trp \vone$ is positive definite.
  To this end, we show that for any $n-1$ dimensional vector $\vzeta  \in \R^{n-1}$ such that $\vzeta \ne \vzero_{n-1}$,
  $\vzeta^\trp (\barvgamma^\trp \vone \mB - \vtheta \vtheta^\trp) \vzeta  > 0$.

  \medskip

  In the rest of this proof, we make use of the structure of $\barmA$ and $\mB$ to show that
  \begin{align}
    \vzeta^\trp (\barvgamma^\trp \vone \mB - \vtheta \vtheta^\trp) \vzeta = %
    \vzeta^\trp \barvgamma^\trp \vone \mB \vzeta  - \vzeta^\trp \vtheta \vtheta^\trp \vzeta = %
    \barvgamma^\trp \vone (\vzeta^\trp  \mB \vzeta)  - (\vzeta^\trp \vtheta)^2 > 0, %
    ~ \forall \vzeta  \in \R^{n-1},~\vzeta \ne \vzero_{n-1}, \label{eq:no_zero_eigen_val}
  \end{align}
  or we have a counter-example of the Cauchy–Schwarz inequality.
  For simplicity of %
  notation, we define $\zeta_n \triangleq 0$, despite the fact that $\vzeta$ is a vector with dimension $n - 1$.
  First, given \eqref{eq:Jac_wrt_pi_gamma} and \eqref{eq:Jac_wrt_pi_lambda} (recall that $\bargamma_n = \lambda$), the element-wise sum of $\barvgamma$ can be written as
  \[
    \barvgamma^\trp \vone = - \sum_{i=1}^n \sum_{j=1}^n d_{i, j}^2 \qhat_{i, j}'.
  \]
  Combining \eqref{eq:Jac_wrt_phi_diag} and \eqref{eq:Jac_wrt_phi_off_diag}, $\vzeta^\trp \mB \vzeta$ %
  can be rewritten as:
  \begin{align*}
    \vzeta^\trp \mB \vzeta
     & = \sum_{i=1}^{n-1} \sum_{j=1}^{n-1} \zeta_i \zeta_j \biggl( \qhat_{j, i}' + \qhat_{i, j}' - \I[j = i] \biggl( \sum_{k=1}^n \qhat_{k, i}' + \sum_{\ell=1}^n \qhat_{i, \ell}' \biggr) \biggr)   \\
     & = \sum_{i=1}^{n} \sum_{j=1}^{n} \zeta_i \zeta_j \biggl( \qhat_{j, i}' + \qhat_{i, j}' - \I[j = i] \biggl( \sum_{k=1}^n \qhat_{k, i}' + \sum_{\ell=1}^n \qhat_{i, \ell}' \biggr) \biggr)       \\
     & = - \sum_{i=1}^{n} \zeta_i^2 \biggl( \sum_{k=1}^n \qhat_{k, i}' + \sum_{\ell=1}^n \qhat_{i, \ell}' \biggr) + \sum_{i=1}^{n} \sum_{j=1}^{n} \zeta_i \zeta_j (\qhat_{i, j}' + \qhat_{j, i}')    \\
     & = - \biggl( \sum_{i=1}^n \sum_{j=1}^{n} \zeta_j^2 \qhat_{i, j}' + \sum_{i=1}^{n} \sum_{j=1}^n \zeta_i^2 \qhat_{i, j}' \biggr) + 2 \sum_{i=1}^{n} \sum_{j=1}^{n} \zeta_i \zeta_j \qhat_{i, j}' \\
     & = - \sum_{i=1}^n \sum_{j=1}^n (\zeta_i - \zeta_j)^2 \qhat_{i, j}',
  \end{align*}
  and
  $\vzeta^\trp \vtheta$ %
  can be written as
  \begin{align*}
    \vzeta^\trp \vtheta
     & = \sum_{i=1}^{n-1} \zeta_i \biggl( - \sum_{j=1}^n d_{i, j} \qhat_{i, j}' + \sum_{k=1}^n d_{k, i} \qhat_{k, i}' \biggr)      \\
     & = \sum_{i=1}^{n} \zeta_i \biggl( - \sum_{j=1}^n d_{i, j} \qhat_{i, j}' + \sum_{k=1}^n d_{k, i} \qhat_{k, i}' \biggr)        \\
     & = - \sum_{i=1}^{n} \sum_{j=1}^n \zeta_i d_{i, j} \qhat_{i, j}' + \sum_{i=1}^n \sum_{j=1}^{n} \zeta_j d_{i, j} \qhat_{i, j}' \\
     & = - \sum_{i=1}^n \sum_{j=1}^n (\zeta_i - \zeta_j) d_{i, j} \qhat_{i, j}'.
  \end{align*}
  By Cauchy–Schwarz inequality, $(\barvgamma^\trp \vone) (\vzeta^\trp \mB \vzeta) \ge (\vzeta^\trp \vtheta)^2$.
  Moreover, the equality is achieved if and only if the ratios between the $d_{i, j}$'s and $\zeta_i - \zeta_j$'s are the same for all $i, j = 1, \dots, n$.
  This is impossible because while $d_{i, j}$ is always positive, $\zeta_i - \zeta_j$ is zero when $i = j$ and $\zeta_i - \zeta_n = \zeta_i \ne 0$ for some $i \leq n-1$.
  Therefore, $\vzeta^\trp (\barvgamma^\trp \vone \mB - \vtheta \vtheta^\trp) \vzeta > 0$ must hold.
  This completes the proof %
  that
  $\begin{bmatrix} \D_\vphi \vg & (\D_\vpi \vg) \vone \end{bmatrix}$ is positive definite, and also the proof of this lemma.
\end{proof}

\subsection{Properties of the Lyapunov Function}
\label{appx:proof_lem_alternative_obj}

\noindent{}We establish in this section useful properties of the alternative objective function $f$ as defined in \eqref{eq:quadratic_objective}.
We use $\vzero_\ell$ and $\vone_\ell$ denote $\ell$-dimensional vectors of all zeros and all ones, respectively, and drop the subscript when the dimension is clear from the context.

\begin{restatable}{lemma}{lemObjFn} \label{lem:alternative_obj}
  The function $f$ as defined in \eqref{eq:quadratic_objective} is %
  continuously differentiable,
  has a single critical point corresponding to the unique global minimum, and is coercive (\ie every sublevel set is bounded).
\end{restatable}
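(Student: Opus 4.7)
My plan is to address the three claims in sequence, leaning on the infrastructure already established for the mapping $\vPi$.

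\textbf{Continuous differentiability and the gradient.} Because $\vPi$ is $C^1$ by \Cref{thm:differentiable} and $f$ is a polynomial in the entries of $\vPi$, $f$ is itself $C^1$. Setting $\vmu(\vphi)\triangleq\vPi(\vphi)-\bar\Pi(\vphi)\vone$ with $\bar\Pi(\vphi)\triangleq\tfrac{1}{n}\vone^{\trp}\vPi(\vphi)$, differentiating $f=\sum_i(\Pi_i-\bar\Pi)^2$ entry by entry and using $\vone^{\trp}\vmu(\vphi)=0$ to kill the cross-term in $\nabla\bar\Pi$ yields
\begin{equation*}
\nabla f(\vphi)=2\,\D\vPi(\vphi)^{\trp}\vmu(\vphi).
\end{equation*}

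\textbf{Critical points are global minima, and the critical point exists and is unique.} Suppose $\nabla f(\vphi)=\vzero$. Then $\D\vPi(\vphi)^{\trp}\vmu(\vphi)=\vzero$, while $\vone^{\trp}\vmu(\vphi)=0$ by construction. The two identities together place $\vmu(\vphi)$ in the kernel of the transpose of $[\D\vPi(\vphi)\ \, -\vone]$, which is of full rank by \Cref{clm:full_rank}, forcing $\vmu(\vphi)=\vzero$. Thus every entry of $\vPi(\vphi)$ equals $\bar\Pi(\vphi)$ and $f(\vphi)=0$, the infimum. To produce such a $\vphi$ and establish uniqueness, I would construct an auxiliary economy in which the augmented demand $\qhat_{i,j}=q_{i,j}+\qtilde_{i,j}$ replaces $q_{i,j}$, then write down the analogs of the welfare primal--dual pair \eqref{eq:primal}--\eqref{eq:dual}. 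Strict monotonicity of $\qhat_{i,j}$ keeps the auxiliary primal strictly concave, and the analogs of \Cref{lem:welfare_theorem} and \Cref{prop:dual_uniqueness} deliver an optimal dual solution $(\omega^\star,\vphi^\star)$ that is unique once $\phi_n^\star=0$ is fixed. By construction, the uniform multiplier vector $\omega^\star\vone$ satisfies the market-clearing conditions of \Cref{def:MC} at $\vphi^\star$, so $\vPi(\vphi^\star)=\omega^\star\vone$, $f(\vphi^\star)=0$, and $\vphi^\star$ is the only critical point of $f$.

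\textbf{Coercivity.} This is the step I expect to take the most care. Fixing $\phi_n=0$ and assuming $\|\vphi^{(k)}\|\to\infty$, let $i^{-}$ and $i^{+}$ attain the minimum and maximum coordinates of $\vphi^{(k)}$; since $0$ is one of the coordinates, the spread $\phi_{i^{+}}^{(k)}-\phi_{i^{-}}^{(k)}\to\infty$. Non-negativity of the price $p_{i^{-},i^{+}}^{(k)}=c_{i^{-},i^{+}}+d_{i^{-},i^{+}}\pi_{i^{-}}^{(k)}+\phi_{i^{-}}^{(k)}-\phi_{i^{+}}^{(k)}$ then forces $\pi_{i^{-}}^{(k)}\to+\infty$. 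On the other hand, $\qhat_{i,j}(p)\to 0$ as $p\to\infty$ together with the total-supply condition \ref{cond:mc_total_supply} prevents every $\pi_i^{(k)}$ from diverging in unison: if all multipliers went to $+\infty$, every price would too, contradicting $\sum_{i,j}d_{i,j}\qhat_{i,j}(p_{i,j}^{(k)})=m$. Consequently some multiplier remains bounded above while another diverges, and $f(\vphi^{(k)})\to\infty$. The delicate part is ruling out the scenario in which the diverging $\phi$'s and multipliers conspire to keep prices finite for certain trips; I would handle this by partitioning $\gL$ according to the growth rate of $\phi_i^{(k)}$ and inspecting the flow-balance equations on the slowest-growing block to extract a multiplier whose upper envelope is bounded, quantifying the resulting lower bound on the variance of $\vpi^{(k)}$.
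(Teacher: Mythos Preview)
Your treatment of continuous differentiability, the critical-point argument via \Cref{clm:full_rank}, and the existence/uniqueness of the minimizer via an auxiliary primal--dual pair with $\qhat_{i,j}$ in place of $q_{i,j}$ all match the paper's proof essentially line for line.

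The coercivity argument, however, has a genuine gap. Your claim that ``if all multipliers went to $+\infty$, every price would too'' is false: for a trip $(i,j)$ with $\phi_i^{(k)}-\phi_j^{(k)}\to-\infty$, the price $p_{i,j}^{(k)}=c_{i,j}+d_{i,j}\pi_i^{(k)}+\phi_i^{(k)}-\phi_j^{(k)}$ can stay bounded even as $\pi_i^{(k)}\to\infty$, so one cannot conclude that $\sum_{i,j} d_{i,j}\qhat_{i,j}(p_{i,j}^{(k)})$ vanishes. Your conclusion that ``some multiplier remains bounded above while another diverges'' is therefore not yet established, and your proposed fix---partitioning by the growth rate of $\phi_i^{(k)}$ and extracting a multiplier with bounded upper envelope---does not obviously deliver a \emph{bounded} multiplier rather than merely a more slowly diverging one.

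The paper avoids this by taking the contrapositive, which is both cleaner and sidesteps the issue. It assumes a sublevel set $\{f\le U^\ast\}$ is unbounded and works inside it; the constraint $f(\vphi)\le U^\ast$ immediately gives a spread bound $\max_i\Pi_i(\vphi)-\min_i\Pi_i(\vphi)\le 2\sqrt{U^\ast}$. Price non-negativity on a trip between the extreme-$\phi$ locations then forces $\max_i\Pi_i$ large, and the spread bound drags $\min_i\Pi_i$ up with it. Now every trip $(i,j)$ with $\phi_i\ge\phi_j$ has a large price and hence flow at most $\varepsilon$. The remaining ``uphill'' trips are controlled not via prices but via flow balance: a short combinatorial lemma (\Cref{clm:bound_on_total_supply}) shows that if all flows in the direction of decreasing $\phi$ are at most $\varepsilon$, then the total flow is below $(n+n^2+n^3)\varepsilon$, contradicting $\sum_{i,j} d_{i,j}y_{i,j}=m$. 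Your direct approach could be salvaged with this same flow-balance lemma, but the contrapositive framing makes the reduction to it transparent and removes the need to track which specific multiplier stays bounded.
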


\newcommand{\phiStar}{\phi^\star}
\newcommand{\vphiStar}{\vphi^\star}
\newcommand{\vyStar}{\vy^\star}
\newcommand{\omegStar}{\omega^\star}
\newcommand{\pStar}{p^\star}
\newcommand{\vpStar}{\vp^\star}

\begin{proof}[Proof of Lemma~\ref{lem:alternative_obj}]
We first prove that $f$ has a unique minimizer $\vphiStar \in \R^{n-1}$, achieving $f(\vphiStar) = 0$.
Given any augmented economy $(\vd, \vc, \vq, m, \vqtilde)$,
consider the following %
optimization problem, where $\hat v(\cdot) = \qhat^{-1}(\cdot)$ is the inverse function of $\qhat(\cdot)$ (recall that %
$\qhat\subij(\cdot) = q\subij(\cdot) + q\subij(\cdot)$ is strictly decreasing for all $(i,j) \in \gL^2$):
\begin{maxi!}
{\vy \in \R^{n^2}}{\sum_{i, j \in \gL} \biggl( \int_0^{y_{i, j}} \hat v_{i, j}(s) \dl s - c_{i, j} y_{i, j} \biggr)}{\label{eq:primal_no_relocation}}{}
\addConstraint{\sum_{j \in \gL} y_{k, j}}{= \sum_{i \in \gL} y_{i, k}, \quad}{\forall k = 1, 2, \dots, n - 1 \label{eq:primal_no_reloc_flow_balance}}
\addConstraint{\sum_{i, j \in \gL} d_{i, j} y_{i, j}}{= m \label{eq:primal_no_reloc_total_supply}}
\end{maxi!}

Let $\vphi = (\phi_1, \dots, \phi_{n-1})$ denote the dual variables corresponding to %
\eqref{eq:primal_no_reloc_flow_balance},
and let $\omega$ be the dual variable corresponding to the total supply constraint \eqref{eq:primal_no_reloc_total_supply}, and fix $\phi_n \triangleq 0$.%
\footnote{Note that the first $n-1$ flow balance constraints \eqref{eq:primal_no_reloc_flow_balance} imply that location $n$ is also flow balanced. For convenience of notation, we drop the %
  redundant flow balance
  constraint, which is equivalent to fixing the %
  dual variable at $\phi_n = 0$.}
The dual of \eqref{eq:primal_no_relocation} can be written as
\begin{mini!}
{\omega \in \R, \, \vphi \in \R^{n-1}}{m \omega + \sum_{i, j \in \gL} \int_{c_{i, j} + d_{i, j} \omega + \phi_i - \phi_j}^\infty \hat q_{i, j}(r) \dl r, \label{eq:dual_obj_no_relocation}}{%
}{}
\end{mini!}
and difference between primal and dual objectives %
is of the form
\begin{equation} \label{eq:diff_primal_dual_no_relocation}
  \sum_{i, j \in \gL} \int_{y_{i, j}}^{\hat q_{i, j}(c_{i, j} + d_{i, j} \omega + \phi_i - \phi_j)} (\hat v_{i, j}(s) - (c_{i, j} + d_{i, j} \omega + \phi_i - \phi_j)) \dl s + \omega \Bigl( m - \sum_{i, j \in \gL} d_{i, j} y_{i, j} \Bigr) + \sum_{k \in \gL} \phi_k \Bigl( \sum_{i \in \gL} y_{i, k} - \sum_{j \in \gL} y_{k, j} \Bigr).
\end{equation}

Let $\vyStar \in \R^{n^2}$ and $(\vphiStar, \omegStar)$ be optimal solutions of the primal \eqref{eq:primal_no_reloc_total_supply} and dual \eqref{eq:dual_obj_no_relocation}, respectively. %
The primal-dual difference \eqref{eq:diff_primal_dual_no_relocation} is zero for $\vyStar \in \R^{n^2}$ and $(\vphiStar, \omegStar)$
since the primal problem is convex.
It is straightforward to verify that \eqref{eq:diff_primal_dual_no_relocation} being zero implies
that the surge multiplier $\vpi = \omegStar \vone_n$ (\ie $\pi_1 = \dots = \pi_n = \omegStar$) clears the market when the OD-based adjustments are given by $\vphiStar$.
In other words, $\vPi(\vphiStar) = \omegStar \vone$, implying $f(\vphiStar) = 0$. This is a global minimum, since $f$ is non-negative by construction.

\smallskip

We now prove that the global minimum is unique.
Suppose %
$\vphi' \in \R^{n-1}$ %
also minimize \eqref{eq:quadratic_objective}, \ie $f(\vphi') = 0$.
We know that the market-clearing surge multipliers $\vpi' = \vPi(\vphi')$ must also satisfy $\pi'_1 = \dots = \pi'_n$.
Let $\omega' = \pi'_1 $, and let $\vy'$ be the resulting driver flow under this market-clearing outcome. $\vy'$ is a feasible primal solution. We can verify that the market-clearing conditions in Definition~\ref{def:MC} imply that the difference between the primal objective achieved by  $\vy'$ and the dual objective achieved by $(\vphi', \omega')$ is zero, meaning that both solutions are optimal.
With an argument similar to the proof of \Cref{prop:dual_uniqueness}, we can prove that the optimal dual solution is unique.
This implies that $\vphiStar = \vphi'$, \ie \eqref{eq:quadratic_objective} has a unique minimum.

\bigskip

To complete the proof of this theorem, we establish the following  results.

\begin{restatable}%
  {proposition}{clmCriticalPt} \label{prop:critical_point_is_global_min}
  Every stationary point of %
  $f$ is a global minimum.
\end{restatable}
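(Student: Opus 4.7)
The plan is to characterize stationary points of $f$ via the vanishing of an explicit gradient formula, and then invoke the full-rank property of the augmented Jacobian from Claim \ref{clm:full_rank} to conclude that the multipliers must already be equalized.

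First, I would compute $\nabla f(\vphi)$. Writing $\bar\pi(\vphi) = \tfrac{1}{n}\sum_{j\in\gL}\Pi_j(\vphi)$ and denoting the centered multiplier vector by $\vu \triangleq \vPi(\vphi) - \bar\pi(\vphi)\vone_n \in \R^n$, we have $f(\vphi) = \vu^\trp \vu$. Differentiating and noting that the chain-rule contribution from $\bar\pi$ is multiplied by $\sum_i(\Pi_i - \bar\pi) = \vone_n^\trp \vu = 0$, the gradient simplifies to
\begin{equation*}
\nabla f(\vphi) = 2\,(\D\vPi(\vphi))^\trp \vu.
\end{equation*}
Thus any stationary point $\vphi$ of $f$ satisfies $(\D\vPi(\vphi))^\trp \vu = \vzero_{n-1}$, and by construction we also automatically have $\vone_n^\trp \vu = 0$.

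Next, I would combine these two conditions. Stacking them gives
\begin{equation*}
\begin{bmatrix} (\D\vPi(\vphi))^\trp \\ \vone_n^\trp \end{bmatrix}\vu \;=\; \vzero_n,
\end{equation*}
which is the transpose of the matrix $[\D\vPi(\vphi)\;\;\vone_n]\in\R^{n\times n}$ applied to $\vu$. By Claim \ref{clm:full_rank}, the matrix $[\D\vPi(\vphi)\;\;-\vone_n]$ has full rank at every $\vphi$, and since negating the last column does not change the rank, $[\D\vPi(\vphi)\;\;\vone_n]$ is invertible as well. Consequently its transpose is invertible, and the only solution to the stacked system is $\vu = \vzero_n$.

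Finally, $\vu = \vzero_n$ means $\Pi_1(\vphi) = \cdots = \Pi_n(\vphi) = \bar\pi(\vphi)$, so $f(\vphi) = 0$. Since we showed earlier in the proof of Lemma \ref{lem:alternative_obj} that $f \geq 0$ everywhere and that $f(\vphi^\star) = 0$, any stationary point achieves the global minimum, which completes the argument. The only non-routine ingredient is recognizing that the gradient calculation forces $\vu$ to lie in the kernel of $(\D\vPi)^\trp$ while automatically being orthogonal to $\vone_n$, so that the $n\times n$ full-rank matrix in Claim \ref{clm:full_rank} is exactly what is needed — no new linear-algebraic work is required.
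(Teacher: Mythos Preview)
Your proof is correct and follows essentially the same approach as the paper: compute $\nabla f = 2(\D\vPi)^\trp \vu$ for the centered vector $\vu$ (the paper's $\bm\xi$ is just $2\vu$), combine $(\D\vPi)^\trp \vu = \vzero$ with the automatic identity $\vone_n^\trp \vu = 0$, and invoke the full-rank result of Claim~\ref{clm:full_rank} to force $\vu = \vzero$. The only cosmetic difference is that you stack with $+\vone_n$ rather than $-\vone_n$ and observe the sign change preserves rank, which is fine.
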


\begin{restatable}{proposition}{lemCoercive} \label{prop:coercive}
  $f$ as defined in \eqref{eq:quadratic_objective} is coercive, \ie %
  every sublevel set of %
  $f$ is bounded. Formally,
  $\forall U \in \R$, $\exists M > 0$, $\forall \vphi \in \R^{n-1}$, $f(\vphi) \leq U \implies \|\vphi\|_\infty \leq M$.
\end{restatable}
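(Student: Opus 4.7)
The plan is to prove coercivity by contradiction. Suppose there is a sequence $\{\vphi^{(k)}\} \subset \R^{n-1}$ with $f(\vphi^{(k)}) \le U$ but $t_k \triangleq \|\vphi^{(k)}\|_\infty \to \infty$; setting $\vpi^{(k)} \triangleq \vPi(\vphi^{(k)})$, I would derive a contradiction by extracting a limiting direction of escape. The elementary bound $f(\vphi) \ge \tfrac{1}{2}\bigl(\max_i \Pi_i(\vphi) - \min_i \Pi_i(\vphi)\bigr)^2$ immediately controls the spread of the multipliers: $\max_i \pi_i^{(k)} - \min_i \pi_i^{(k)} \le \sqrt{2U}$. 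Passing to a subsequence would give $\vphi^{(k)}/t_k \to \vpsi$ with $\|\vpsi\|_\infty = 1$ and $\psi_n = 0$, and $\pi_1^{(k)}/t_k \to \alpha$ for some $\alpha \in [-\infty,+\infty]$; by the spread bound, $\pi_i^{(k)}/t_k \to \alpha$ for every $i \in \gL$.

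Next I would pin down $\alpha$. Non-negativity of the self-loop prices $p_{i,i}^{(k)} = c_{i,i} + d_{i,i}\pi_i^{(k)} \ge 0$ forces $\alpha \ge 0$, and $\alpha = +\infty$ would drive every $p_{i,j}^{(k)}$ to $+\infty$ and every $\qhat_{i,j}(p_{i,j}^{(k)})$ to $0$, contradicting the total-supply identity $\sum_{i,j} d_{i,j}\qhat_{i,j}(p_{i,j}^{(k)}) = m > 0$. Because $m > 0$ and $\qhat_{i,j}$ vanishes at infinity, a further subsequence makes the set $B \triangleq \{(i,j)\in\gL^2 : p_{i,j}^{(k)} \text{ is bounded}\}$ constant and non-empty, with $p_{i,j}^{(k)} \to +\infty$ off $B$. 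Taking limits in each location's flow-balance identity $\sum_{j}\qhat_{i,j}(p_{i,j}^{(k)}) = \sum_{j}\qhat_{j,i}(p_{j,i}^{(k)})$ produces a flow $\tilde y$ on $B$ that is balanced at every node with strictly positive entries, and the flow decomposition theorem then furnishes a directed cycle $\kappa \subseteq B$.

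The cycle $\kappa$ is the lever that forces $\alpha = 0$: for each $(i,j) \in B$, dividing $p_{i,j}^{(k)} = c_{i,j} + d_{i,j}\pi_i^{(k)} + \phi_i^{(k)} - \phi_j^{(k)}$ by $t_k$ and taking limits gives $d_{i,j}\alpha + \psi_i - \psi_j = 0$; summing this relation around $\kappa$ makes the $\psi$-terms telescope to zero, leaving $\alpha \sum_{(i,j)\in\kappa} d_{i,j} = 0$, so $\alpha = 0$ since each $d_{i,j} > 0$. With $\alpha = 0$ and $\psi_n = 0$, dividing the inequality $p_{i,n}^{(k)} \ge 0$ by $t_k$ yields $\psi_i \ge 0$, and dividing $p_{n,i}^{(k)} \ge 0$ by $t_k$ yields $\psi_i \le 0$; hence $\vpsi = \vzero$, contradicting $\|\vpsi\|_\infty = 1$.

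The main obstacle is the step that pins $\alpha = 0$: the direction $\vpsi$ could a priori be compatible with a common positive growth rate of the multipliers, and ruling this out requires coupling the total-supply constraint with the flow-balance constraints via the flow decomposition theorem to exhibit a cycle in the bounded-price subgraph. The remaining work---iterated subsequence extraction to stabilize $B$, $\alpha$, and $\vpsi$, together with the concluding two-sided comparison of prices at and from the pinned location $n$---is routine once the cycle has been produced.
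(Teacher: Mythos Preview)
Your argument is correct and takes a genuinely different route from the paper. Both proofs start the same way: assume an unbounded sublevel set and use the definition of $f$ to bound the spread $\max_i \Pi_i(\vphi) - \min_i \Pi_i(\vphi)$. From there the two diverge.

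The paper proceeds constructively. Using the price lower bound between a location with large $|\phi_i|$ and location $n$, it shows $\min_i \Pi_i(\vphi)$ can be made arbitrarily large; it then explicitly builds thresholds $\bar p_{i,j}$, $\bar\pi_i$, and $\bar M$ so that whenever $\|\vphi\|_\infty > \bar M$, every ``downhill'' edge (one with $\phi_i \ge \phi_j$) carries flow at most $\varepsilon$. A separate combinatorial claim then bounds the total balanced flow by $(n+n^2+n^3)\varepsilon$, which for suitable $\varepsilon$ is below $m$, contradicting \ref{cond:mc_total_supply}.

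Your approach is a compactness argument: you normalize by $t_k = \|\vphi^{(k)}\|_\infty$, extract a limiting direction $\vpsi$ and a common growth rate $\alpha$ for the multipliers, and then use the flow-decomposition theorem on the limiting balanced flow supported on the bounded-price edges to exhibit a cycle in that subgraph. Telescoping $d_{i,j}\alpha + \psi_i - \psi_j = 0$ around the cycle forces $\alpha = 0$, after which nonnegativity of $p_{i,n}$ and $p_{n,i}$ squeezes $\vpsi = \vzero$, contradicting $\|\vpsi\|_\infty = 1$. This is cleaner and more conceptual than the paper's explicit bookkeeping, and it avoids the auxiliary combinatorial claim entirely; the trade-off is that it is nonconstructive, whereas the paper's proof in principle yields an explicit $M$ in terms of the problem data. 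One small remark: your step ruling out $\alpha = +\infty$ (and later the positivity of the limiting flow on $B$) implicitly uses $\qhat_{i,j}(r)\to 0$ as $r\to\infty$; the paper's proof relies on the same fact, so you are on equal footing there.
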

\end{proof}

\subsubsection{Proof of Proposition~\ref{prop:critical_point_is_global_min}}

\clmCriticalPt*

\begin{proof}
  First note that the alternative objective function $f$ as defined in  \eqref{eq:quadratic_objective} can be written as: %
  \begin{equation*} %
    f(\vphi) = \sum_{i \in \gL} \Bigl( \Pi_i(\vphi) - \frac{1}{n} \sum_{j \in \gL} \Pi_j(\vphi) \Bigr)^2 = \frac{1}{2 n} \sum_{i, j \in \gL} (\Pi_i(\vphi) - \Pi_j(\vphi))^2.
  \end{equation*}
  Define $\bm\xi: \R^{n-1} \to \R^n$ such that for each $i = 1, 2, \dots, n$,
  \[
    \xi_i(\vphi) \triangleq 2 \Pi_i(\vphi) - \frac{2}{n} \sum_{j \in \gL} \Pi_j(\vphi).
  \]
  The gradient of $f$ can be written as:
  \begin{align*}
    \nabla f(\vphi)
     & = \frac{1}{n} \sum_{i, j \in \gL} (\Pi_i(\vphi) - \Pi_j(\vphi)) (\nabla \Pi_i(\vphi) - \nabla \Pi_j(\vphi))     \\
     & = \frac{2}{n} \sum_{i, j \in \gL} (\Pi_i(\vphi) - \Pi_j(\vphi)) \nabla \Pi_i(\vphi)                             \\
     & = \sum_{i \in \gL} \Bigl( 2 \Pi_i(\vphi) - \frac{2}{n} \sum_{j \in \gL} \Pi_j(\vphi) \Bigr) \nabla \Pi_i(\vphi) \\
     & = \D \bm\Pi(\vphi)^\trp \bm\xi(\vphi).                                                                          %
  \end{align*}
  Observe that the entries of $\bm\xi(\vphi)$ always sum to zero, \ie $\vone^\trp \bm\xi(\vphi) = 0$, $\forall \vphi \in \R^{n-1}$.
  As a result,
  \[
    \nabla f(\vphi) = \D \bm\Pi(\vphi)^\trp \bm\xi(\vphi) = \vzero_{n-1} \iff \begin{bmatrix} \D \bm\Pi(\vphi) & -\vone_n \end{bmatrix}^\trp \bm\xi(\vphi) = \vzero_n.
  \]
  By \Cref{clm:full_rank}, the matrix $\begin{bmatrix} \D \bm\Pi(\vphi) & -\vone_n \end{bmatrix}$ has full rank at all $\vphi \in \R^{n-1}$.
  $\nabla f(\vphi) = \vzero$ then implies that $\bm\xi(\vphi) = \vzero$, %
  meaning that for all $i \in \gL$,
  \[
    \Pi_i(\vphi) = \frac{1}{n} \sum_{j \in \gL} \Pi_j(\vphi),
  \]
  In other words, $\Pi_1(\vphi) = \cdots = \Pi_n(\vphi)$,
  in which case $f(\vphi) = 0$.
  Since $f$ is non-negative, this completes the proof that every stationary point of $f$ is a global minimum.
\end{proof}

\bigskip

\subsubsection{Proof of Proposition~\ref{prop:coercive}} \label{appx:proof_coercive}

\newcommand{\pbar}{\bar{p}}
\newcommand{\Mbar}{\bar{M}}

\lemCoercive*

\begin{proof}%
  Assume toward contradiction that %
  $f$ has an unbounded sublevel set, \ie
  \begin{equation}
    \exists U^\ast \in \R, \ \forall M > 0, \ \exists \vphi, \ (f(\vphi) \le U^\ast) \wedge (\|\vphi\|_\infty > M). \label{eq:coercive_assumptpion}
  \end{equation}
  We know it must be the case that $U^\ast > 0$, since (as we have established earlier in this section) %
  $f$ is non-negative and achieves the global minimum $f = 0$ at a unique $\vphiStar \in \R^{n-1}$.

  The first term in \eqref{eq:coercive_assumptpion}, $f(\vphi) \le U^\ast$, implies that $\bigl( \Pi_i(\vphi) - \frac{1}{n} \sum_{j \in \gL} \Pi_j(\vphi) \bigr)^2 \leq U^\ast$ holds for all $i \in \gL$. As a result, at any $\vphi \in \R^{n-1}$ \st $f(\vphi) \le U^\ast$, we must have
  \begin{equation} \label{eq:Pi's_are_roughly_equal}
    \max_{i \in \gL} \Pi_i(\vphi) - \min_{i \in \gL} \Pi_i(\vphi) = \Bigl( \max_{i \in \gL} \Pi_i(\vphi) - \frac{1}{n} \sum_{i \in \gL} \Pi_i(\vphi) \Bigr) + \Bigl( \frac{1}{n} \sum_{i \in \gL} \Pi_i(\vphi) - \min_{i \in \gL} \Pi_i(\vphi) \Bigr) \le 2 \sqrt{U^\ast}.
  \end{equation}

  The second term in \eqref{eq:coercive_assumptpion}, $\|\vphi\|_\infty > M$, implies that there exists some $i \leq n-1$ s.t. $|\phi_i| = |\phi_i - \phi_n| > M $.
  Assume w.l.o.g.\@ that $|\phi_1 - \phi_n| > M$ (otherwise re-label the locations). Consider the prices of the two trips between location $1$ and location $n$:
  \begin{align*}
    p_{1,n}  & = c_{1, n} + d_{1, n} \Pi_{1}(\vphi) + \phi_{1} - \phi_{n}, \\
    p_{n, 1} & = c_{n, 1} + d_{n, 1} \Pi_{n}(\vphi) + \phi_{n} - \phi_{1}.
  \end{align*}
  Trip prices being non-negative implies that
  \begin{align*}
    \Pi_{1}(\vphi) & \ge \frac{\phi_{n} - \phi_{1} - c_{1, n}}{d_{1, n}}, \\
    \Pi_{n}(\vphi) & \ge \frac{\phi_{1} - \phi_{n} - c_{n, 1}}{d_{n, 1}}.
  \end{align*}
  Since $|\phi_1 - \phi_n| > M$, we have
  \begin{align*}
    \max_i \Pi_i(\vphi) & \ge \max\{\Pi_1(\vphi), ~\Pi_n(\vphi)\} %
    \geq \min \Bigl\{ \frac{M - c_{1, n}}{d_{1, n}}, \, \frac{M - c_{n, 1}}{d_{n, 1}} \Bigr\}.
  \end{align*}
  Furthermore, applying \eqref{eq:Pi's_are_roughly_equal},
  we get
  \begin{align}
    \min_i \Pi_i(\vphi) & \ge \max_i \Pi_i(\vphi) -  2 \sqrt{U^\ast} \geq  \min \Bigl\{ \frac{M - c_{1, n}}{d_{1, n}}, \, \frac{M - c_{n, 1}}{d_{n, 1}} \Bigr\} - 2 \sqrt{U^\ast}. %
    \label{eq:limit_on_min_pi}
  \end{align}

  \smallskip

  The rest of the proof applies \eqref{eq:limit_on_min_pi} to establish that there exists a sufficiently large $\Mbar > 0$ \st for all $\vphi \in \R^{n-1}$ such that $\|\vphi\|_\infty > \Mbar$,
  the corresponding market-clearing outcome cannot satisfy
  $f(\vphi) \le U^\ast$.
  We start from constructing one such $\Mbar$.

  First, for each $(i,j) \in \gL$, we define $\pbar\subij \geq 0$ as the smallest non-negative price for the $(i,j)$ trip \st %
  the driver flow for $(i,j)$ trip is no greater than $m/(2n^2 \max_{i,j\in \gL}d\subij)$:
  \[
    \pbar\subij = \inf \left\lbrace
    r \in \R_{\geq 0} ~\left|~ \qhat\subij(r) \leq %
    \varepsilon
    \right.
    \right\rbrace,
  \]
  where
  \begin{align*}
    \varepsilon \triangleq \frac{m}{(n^3 + n^2 + n) \max_{i,j\in \gL} \{d\subij \} }.
  \end{align*}
  If $\qhat\subij(0) \leq \varepsilon$, we know $\pbar\subij = 0$. If $\qhat\subij(0) > \varepsilon$, since $\qhat\subij$ is continuous, strictly decreasing, and converges to $0$ as the price approaches infinity, $\pbar\subij$ is the unique prices at which $\qhat\subij(\pbar\subij) = \varepsilon$.

  For each origin, let $\pibar_i$ be the smallest $\pi_i$ such that $c\subij + d\subij \pi_i$ is at least $\pbar \subij$ for all destinations $i\in \gL$:
  \begin{equation}
    \pibar_i \triangleq \max_{j \in \gL} \left\lbrace
    \left(\pbar\subij - c\subij \right) / d\subij
    \right\rbrace, \label{eq:coercive_proof_pibar}
  \end{equation}
  and let $\Mbar$ be sufficiently large such that \eqref{eq:limit_on_min_pi} guarantees that $\min_i \Pi_i(\vphi)$ is no smaller than the maximum $\pibar_i$ among all $i\in \gL$:
  \begin{equation}
    \Mbar \triangleq \max\left\lbrace
    \left( \max_{i \in \gL} \pibar_i + 2 \sqrt{U^\ast} \right) d_{1,n} + c_{1,n},~
    \left( \max_{i \in \gL} \pibar_i + 2 \sqrt{U^\ast} \right) d_{n,1} + c_{n,1}
    \right\rbrace. \label{eq:coercive_proof_Mibar}
  \end{equation}

  Now consider a $\vphi \in \R^{n-1}$ \st $\|\vphi\|_\infty > \Mbar$, under which $f(\vphi) \le U^\ast$ is satisfied (\eqref{eq:coercive_assumptpion} guarantees that such $\vphi$ exists).
  Applying \eqref{eq:limit_on_min_pi}, \eqref{eq:coercive_proof_pibar} and \eqref{eq:coercive_proof_Mibar}, %
  we know that for all $i, j \in \gL$ \st $\phi_i \geq \phi_j$, %
  the trip price under the market-clearing outcome must satisfy
  \begin{align*}
    p_{i, j} = c_{i, j} + d_{i, j} \Pi_{i}(\vphi) + \phi_{i} - \phi_{j} \ge c_{i, j} + d_{i, k} \min_i \Pi_i(\vphi) \geq \pbar \subij. %
  \end{align*}
  This implies that for all $i,j\in \gL$ such that $\phi_i \geq \phi_j$, the driver flow rate %
  \[
    y\subij = \qhat(p\subij) \leq \qhat(\pbar\subij) \leq \varepsilon.
  \]
  Relabel the locations \st $\phi_1 \geq \phi_2 \geq \dots \geq \phi_n$ and apply \Cref{clm:bound_on_total_supply} below, we know that the total amount of drivers dispatched under the market-clearing outcome is bounded by:
  \begin{align*}
    \sum_{i,j \in \gL} d\subij %
    y\subij & \leq \max_{i,j\in \gL}\{d\subij\} \sum_{i,j \in \gL} y\subij \leq
    \max_{i,j\in \gL}\{d\subij\}(n + n^2 + n^3) %
    \varepsilon = m.
  \end{align*}
  This is a contradiction, since given \ref{cond:mc_total_supply}, any market-clearing outcome must use exactly $m$ units of drivers.
  This completes the proof that $f$ is coercive.
  \if
    As a result, the total amount of drivers dispatched under the market-clearing outcome can be bounded by
    \begin{align}
      \sum_{i,j \in \gL} d\subij \qhat \subij(p\subij) & \leq \max_{i,j\in \gL} \{d\subij \} \left(
      \sum_{i\in \gL} \qhat_{i,i}(p_{i,i}) +
      \sum_{i,j\in \gL: \phi_i > \phi_j} \qhat\subij(p\subij) +
      \sum_{i,j\in \gL: \phi_i < \phi_j} \qhat\subij(p\subij)
      \right) \notag                                                                                \\
                                                       & \leq \max_{i,j\in \gL} \{d\subij \} \left(
      \sum_{i\in \gL} \qhat_{i,i}(p_{i,i}) +
      2\sum_{i,j\in \gL: \phi_i > \phi_j} \qhat\subij(p\subij)
      \right)  \notag %
      \\
                                                       & \leq \max_{i,j\in \gL} \{d\subij \} \left(
      \sum_{i\in \gL} \qhat_{i,i}(\pbar_{i,i}) +
      2 \sum_{i,j\in \gL: \phi_i > \phi_j} \qhat\subij(\pbar\subij)
      \right) \notag                                                                                \\
                                                       & \leq \max_{i,j\in \gL} \{d\subij \} \left(
      \sum_{i\in \gL}  \frac{m}{2n^2 \max_{i,j\in \gL} d\subij} +
      2 \sum_{i,j\in \gL: \phi_i > \phi_j} \frac{m}{2n^2 \max_{i,j\in \gL} d\subij}
      \right)  \notag                                                                               \\
                                                       & \leq m/2. \notag
    \end{align}
    The second inequality makes use of the fact that
    when the market-clearing outcome is flow-balanced, we must have
    \[
      \sum_{i,j\in \gL: \phi_i > \phi_j} \qhat\subij(\pbar\subij) =
      \sum_{i,j\in \gL: \phi_i < \phi_j} \qhat\subij(p\subij).
    \]
    This completes the proof that $f$ is coercive.
  \fi
    \end{proof}

\begin{claim}\label{clm:bound_on_total_supply}
  Suppose driver flow $\vy \in \R^{n^2}$ is balanced (\ie $\sum_{i \in \gL}y_{i,k} = \sum_{j\in \gL} y_{k,j}$ for all $k \in \gL$) and satisfies
  \begin{equation*}
    y\subij \leq \varepsilon,~\forall i, j \in \gL ~\mathrm{s.t.}~ i \leq j, %
  \end{equation*}
  the sum of driver flow over all trips is bounded by
  \[
    \sum_{i,j\in \gL} y_{i,j} < (n + n^2 + n^3) \varepsilon.
  \]
\end{claim}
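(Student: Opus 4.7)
The plan is to split the total driver flow into its ``upper triangular'' part $U = \sum_{i \leq j} y_{i,j}$, which is directly bounded by the hypothesis, and its ``lower triangular'' part $L = \sum_{i > j} y_{i,j}$, which must be controlled via flow balance. The hypothesis gives $U \leq \frac{n(n+1)}{2}\varepsilon$ essentially immediately, so the real task is to extract a bound on $L$ from flow balance alone, since the individual entries $y_{i,j}$ with $i > j$ are not directly bounded by $\varepsilon$.

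To handle $L$, I would introduce, for each $k = 1, 2, \dots, n-1$, the two directed cut sums
\[
f_k \triangleq \sum_{i \leq k, \, j > k} y_{i,j}, \qquad g_k \triangleq \sum_{i > k, \, j \leq k} y_{i,j}.
\]
Summing the per-location flow-balance identity $\sum_i y_{i,\ell} = \sum_j y_{\ell,j}$ over $\ell = 1, \dots, k$, the internal flows inside $\{1, \dots, k\}$ cancel, leaving $f_k = g_k$. Every summand of $f_k$ satisfies $i \leq k < j$, hence $i \leq j$, so the hypothesis yields $f_k \leq k(n-k)\varepsilon$, and therefore $g_k \leq k(n-k)\varepsilon$.

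The key step — and the one I expect to be the main obstacle in getting a clean constant — is transferring the per-cut bounds on $g_k$ into a bound on $L$. The idea is a double-counting argument: for a fixed lower-triangular pair $(i,j)$ with $i > j$, the entry $y_{i,j}$ appears in $g_k$ for exactly those $k$ with $j \leq k \leq i-1$, i.e., $i-j \geq 1$ times. Hence
\[
\sum_{k=1}^{n-1} g_k \;=\; \sum_{i > j} (i-j)\, y_{i,j} \;\geq\; \sum_{i > j} y_{i,j} \;=\; L,
\]
which combined with the per-cut estimate gives $L \leq \varepsilon \sum_{k=1}^{n-1} k(n-k) = \frac{n(n-1)(n+1)}{6}\varepsilon$.

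Finally, I would add $U + L$ and loosen the constants to match the stated form: the $n$ diagonal entries $y_{i,i}$ contribute at most $n\varepsilon$, the $\binom{n}{2} < n^2$ strictly upper-triangular entries contribute at most $n^2\varepsilon$, and the bound just derived gives $L < n^3 \varepsilon$. Summing these three contributions yields the claimed strict inequality $\sum_{i,j\in\gL} y_{i,j} < (n + n^2 + n^3)\varepsilon$, completing the proof. The only nontrivial ingredient is the cut-plus-double-counting step; everything else is direct application of the hypothesis.
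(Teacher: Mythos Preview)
Your proposal is correct and follows essentially the same approach as the paper: split into the upper-triangular part bounded directly by the hypothesis, and control the lower-triangular part via the cut sums $g_k$, flow balance $f_k = g_k$, and the double-counting inequality $\sum_k g_k \geq L$. In fact your write-up is slightly cleaner than the paper's, which states the double-counting step as an equality $\sum_{i>j} y_{i,j} = \sum_{k=2}^n \sum_{i\geq k, j<k} y_{i,j}$ when it is really the inequality you identify.
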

\begin{proof}
  First, we decompose the sum $\sum_{i,j\in \gL} y_{i,j}$ into two parts:
  \begin{align*}
    \sum_{i,j\in \gL} y_{i,j} = %
    \sum_{i,j\in \gL, i \leq j} y\subij + \sum_{i,j\in \gL, i > j} y\subij.
  \end{align*}
  Since $y\subij \leq \varepsilon$ if $i\leq j$, the first term is bounded by
  \[
    \sum_{i,j \in \gL, i \leq j} y\subij \leq \varepsilon \frac{n(n+1)}{2} < (n + n^2)\varepsilon.
  \]
  Due to the flow-balance constraint, for each $k \geq 2$ we must have $\sum_{i \geq k, j < k} y\subij = \sum_{i \geq k, j < k} y_{j,i}$. As a result,
  the second term can be written as
  \[
    \sum_{i,j\in \gL, i > j} y\subij = \sum_{k =2}^n \sum_{i \geq k, j < k} y\subij \leq \sum_{k=2}^{n} (k-1)(n-k+1) \varepsilon  < n^3 \varepsilon.
  \]
  This completes the proof of this claim.
  \end{proof}

\subsection{Proof of Theorem~\ref{thm:INP}} \label{appx:proof_main_thm}

\thmINP*

The following proposition is a direct corollary of Proposition 1.2.1 of \cite{bertsekas2016nonlinear}.

\begin{proposition} \label{prop:limit_point_is_stationary}
  Let $f$ be the objective function defined in \eqref{eq:quadratic_objective}, and %
  $(\vphi^{[k]})_{k \in \sN}$ be a sequence generated by: %
  \[
    \vphi^{[k+1]} = \vphi^{[k]} + \nu^{[k]} \alpha^{[k]} \vdelta^{[k]}.
  \]
  Assume the followings: %
  \begin{enumerate}[label = (\roman*)]
    \item $(\alpha^{[k]} \vdelta^{[k]})_{k \in \sN}$ is gradient related, \ie
          for any subsequence $(\vphi^{[k]})_{k \in \gK}$ (for some $\gK \subseteq \sN$, $|\gK| = \infty$) that converges to a non-stationary point of $f$, the corresponding subsequence
          $(\alpha^{[k]} \vdelta^{[k]})_{k \in \gK}$ is bounded and satisfies %
          \[
            \limsup_{k \to \infty, \, k \in \gK} \nabla f(\vphi^{[k]})^\trp \alpha^{[k]}
            \vdelta^{[k]} < 0.
          \]
    \item $\nu^{[k]}$ is chosen by the Armijo rule, \ie $\nu^{[k]} = \beta^{N_k}$, where $\beta \in (0, 1)$, $\sigma \in (0, 1)$, and $N_k$ is the smallest non-negative integer $N$ for which
          \[
            f(\vphi^{[k]}) - f(\vphi^{[k]} + \beta^{N} \alpha^{[k]} \vdelta^{[k]}) > - \sigma \beta^{N} \nabla f(\vphi^{[k]})^\trp \alpha^{[k]} \vdelta^{[k]}.
          \]
  \end{enumerate}

  Then, every limit point of $(\vphi^{[k]})_{k \in \sN}$ is a stationary point of $f$. %
\end{proposition}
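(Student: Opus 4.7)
The plan is to argue by contradiction in the standard way for Armijo-rule convergence analysis, using the gradient-related property together with the sufficient decrease guaranteed by the line search. Suppose $\bar{\vphi}$ is a limit point of $(\vphi^{[k]})$ that is not stationary, so $\nabla f(\bar{\vphi}) \neq \vzero$. Let $\gK \subseteq \sN$ be an index set along which $\vphi^{[k]} \to \bar{\vphi}$. By the continuous differentiability of $f$ established in \Cref{lem:alternative_obj}, the hypothesis (i) gives that $(\alpha^{[k]} \vdelta^{[k]})_{k \in \gK}$ is bounded and $\limsup_{k \to \infty,\, k \in \gK} \nabla f(\vphi^{[k]})^\trp \alpha^{[k]} \vdelta^{[k]} < 0$. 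Passing to a further subsequence (still denoted $\gK$), I may assume $\alpha^{[k]} \vdelta^{[k]} \to \bar{\vd}$ for some $\bar{\vd}$ with $\nabla f(\bar{\vphi})^\trp \bar{\vd} < 0$.

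Next I would show $\nu^{[k]} \to 0$ along $\gK$. The Armijo rule (ii) ensures monotone descent: $f(\vphi^{[k+1]}) \le f(\vphi^{[k]}) + \sigma \nu^{[k]} \nabla f(\vphi^{[k]})^\trp \alpha^{[k]} \vdelta^{[k]}$, and in particular $f(\vphi^{[k+1]}) \le f(\vphi^{[k]})$ since each $\nabla f(\vphi^{[k]})^\trp \alpha^{[k]} \vdelta^{[k]} \le 0$ (by the gradient-related definition applied at every $k$, not just along a non-stationary subsequence — if some $\vphi^{[k]}$ is stationary the iterates stall and there is nothing to prove). Because $\vphi^{[k]} \to \bar{\vphi}$ along $\gK$ and $f$ is continuous, $f(\vphi^{[k]}) \to f(\bar{\vphi})$ along $\gK$, and monotonicity then forces $f(\vphi^{[k]}) \to f(\bar{\vphi})$ for the whole sequence. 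Consequently $f(\vphi^{[k+1]}) - f(\vphi^{[k]}) \to 0$, so the Armijo inequality gives $\nu^{[k]} \nabla f(\vphi^{[k]})^\trp \alpha^{[k]} \vdelta^{[k]} \to 0$. Combined with $\nabla f(\vphi^{[k]})^\trp \alpha^{[k]} \vdelta^{[k]} \to \nabla f(\bar{\vphi})^\trp \bar{\vd} < 0$, this yields $\nu^{[k]} \to 0$ along $\gK$.

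The key step (and the main thing to be careful about) is exploiting minimality of $N_k$. For $k \in \gK$ large enough that $\nu^{[k]} < 1$, the integer $N_k - 1 \ge 0$ violated the sufficient-decrease condition, i.e.,
\begin{equation*}
f(\vphi^{[k]}) - f(\vphi^{[k]} + (\nu^{[k]}/\beta)\, \alpha^{[k]} \vdelta^{[k]}) \le -\sigma (\nu^{[k]}/\beta)\, \nabla f(\vphi^{[k]})^\trp \alpha^{[k]} \vdelta^{[k]}.
\end{equation*}
Dividing by $\nu^{[k]}/\beta > 0$ and applying the mean value theorem, there exists $\tilde{\vphi}^{[k]}$ on the segment from $\vphi^{[k]}$ to $\vphi^{[k]} + (\nu^{[k]}/\beta) \alpha^{[k]} \vdelta^{[k]}$ with
\begin{equation*}
-\nabla f(\tilde{\vphi}^{[k]})^\trp \alpha^{[k]} \vdelta^{[k]} \le -\sigma\, \nabla f(\vphi^{[k]})^\trp \alpha^{[k]} \vdelta^{[k]}.
\end{equation*}
Since $(\alpha^{[k]} \vdelta^{[k]})_{k \in \gK}$ is bounded and $\nu^{[k]}/\beta \to 0$, we have $\tilde{\vphi}^{[k]} \to \bar{\vphi}$ along $\gK$. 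Taking $k \to \infty$ along $\gK$ and using continuity of $\nabla f$ yields
\begin{equation*}
-\nabla f(\bar{\vphi})^\trp \bar{\vd} \le -\sigma\, \nabla f(\bar{\vphi})^\trp \bar{\vd},
\end{equation*}
i.e., $(1-\sigma)\, \nabla f(\bar{\vphi})^\trp \bar{\vd} \ge 0$. Since $\sigma \in (0,1)$, this forces $\nabla f(\bar{\vphi})^\trp \bar{\vd} \ge 0$, contradicting $\nabla f(\bar{\vphi})^\trp \bar{\vd} < 0$. Hence every limit point is stationary. The main obstacle will be making the extraction of subsequences and the ``first rejected stepsize'' bookkeeping fully precise, particularly to ensure that $\nu^{[k]} < 1$ (so $N_k \ge 1$) eventually along $\gK$, which follows once $\nu^{[k]} \to 0$ is established.
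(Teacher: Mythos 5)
Your argument is correct, but it is worth noting that the paper does not actually prove this proposition: it is stated as a direct corollary of Proposition 1.2.1 of Bertsekas's \emph{Nonlinear Programming} and used as a black box. What you have written is, in essence, the standard textbook proof of that cited result --- the contradiction argument at a non-stationary limit point $\bar\vphi$, the monotone-descent step forcing $f(\vphi^{[k]})$ to converge along the whole sequence, the deduction $\nu^{[k]} \to 0$ along the subsequence from the Armijo inequality together with gradient-relatedness, and the mean-value-theorem analysis of the first rejected stepsize $\nu^{[k]}/\beta$ yielding $(1-\sigma)\,\nabla f(\bar\vphi)^\trp \bar\vd \ge 0$, contradicting $\nabla f(\bar\vphi)^\trp \bar\vd < 0$. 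All the limit extractions and the division by $\nu^{[k]}/\beta > 0$ are handled correctly, and the continuity of $\nabla f$ that you need is supplied by Lemma~\ref{lem:alternative_obj}. So your proof supplies exactly what the paper delegates to the citation.

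One small point to tighten: the monotonicity $f(\vphi^{[k+1]}) \le f(\vphi^{[k]})$ requires $\nabla f(\vphi^{[k]})^\trp \alpha^{[k]} \vdelta^{[k]} \le 0$ at \emph{every} iteration, and this does not literally follow from the gradient-related condition as stated, since that condition only constrains subsequences converging to non-stationary points (your parenthetical conflates ``non-descent direction'' with ``stationary iterate''). In Bertsekas's formulation this is an additional standing assumption on the method (descent directions whenever $\nabla f(\vphi^{[k]}) \ne \vzero$), and in the paper's application it is verified explicitly via the computation culminating in \eqref{eq:Newton_decreases_f}, with $\vdelta^{[k]} = \vzero$ at stationary iterates so the sequence stalls there. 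Stating that hypothesis explicitly (or restricting attention to the case where no iterate is stationary) would close the only gap in an otherwise complete argument.
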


Recall that $t_0, t_1, t_2, \dots$ are the values that $t'$ in Line~\ref{line:great_step} of \Cref{alg:Newton_Armijo} ever takes. For each integer $k \ge 0$, define
\begin{align*}
  \vphi^{[k]}   & \triangleq \vphi^{(t_k)},     \\
  \vpi^{[k]}    & \triangleq \vpi^{(t_k)},      \\
  \vdelta^{[k]} & \triangleq \vdelta^{(t_k+1)}, \\ \alpha^{[k]} & \triangleq \alpha^{(t_k+1)}.
\end{align*}
We know that
\[\vphi^{[k+1]} = \vphi^{[k]} + \beta^{N_k} \alpha^{[k]} \vdelta^{[k]},\]
where $N_k = t_{k+1} - t_k - 1$ is the smallest integer $N \ge 0$ such that
\begin{equation} \label{eq:Armijo_in_reduction}
  f(\vphi^{[k]} + \beta^{N} \alpha^{[k]} \vdelta^{[k]}) < f(\vphi^{[k]}) + \sigma \nabla f(\vphi^{[k]})^\trp \beta^{N} \alpha^{[k]} \vdelta^{[k]}.
\end{equation}

\begin{proposition} \label{prop:coarse_convergence}
  $(\vphi^{[k]})_{k \in \sN}$ converges to $\vphi^\star$.
\end{proposition}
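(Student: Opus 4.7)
The plan is to invoke \Cref{prop:limit_point_is_stationary} to show that every limit point of $(\vphi^{[k]})$ is a stationary point of $f$, and then combine this with \Cref{prop:critical_point_is_global_min,prop:coercive} to pin the sequence down to the unique minimizer $\vphi^\star$. The verification of the hypotheses of \Cref{prop:limit_point_is_stationary} requires (i) exhibiting $\vdelta^{[k]}$ as a descent direction, (ii) boundedness of the iterates and directions, and (iii) a uniform lower bound on $\alpha^{[k]}$ — the latter two ensuring that the gradient-related condition holds along any convergent subsequence.

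First, I would compute the key identity $\nabla f(\vphi)^\trp \vdelta = -2 f(\vphi)$ at any market-clearing $\vphi$ whose update direction is given by \eqref{eq:INP_direction_no_t}. Recall from the proof of \Cref{prop:critical_point_is_global_min} that $\nabla f(\vphi) = \D\vPi(\vphi)^\trp \vxi(\vphi)$ with $\xi_i(\vphi) = 2(\Pi_i(\vphi) - \tfrac{1}{n}\sum_j \Pi_j(\vphi))$, and that $\vone^\trp \vxi(\vphi) = 0$. By definition of $\vdelta$, there is a scalar $\xi \in \R$ with $\D\vPi(\vphi)\vdelta = \xi\vone - \vpi$, so
\[
\nabla f(\vphi)^\trp \vdelta \;=\; \vxi(\vphi)^\trp (\xi \vone - \vpi) \;=\; -\vxi(\vphi)^\trp \vpi \;=\; -2 f(\vphi),
\]
where the final equality uses $\sum_i (\Pi_i - \bar\Pi) = 0$. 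In particular, $\vdelta^{[k]}$ is a strict descent direction whenever $\vphi^{[k]} \neq \vphi^\star$, so the Armijo backtracking rule terminates in finitely many steps, and the sequence $(f(\vphi^{[k]}))$ is strictly decreasing.

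Next, I would establish the boundedness properties needed for gradient-relatedness. Since $f(\vphi^{[k]}) \le f(\vphi^{[0]})$ for all $k$, \Cref{prop:coercive} implies that $(\vphi^{[k]})$ lies in a bounded (hence compact after closure) set $\gK \subset \R^{n-1}$. By \Cref{thm:differentiable}, $\vPi$ is $C^1$, and by \Cref{clm:full_rank} the matrix $[-\D\vPi(\vphi)\;|\;\vone]$ is invertible at every $\vphi$; continuity of its inverse on the compact closure of $\gK$ yields a uniform bound on $\vdelta^{[k]}$ and a uniform bound on $\|\D\vPi(\vphi^{[k]})\vdelta^{[k]}\|_\infty$. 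Consequently $\alpha^{[k]} = \min\{1,\tau/\|\D\vPi(\vphi^{[k]})\vdelta^{[k]}\|_\infty\}$ is bounded both above (by $1$) and, crucially, away from $0$ by some $\underline\alpha > 0$. Combined with the identity above, for any subsequence $(\vphi^{[k]})_{k\in\gK}$ converging to a non-stationary $\bar\vphi$ (where $f(\bar\vphi)>0$),
\[
\limsup_{k\to\infty,\,k\in\gK} \nabla f(\vphi^{[k]})^\trp \alpha^{[k]} \vdelta^{[k]} \;=\; -2\limsup_{k\to\infty,\,k\in\gK} \alpha^{[k]} f(\vphi^{[k]}) \;\le\; -2\underline\alpha f(\bar\vphi) \;<\; 0,
\]
so the sequence $(\alpha^{[k]}\vdelta^{[k]})$ is gradient related in the sense of \Cref{prop:limit_point_is_stationary}.

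Applying \Cref{prop:limit_point_is_stationary}, every limit point of $(\vphi^{[k]})$ is a stationary point of $f$; by \Cref{prop:critical_point_is_global_min}, the unique stationary point is the global minimum, which by the uniqueness argument at the start of \Cref{appx:proof_lem_alternative_obj} equals $\vphi^\star$. Since $(\vphi^{[k]})$ is bounded and has $\vphi^\star$ as its unique limit point, it converges to $\vphi^\star$. The main obstacle in this argument is the uniform lower bound on $\alpha^{[k]}$: it relies on the coercivity of $f$ to confine iterates to a compact set and on the full-rank property of $[-\D\vPi\;|\;\vone]$ to keep $\vdelta^{[k]}$ and $\D\vPi \vdelta^{[k]}$ uniformly bounded; without either ingredient, the backtracking rule could in principle drive $\alpha^{[k]}$ to zero and stall the algorithm at a non-stationary point.
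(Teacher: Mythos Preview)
Your proof is correct and follows essentially the same approach as the paper's: verify gradient-relatedness via the descent identity, confine iterates to a compact sublevel set by coercivity, use continuity of $\D\vPi$ and invertibility of $[-\D\vPi\;|\;\vone]$ to uniformly bound $\vdelta^{[k]}$ and bound $\alpha^{[k]}$ away from zero, then invoke \Cref{prop:limit_point_is_stationary} and \Cref{prop:critical_point_is_global_min}. Your derivation of $\nabla f(\vphi)^\trp \vdelta = -2f(\vphi)$ is in fact a touch cleaner than the paper's, which obtains the equivalent expression $-2\,\vpi^\trp\bigl(\mI - \tfrac{1}{n}\vone\vone^\trp\bigr)\vpi$ by appending a zero to $\nabla f$ and working with the full $n\times n$ matrix; the two quantities coincide, and the remaining logic is identical.
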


\begin{proof}%

  We %
  first apply \Cref{prop:limit_point_is_stationary} and show that every limit point of $(\vphi^{[k]})_{k \in \sN}$ is a stationary point of $f$.
  By construction, the backtracking factor is chosen by the Armijo rule thus condition (ii) of \Cref{prop:limit_point_is_stationary} is satisfied.
  What is left to verify is condition (i), that $(\alpha^{[k]} \vdelta^{[k]})_{k \in \sN}$ is gradient related.
  By definition of the INP mechanism (see Algorithm~\ref{alg:Newton_Armijo}), we know that
  \begin{align*}
    \vdelta^{[k]} & = \bigl( \begin{bmatrix} -\D\vPi(\vphi^{[k]}) & \vonen \end{bmatrix}^{-1} \vpi^{[k]} \bigr)_{1:n-1}, \\
    \alpha^{[k]}  & = \min \left\lbrace 1, \, \frac{\tau}{\|\D\vPi(\vphi^{[k]}) \vdelta^{[k]} \|_\infty} \right\rbrace,
  \end{align*}
  Moreover, the gradient of $f$ \wrt $\vphi$ at $\vphi^{[k]}$ is of the form:
  \begin{align}
    \nabla f(\vphi^{[k]}) & = 2 \bigl( \D\vPi(\vphi^{[k]}) \bigr)^\trp \Bigl( \vpi^{[k]} - \Bigl( \frac{1}{n} \vonen^\trp \vpi^{[k]} \Bigr) \vonen \Bigr) \\
                          & = 2 \bigl( \D\vPi(\vphi^{[k]}) \bigr)^\trp \Bigl( \mI - \frac{1}{n} \vonen \vonen^\trp \Bigr) \vpi^{[k]}.
  \end{align}
  Since $\vonen ^\trp  \bigl( \mI - \frac{1}{n} \vonen \vonen^\trp \bigr) \vpi^{[k]} = 0$, we %
  can rewrite $\nabla f$ %
  with a zero appended at the end %
  as
  \[
    \begin{bmatrix} \nabla f(\vphi^{[k]}) \\ 0 \end{bmatrix}
    =
    2 \begin{bmatrix} \D\vPi(\vphi^{[k]}) & -\vonen \end{bmatrix}^\trp
    \Bigl( \mI - \frac{1}{n} \vonen \vonen^\trp \Bigr) \vpi^{[k]}.
  \]
  Therefore,
  \begin{align}
    \nabla f(\vphi^{[k]})^\trp \vdelta^{[k]}
     & =  \begin{bmatrix} \nabla f(\vphi^{[k]}) \\ 0 \end{bmatrix}^\trp \begin{bmatrix} -\D\vPi(\vphi^{[k]}) & \vonen \end{bmatrix}^{-1} \vpi^{[k]} \notag                                                  \\
     & = 2 \bigl( \vpi^{[k]} \bigr)^\trp \Bigl( \mI - \frac{1}{n} \vonen \vonen^\trp \Bigr)^\trp \begin{bmatrix} \D\vPi(\vphi^{[k]}) & -\vonen \end{bmatrix} \begin{bmatrix} -\D\vPi(\vphi^{[k]}) & \vonen \end{bmatrix}^{-1} \vpi^{[k]} \notag \\
     & = - 2 \bigl( \vpi^{[k]} \bigr)^\trp \Bigl( \mI - \frac{1}{n} \vonen \vonen^\trp \Bigr) \vpi^{[k]} \notag                                                                                                                                 \\
     & \le 0. \label{eq:Newton_decreases_f}
  \end{align}
  The last inequality holds with equality if and only if $\pi_1^{[k]} = \pi_2^{[k]} = \dots = \pi_n^{[k]}$, at which point $f = 0$ achieves the global minimum.
  As a result, $\nabla f(\vphi^{[k]})^\trp \vdelta^{[k]} < 0$ (\ie $\vdelta^{[k]}$ is a descent direction) for $f$ whenever $\vphi^{[k]}$ is not optimal.
  What this implies is that the last term of \eqref{eq:Armijo_in_reduction}, $\sigma \nabla f(\vphi^{[k]})^\trp \beta^{N} \alpha^{[k]} \vdelta^{[k]}$, is weakly negative, and that the objective value $f(\vphi^{[k]})$ is weakly decreasing in $k$.
  As a result, every point in the sequence $(\vphi^{[k]})_{k \in \sN}$ must be contained in the sublevel set $\{\vphi \in \sR^{n-1} \mid f(\vphi) \le f(\vphi^{(0)})\}$, which is bounded and compact given \Cref{prop:coercive}.

  Observe that both $\vdelta^{[k]}$ and $\alpha^{[k]}$ can be written as functions of $\vphi^{[k]}$. The functions are continuous because $\vPi$ is continuously differentiable. Since every continuous function over a compact space must be bounded, $(\alpha^{[k]} \vdelta^{[k]})_{k \in \sN}$ is bounded.
  For the same reason, $(\|\D\vPi(\vphi^{[k]}) \vdelta^{[k]}\|_\infty)_{k \in \sN}$ is bounded, and thus $(\alpha^{[k]})_{k \in \sN}$ is bounded away from zero.

  Now consider a subsequence $(\vphi^{[k]})_{k \in \gK}$ that converges to a non-stationary point.
  We know that as $k \rightarrow \infty$, the limits of both $\alpha^{[k]}$ and $\nabla f(\vphi^{[k]})^\trp \vdelta^{[k]}$ exist due to continuity.
  $\lim_{k \to \infty, \, k \in \gK} \alpha^{[k]} > 0$ since for each $k \in \sN$, $\alpha^{[k]}$ is positive and bounded away from zero. Moreover, %
  $\lim_{k \to \infty, \, k \in \gK} \nabla f(\vphi^{[k]})^\trp \vdelta^{[k]} < 0$ since $\lim_{k \to \infty, \, k \in \gK} \vphi^{[k]}$ is not a stationary point of $f$. This implies that
  \[
    \limsup_{k \to \infty, \, k \in \gK} \nabla f(\vphi^{[k]})^\trp \alpha^{[k]} \vdelta^{[k]} < 0,
  \]
  and completes the proof of condition (i), that $(\alpha^{[k]} \vdelta^{[k]})_{k \in \sN}$ is gradient related.

  \medskip

  It follows from \Cref{prop:limit_point_is_stationary} that every limit point of $(\vphi^{[k]})_{k \in \sN}$ is a stationary point of $f$.
  By \Cref{lem:alternative_obj}, there is a single stationary point corresponding to the unique global minimum of $f$. %
  It then follows that every limit point of $(\vphi^{[k]})_{k \in \sN}$ is the unique global minimum, and therefore $(\vphi^{[k]})_{k \in \sN}$ converges to the global minimum, completing the proof of this proposition.

  To see the last statement, assume toward contradiction that the sequence $(\vphi^{[k]})_{k \in \sN}$ does not converge to the global minimum $\vphi^\star$. In this case, there exists $\varepsilon > 0$ and $\gK \subseteq \sN$ with $|\gK| = \infty$
  such that $\vphi^{[k]}$ resides outside of the open ball $\gB_\varepsilon(\vphi^\star) \triangleq \{\vphi \in \sR^{n-1} \mid \| \vphi - \vphi^\star \|_2 < \varepsilon \}$ for all $k \in \gK$. (Here $\|\cdot\|_2$ denotes the 2-norm for vectors). %
  Observe that the infinite subsequence $(\vphi^{[k]})_{k \in \gK}$ is contained in the compact set $\{\vphi \in \sR^{n-1} \mid f(\vphi) \le f(\vphi^{(0)})\} \setminus \gB_\varepsilon(\vphi^\star)$, thus %
  must have a limit point in that set, which does not contain the global minimum. This contradicts with the fact that every limit point of $(\vphi^{[k]})_{k \in \sN}$ is the global minimum.
\end{proof}

\begin{proposition} \label{prop:delta->0}
  $(\vdelta^{[k]})_{k \in \sN}$ converges to $\vzero_{n-1}$.
\end{proposition}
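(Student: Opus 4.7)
The plan is to realize $\vdelta^{[k]}$ as the value at $\vphi^{[k]}$ of a continuous function of $\vphi$, and then pass to the limit using the convergence $\vphi^{[k]} \to \vphi^\star$ already established in Proposition~\ref{prop:coarse_convergence}. Concretely, define $G:\R^{n-1}\to\R^{n-1}$ by
\[
  G(\vphi) \;\triangleq\; \Bigl(\begin{bmatrix}-\D\vPi(\vphi) & \vonen\end{bmatrix}^{-1}\vPi(\vphi)\Bigr)_{1:n-1};
\]
the update rule \eqref{eq:INP_direction} then gives $\vdelta^{[k]} = G(\vphi^{[k]})$ for every $k$.

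First I would argue that $G$ is continuous on all of $\R^{n-1}$. This uses Lemma~\ref{thm:differentiable}, which tells us that $\vPi$ is continuously differentiable (so both $\vPi(\vphi)$ and $\D\vPi(\vphi)$ depend continuously on $\vphi$), together with Lemma~\ref{clm:full_rank}, which guarantees that the augmented matrix $\begin{bmatrix}-\D\vPi(\vphi) & \vonen\end{bmatrix}$ is invertible for \emph{every} $\vphi \in \R^{n-1}$. Since matrix inversion is continuous on the open set of invertible matrices, $G$ is a composition of continuous maps.

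Next I would evaluate $G$ at the limit point. By Proposition~\ref{prop:coarse_convergence} combined with Lemma~\ref{lem:alternative_obj}, $\vphi^\star$ is the unique minimizer of $f$, and $f(\vphi^\star)=0$ forces $\vPi(\vphi^\star) = \omega^\star\vonen$ for the scalar $\omega^\star$ appearing in the theorem statement. One then verifies by direct substitution that $(\vzero_{n-1}^\trp,\,\omega^\star)^\trp \in \R^n$ solves the linear system $\begin{bmatrix}-\D\vPi(\vphi^\star) & \vonen\end{bmatrix}\vv = \omega^\star\vonen$; by Lemma~\ref{clm:full_rank} this solution is unique, and hence $G(\vphi^\star) = \vzero_{n-1}$. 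Continuity of $G$ together with $\vphi^{[k]} \to \vphi^\star$ then yields $\vdelta^{[k]} = G(\vphi^{[k]}) \to G(\vphi^\star) = \vzero_{n-1}$, which is the desired conclusion.

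I do not anticipate a substantive obstacle in carrying this out: essentially all of the heavy lifting (differentiability of $\vPi$, non-singularity of the augmented Jacobian everywhere, and convergence of $\vphi^{[k]}$ to the unique critical point of $f$) has already been discharged upstream. The two items that remain to be checked in the argument above — that $\vPi(\vphi^\star)$ is a scalar multiple of $\vonen$, and that the candidate vector $(\vzero_{n-1},\omega^\star)$ satisfies the pertinent linear system — are one-line verifications, so the proposition reduces to a clean continuity argument.
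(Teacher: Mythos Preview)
Your argument is correct and, in spirit, close to the paper's: both proofs ultimately rest on the continuity of $\vphi \mapsto \begin{bmatrix}-\D\vPi(\vphi)&\vonen\end{bmatrix}^{-1}$ together with $\vphi^{[k]}\to\vphi^\star$ from Proposition~\ref{prop:coarse_convergence}. The route differs slightly, though. You evaluate the full map $G$ at the limit point and check directly that $G(\vphi^\star)=\vzero_{n-1}$ via the explicit solution $(\vzero_{n-1}^\trp,\omega^\star)^\trp$ of the linear system. The paper instead subtracts the mean $\tfrac{1}{n}\vonen^\trp\vpi^{[k]}\,\vonen$ from both sides of the defining equation to obtain the quantitative bound
\[
\|\vdelta^{[k]}\|_2^2 \;\le\; \bigl\|\begin{bmatrix}-\D\vPi(\vphi^{[k]})&\vonen\end{bmatrix}^{-1}\bigr\|_2^2\, f(\vphi^{[k]}),
\]
and then lets the first factor converge (by continuity) while $f(\vphi^{[k]})\to 0$. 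Your version is cleaner and avoids the algebraic manipulation; the paper's version has the minor advantage of tying $\|\vdelta^{[k]}\|$ explicitly to the Lyapunov value $f(\vphi^{[k]})$, which could be leveraged for a convergence-rate statement even though the paper does not pursue one.
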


\begin{proof}
  We prove this proposition by showing that the 2-norm of $\vdelta^{[k]}$ converges to zero.
  By definition, $\vdelta^{[k]}$ is the first $n-1$ entries of
  $\bigl( \begin{bmatrix} -\D\vPi(\vphi^{[k]}) & \vonen \end{bmatrix}^{-1} \vpi^{[k]} \bigr)$.
  As a result, for each $k \ge 0$, there exist $\xi^{[k]} \in \R$ \st %
  \[
    \begin{bmatrix} \vdelta^{[k]} \\ \xi^{[k]} \end{bmatrix} = \begin{bmatrix} - \D\vPi(\vphi^{[k]}) & \vonen \end{bmatrix}^{-1} \vpi^{[k]},
  \]
  which gives us
  \[
    \begin{bmatrix} - \D\vPi(\vphi^{[k]}) & \vonen \end{bmatrix} \begin{bmatrix} \vdelta^{[k]} \\ \xi^{[k]} \end{bmatrix} = \vpi^{[k]}.
  \]
  Subtracting $\frac{1}{n} \vonen^\trp \vpi^{[k]} \vonen$ from both sides of the equation, we have
  \[
    \begin{bmatrix} - \D\vPi(\vphi^{[k]}) & \vonen \end{bmatrix} \begin{bmatrix} \vdelta^{[k]} \\ \xi^{[k]} - \frac{1}{n} \vonen^\trp \vpi^{[k]} \end{bmatrix}
    = \vpi^{[k]} - \frac{1}{n} \vonen^\trp \vpi^{[k]} \vonen,
  \]
  and equivalently,
  \[
    \begin{bmatrix} \vdelta^{[k]} \\ \xi^{[k]} - \frac{1}{n} \vonen^\trp \vpi^{[k]} \end{bmatrix}
    = \begin{bmatrix} - \D\vPi(\vphi^{[k]}) & \vonen \end{bmatrix}^{-1} \Bigl(\vpi^{[k]} - \frac{1}{n} \vonen^\trp \vpi^{[k]} \vonen \Bigr).
  \]
  Let $\|\cdot\|_2$ denote the 2-norm for vectors and the induced 2-norm for matrices. We have
  \begin{align*}
    \|\vdelta^{[k]}\|_2^2
    \le \biggl\| \begin{bmatrix} \vdelta^{[k]} \\ \xi^{[k]} - \frac{1}{n} \vonen^\trp \vpi^{[k]} \end{bmatrix} \biggr\|_2^2
     & \le \bigl\| \begin{bmatrix} - \D\vPi(\vphi^{[k]}) & \vonen \end{bmatrix}^{-1} \bigr\|_2^2 \Bigl\| \vpi^{[k]} - \frac{1}{n} \vone^\trp \vpi^{[k]} \vonen \Bigr\|_2^2 \\
     & = \bigl\| \begin{bmatrix} - \D\vPi(\vphi^{[k]}) & \vonen \end{bmatrix}^{-1} \bigr\|_2^2 \, f(\vphi^{[k]}).
  \end{align*}

  Given \Cref{prop:coarse_convergence} and the continuity of $\D\vPi$, we know that as $k \to \infty$, %
  $\bigl\| \begin{bmatrix} - \D\vPi(\vphi^{[k]}) & \vonen \end{bmatrix}^{-1} \bigr\|_2^2$ converges to $\bigl\| \begin{bmatrix} - \D\vPi(\vphi^\star) & \vonen \end{bmatrix}^{-1} \bigr\|_2^2$, and %
  $f(\vphi^{[k]})$ converges to 0.
  Therefore, $\|\vdelta^{[k]}\|_2^2$ converges to 0, completing the proof of this proposition.
\end{proof}

\section{Additional Examples} \label{appx:additional_discussions}

The following example demonstrates that the dual objective (as a function of the OD-based adjustments $\vphi$) and the Lyanupov function as defined in \eqref{eq:quadratic_objective} are not necessarily quasi-convex.

\begin{example} \label{exmp:dual_obj_non_convex}

  \begin{figure}[t]
    \centering
    \subcaptionbox{Dual objective.\label{fig:nonconvex_dual}}{\includegraphics{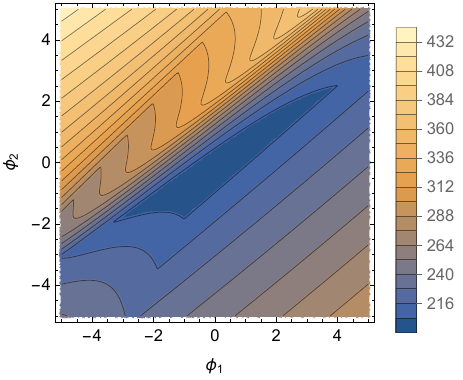}}
    \hfill
    \subcaptionbox{Lyanupov function $f$. \label{fig:nonconvex_f}}{\includegraphics{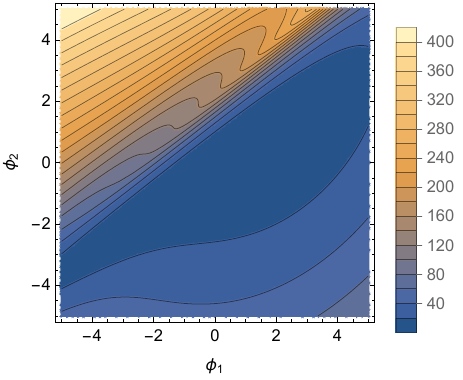}}
    \caption{Contour plots of the non-convex dual objective and the Lyapunov function $f$ for \Cref{exmp:dual_obj_non_convex}.}
    \label{fig:nonconvex}
  \end{figure}

  Consider an economy with three-locations, %
  a total of $m = 8.6$ units of drivers, and %
  rider demand functions
  \begin{align*}
    q_{1, 1}(r) = q_{1, 2}(r) = q_{2, 3}(r) = q_{3, 1}(r) = q_{3, 2}(r) = q_{3, 3}(r) & = \exp(-r / 30), \\
    q_{1, 3}(r) = q_{2, 1}(r) = q_{2, 2}(r)                                           & = 4 \exp(-r).
  \end{align*}
  The trip costs are zero, \ie  $c_{i, j} = 0$ for all $i,j \in \gL$, and the durations are given by $d_{i, j} = 1$ for all $i, j$.
  \Cref{fig:nonconvex} shows the dual objective and the Lyapunov function $f$ as a function of $\phi_1$ and $\phi_2$, fixing $\phi_3 = 0$.
  We can see that the contours are not convex, meaning that the dual objective and the function $f$ are not quasi-convex.

\end{example}

\clearpage

\section{Simulation Details and Additional Results} \label{appx:additiona_simulations}

In this section, we provide detailed descriptions of %
the simulation settings that are omitted from Section~\ref{sec:simulations} of the paper (Appendix~\ref{appx:simulation_details}).
Moreover, we describe the overall
market dynamics in Chicago (Appendix~\ref{appx:Chicago_market_dynamics}), and include additional simulation results on:
\begin{enumerate}[label = (\roman*)]
  \item the \emph{pure Newton} (as oppose to \emph{damped Newton}) method that does not limit the maximum allowed changes in the origin-based multipliers (\Cref{appx:undampened_Newton}),
  \item an iterative mechanism based on gradient descent \wrt the %
        $f$ (\Cref{appx:GD_simulations}), and
  \item a simple method that directly uses the multipliers as the direction for updating the OD-based adjustments (\Cref{appx:pi_as_direction}).
\end{enumerate}

\subsection{Chicago Market Dynamics} \label{appx:Chicago_market_dynamics}

In this section, we provide %
high level descriptions of the market dynamics in the City of Chicago, which provide support for the assumption that the market conditions are stationary during the

First, we present in Figure~\ref{fig:volume_by_how} the average (over the first 10 weeks of 2020) number of trips originating during each \emph{hour-of-week}. %
The $0^{\mathrm{th}}$ hour-of-week corresponds to midnight--1 a.m.\@ on Mondays, and the $1^\mathrm{st}$ hour-of-week corresponds to 1--2 a.m.\@ on Mondays, and so on. The gray stripes indicate the morning rush hours (7--10 a.m.\@) and the green stripes indicate the evening rush hours (5--8 p.m.).
We can observe clear weekly patterns: during weekdays, trip volume peaks during morning and evening rush hours; the market is the most busy during Friday and Saturday evenings (during which people take rides to commute from work as well as to and from restaurants and bars).

\begin{figure}[t!]
  \centering
  \includegraphics[scale = \PyplotScale]{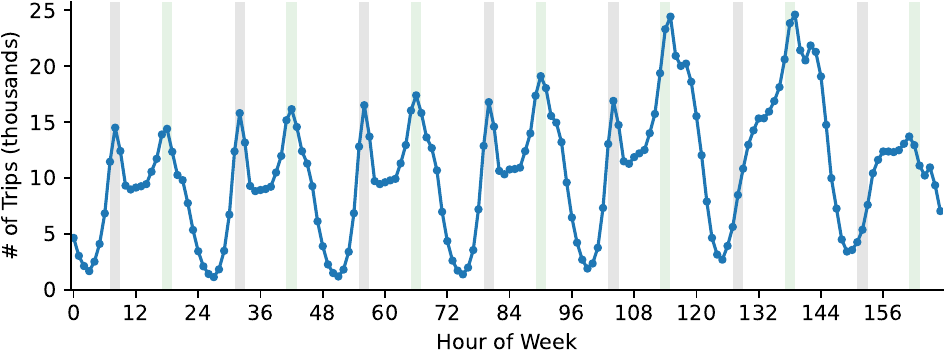}
  \caption{Average number of trips by hour-of-week. %
  }
  \label{fig:volume_by_how}
\end{figure}

\begin{figure}[t!]
  \centering
  \includegraphics[scale = \PyplotScale]{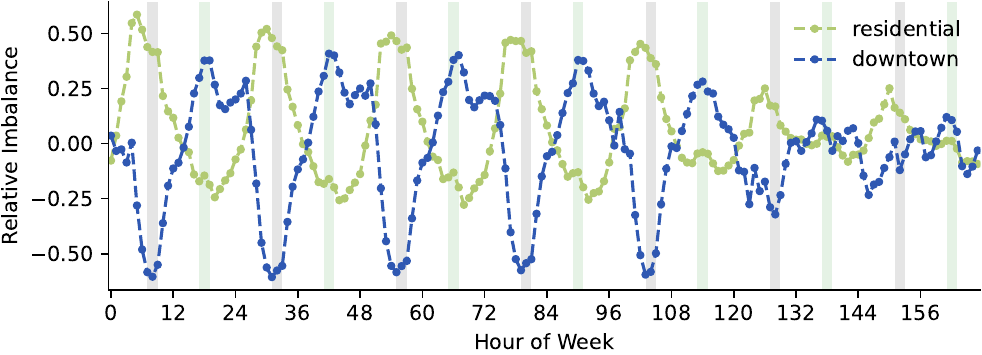}
  \caption{Average trip flow imbalance of the residential area and downtown, by hour-of-week. %
  }
  \label{fig:imbalance_by_how_two_locations}
\end{figure}

In Figure~\ref{fig:imbalance_by_how_two_locations}, we illustrate the average flow imbalance of the two locations discussed in Section~\ref{sec:intro}, the residential area (Lake View) and the downtown area (The Loop). %
The residential area sees more trips originating than ending in the area during the weekday morning rush hours, and the opposite during the evening rush hours. Downtown, on the other hand, sees more trips ending in the area in the morning, and leaving the area in the evening.
Importantly, observe that the trip flow imbalance of both locations remain relatively constant during the weekday morning rush hours (7--10 a.m.\@) and the evening rush hours (5--8 p.m.).

\begin{figure}[t!]
  \centering
  \subcaptionbox{Histogram of trip duration.\label{fig:2020-pre-covid_trip_seconds}}{\includegraphics[scale = \PyplotScale]{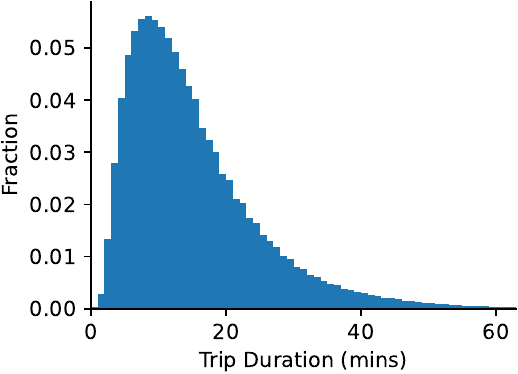}}
  \hspace{2em}
  \subcaptionbox{Cumulative distribution of trip duration.\label{fig:2020-pre-covid_trip_seconds_cumulative}}{\includegraphics[scale = \PyplotScale]{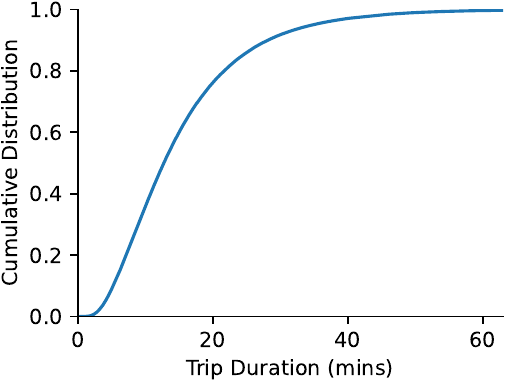}}
  \caption{Distribution of trip duration (in minutes).}
  \label{fig:trip_duration_distribution}
\end{figure}

We present the distribution of trip duration (in minutes) in Figure~\ref{fig:trip_duration_distribution}, for all trips during the first 10 weeks of 2020 (the distribution is not qualitatively different if we focus only on trips that take place during the morning rush hours).
We can see that the vast majority of trips take fewer than 60 minutes, and roughly 80\% of all trips are shorter than 20 minutes.
Given that the flow imbalance of the two locations remain relatively constant for three hours during the morning rush (as shown in Figure~\ref{fig:imbalance_by_how_two_locations}), %
the market condition should not change substantially during the time a driver completes an average trip.

\subsection{Simulation Details} \label{appx:simulation_details}

We conduct two types of simulations in this paper:  a stationary setting where the rider demand and driver supply are stationary over time and correspond to the average market condition during the first 10 weeks of 2020, and a dynamic setting where the rider demand and driver supply change over time according to the realized trip flows from the beginning of 2019 to the end of 2020.

\subsubsection{Stationary Setting} \label{appx:stationary_setting}

This section describes the way we construct the %
economy for the stationary setting, using data from the first 10 weeks of 2020.
This economy is used in \Cref{sec:stationary_simulation}, \Cref{appx:undampened_Newton}, \Cref{appx:GD_simulations}, and \Cref{appx:stationary_simple_direction}.

\subparagraph{Trip Duration and Cost}

We consider the $n = 77$ community areas in the City of Chicago as our set of locations.
Among all $n^2 = 5929$ origin-destination (OD) pairs, %
$5826$ of them have at least one observed trip.
For each of these OD pairs $(i, j)$, we calculate the average duration of the trips and use it as $d_{i, j}$.
For the remaining 16 edges for which we did not observe any trip, we impute the duration with the length of the shortest path using the edges with observed trips.%
\footnote{
  There is little or no rider demand between these extremely sparse origin-destination pairs. We nevertheless estimate the duration of these trips such that we can properly account for the time it takes a driver to relocate without a rider.
}
We assume that all trips cost \$20/hour, \ie $c_{i, j} = 20 d_{i, j}$ for all $i,j \in \gL$.

\subparagraph{Rider Demand}

Among the %
approximately 18 million recorded trips, %
$148$ thousands started during 7--8 a.m.\@ on Wednesdays. For each OD pair $(i, j) \in \gL^2$, we calculate the average number of observed trips per hour, and denote it as $\xobs_{i, j}$.
Among all $n^2 = 5929$ OD pairs,
$3331$
are traversed by at least one trip during the time window that we focus on.
For each $(i,j) \in \gL^2$ such that we observed no trips in the data, we assume $q\subij(r) = 0$ for all $r \geq 0$. %
A constant zero demand function does not satisfy our assumption that $q_{i, j}(\cdot)$ is strictly decreasing for all $i,j \in \gL$. This assumption is, however, sufficient but not necessary for establishing our theoretical results, %
and having zero demand for these OD pairs does not lead to any issues in our simulations.

For each $(i,j) \in \gL^2$ for which we observe at least one trip, we calculate the average price (the sum of the \emph{Fare} and \emph{Additional Charges} columns) of the recorded trips, and denote it as $\pobs_{i, j}$. %
To construct the rider demand functions, we assume that (i) the riders' average willingness to pay for a trip is \$1 per minute (\$60/hour), and (ii) the willingness to pay follows an exponential distribution. Technically, this is assuming that %
the demand function is of the form
\begin{equation*} %
  q_{i, j}(r) = Q_{i, j} \exp(- r / (60 d_{i, j}))
\end{equation*}
for some constants $(Q_{i, j})_{(i,j) \in \gL^2}$.
Intuitively, $Q_{i, j}$ can be interpreted as the total amount of riders interested in traveling from $i$ to $j$.
For each $(i,j) \in \gL^2$ \st at least one trip is recoreded, we determine $Q_{i, j}$ by solving $\xobs_{i, j} = Q_{i, j} \exp(- \pobs_{i, j} / (60 d_{i, j}))$.

\subparagraph{Driver Supply}

For the total number of drivers $m$, we use the minimum number that can fulfill the observed rider flow $(\xobs_{i, j})_{(i, j) \in \gL^2}$ and maintain flow balance.
Formally, we assume
\begin{mini}
  {\vy \in \R^{n^2}}{\sum_{i, j \in \gL} d_{i, j} y_{i, j}}{}{m =}
  \addConstraint{y_{i, j}}{\ge \xobs_{i, j}, \quad}{\forall i, j \in \gL}
  \addConstraint{\sum_{j \in \gL} y_{i, j}}{= \sum_{j \in \gL} y_{j, i}, \quad}{\forall i \in \gL.}
  \label{eq:find_min_supply_level}
\end{mini}
This gives us $m = \BalancedDrivers$, which is almost certainly an underestimation of the number of online drivers since in practice it's very unlikely for drivers to relocate in the most efficient manner.
In comparison, the observed total supply hour that's spent ``on trip'' is $\sum_{i, j \in \gL} d_{i, j} \xobs_{i, j} = \ObservedDrivers$.
This implies that in practice the efficiency (\ie fraction of drivers' time spent on trip) is most likely lower than $\ObservedDrivers / \BalancedDrivers = \ObservedDriversRatio$.
The platforms may observe seemingly a higher level of utilization, which indicates that some drivers might be relocating offline and not being counted towards the total supply hours.

\subparagraph{Driver Relocation}

To coordinate the driver relocation flows, we use phantom demand functions $\qtilde_{i, j}(r) = 500 (\max\{0, 1 - r / 3\})^4$ for all $i, j \in \gL$.%
\footnote{
  This phantom demand function does not satisfy our assumption \ref{cond:phantom_zero_price} that $d_{i, j} \qtilde_{i, j}(0) > m$.
  The assumption is used to prove existence of the market clearing surge multiplier $\vpi$ for any $\vphi$.
  In practice, we are only interested in a limited range of reasonable values of $\vphi$, so the assumption is only sufficient but not necessary.
}
Thus, relocation occurs only when the price is below \$3.

\subsubsection{Non-Stationary Setting} \label{appx:nonstationary_setting}

This section describes the non-stationary setting used in \Cref{sec:nonstationary_simulations} and \Cref{appx:dynamic_simple_direction}, representing Chicago's market conditions for 7--8 a.m.\@ on Wednesdays, from the beginning of 2019 through the end of 2020.
There are a total of 139 million trips %
during the two years.
We removed %
a total of $43$ trips with distances that are more than 30 times the median distance of trips with the same origin and destination--- these are most likely due to errors in the data recording/processing proces.

\subparagraph{Trip Duration and Cost}

Among all $n^2 = 5929$ OD pairs, 5925 of them have at least one trip in 2019 or 2020.
For these OD pairs, we compute the average trip duration over the two years and denote it as $d_{i, j}$ for OD pair $(i, j)$.
The remaining 4 OD pairs have no trip in 2019 and 2020, and we set their trip duration to be the shortest path distance using the duration of the observed trips.
Similar to the stationary setting, we set the trip cost to be the average trip duration multiplied by the driver cost rate $\$20$ per hour.

\subparagraph{Rider Demand}
The rider demand functions are constructed in the same way we construct the rider demand function for the stationary setting, except that we use only recorded trips from each week to determine the demand function for each week $\vq\supt$. %
More specifically, for each OD pair $(i, j) \in
  \gL^2$ and each week $t$, let $\xobs_{i, j, t}$ be the number of recorded trips from $i$ to $j$ that started during 7--8 a.m.\@ on Wednesday.
If at least one trip is recorded, \ie if $\xobs_{i, j, t} > 0$, we use the average price of these trips as $\pobs_{i, j, t}$.
The rider demand functions are defined as
\[
  q_{i, j}^{(t)}(r) = Q_{i, j}^{(t)} \exp(-r / (60 d_{i, j})),
\]
where $Q_{i, j}^{(t)}$ is the solution of
$\xobs_{i, j, t} = Q_{i, j}^{(t)} \exp(-\pobs_{i, j, t} / (60 d_{i, j}))$.

\subparagraph{Driver Supply}

Similar to the stationary setting, for each week $t$, we compute the minimum number of drivers needed to serve all observed trips while maintaining flow balance, and use it as the number of drivers $m^{(t)}$ in the simulation.
Technically, we solve \eqref{eq:find_min_supply_level} for each week $t$, using $\xobs_{i, j,t}$ instead of $\xobs_{i, j}$.

\subparagraph{Driver Relocation}

To coordinate the flow of driver relocation, we use phantom demand functions $\qtilde_{i, j}(r) = 1000 (\max\{0, 1 - r / 4\})^4$ for all OD pairs $(i, j)$.
Relocation occurs only when the price is below \$4.

\subparagraph{Events at McCormic Place}

There are a total of 101 Wednesdays that are working days during 2019 and 2020.
As we have briefly explained in \Cref{sec:nonstationary_simulations}, we excluded a total of \EventDays{} of them from our simulations,
during which it is very likely that major events were held at McCormick Place, which is located in the South Side, the community area just south of The Loop. %
In particular, we count the trips from North Side and Downtown to South Side during 7--8 a.m.\@ (see \Cref{fig:inflow_33}), and excluded days for which this number greater than 300.

\begin{figure}[t]
  \centering
  \includegraphics[scale = \PyplotScale]{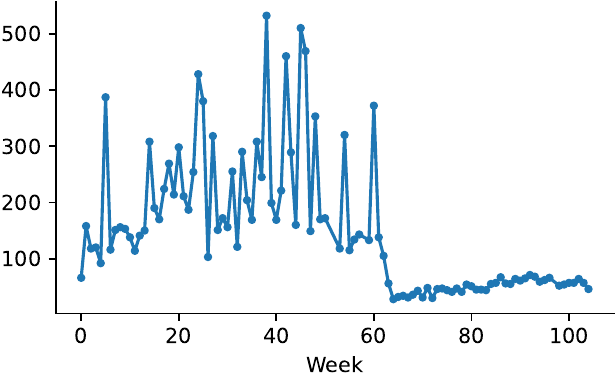}
  \caption{Total trip flow into South Side over weeks. %
  }
  \label{fig:inflow_33}
\end{figure}

During a typical ``non-event'' day, the South Side community area has an imbalance of around $0.4$, meaning that more trips originate from the area than end in the area (\ie the area is a popular origin).
During the event days, however, the imbalance can drop to as low as $-0.4$, when substantially more riders request trips ending in the area.
This leads to a very different flow of driver supply in space, thus %
a ridesharing platform should adjust the marginal values of drivers for this area as well as the neighboring areas. This is practically feasible since major events are typically planned months ahead of time.

\paragraph{Optimal OD-Based Adjustments and the Na\"ive Origin-Based Multipliers}

\begin{figure}[t]
  \centering
  \subcaptionbox{Optimal OD-based additive adjustments $\vphi^{(t)}$.\label{fig:optimal_phi}}{\includegraphics[scale = \PyplotScale]{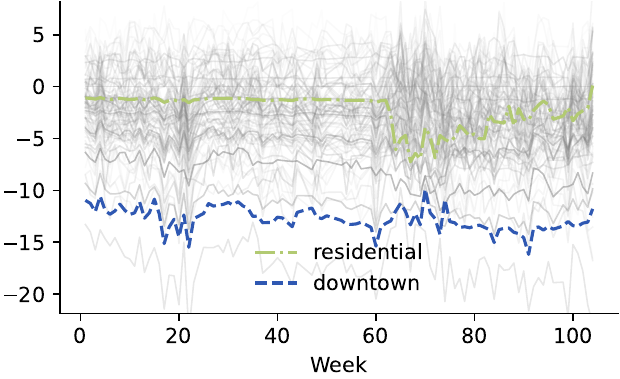}}
  \hfill
  \subcaptionbox{Origin-based multipliers $\vpi^{(t)}$.\label{fig:naive_pi}}{\includegraphics[scale = \PyplotScale]{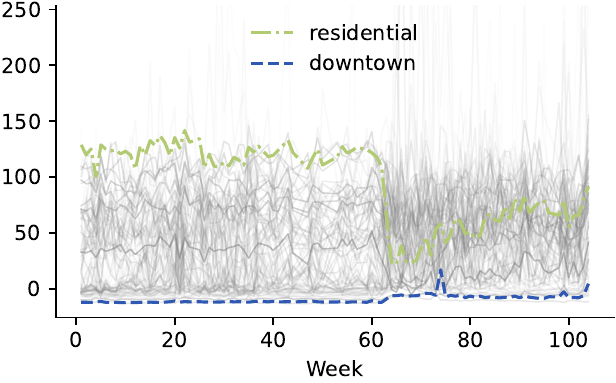}}
  \caption{Optimal OD-based adjustments and na\"ive origin-based multipliers $\vphi^{(t)}$ and $\vpi^{(t)}$ for each week.}
  \label{fig:optimal_phi_and_naive_pi_weekly}
\end{figure}

\Cref{fig:optimal_phi_and_naive_pi_weekly} shows the optimal OD-based adjustments $\vphi^{(t)}$ and the na\"ive origin-based multipliers $\vpi^{(t)}$ for each week $t$.
In \Cref{fig:optimal_phi}, the optimal adjustment of the residential area stays relatively constant over time, until the pandemic hits in March 2020, after which the adjustment drops substantially.
Overall, both the optimal adjustments and the na\"ive origin-based multipliers become more volatile after the pandemic hits.

\subsection{The Pure Newton Method} \label{appx:undampened_Newton}

Under the INP mechanism defined in Section~\ref{sec:INP_definition}, the step size \eqref{eq:INP_stepsize} is determined such that the (linear approximation of the) origin-based multipliers will not change week-over-week by more than some constant $\tau > 0$ for any location.
Since the update direction \eqref{eq:INP_direction} corresponds to the one under the Newton method for solving systems of equations, the INP mechanism %
can be considered %
as a ``damped Newton'' algorithm.
If we remove the stepsize limit (\ie setting $\alpha\supt = 1$ for all $t$) %
and run the \emph{pure Newton} method, the results for the stationary setting (same as the setting studied in Section~\ref{sec:stationary_simulation}) are shown in \Cref{fig:pure_Newton_pi_phi} and \Cref{fig:pure_Newton_obj}.

\begin{figure}[t]
  \centering
  \subcaptionbox{The price adjustments $\vphi^{(t)}$.\label{fig:pure_Newton_phi}}{\includegraphics[scale = \PyplotScale]{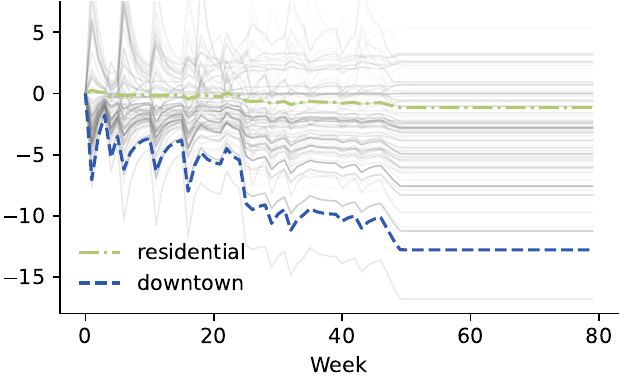}}
  \hfill
  \subcaptionbox{The multipliers $\vpi^{(t)}$.\label{fig:pure_Newton_pi}}{\includegraphics[scale = \PyplotScale]{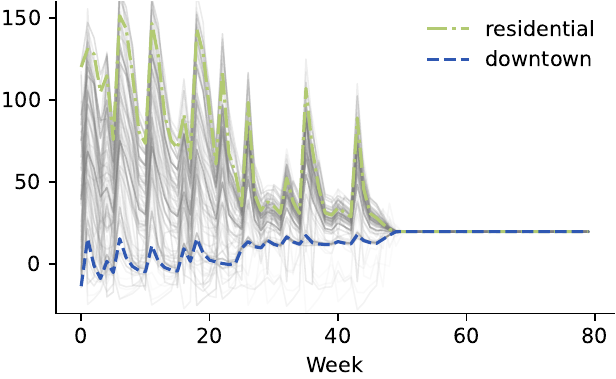}}
  \caption{$\vphi^{(t)}$ and $\vpi^{(t)}$ over iterations $t$ for the pure Newton method in the stationary setting.}
  \label{fig:pure_Newton_pi_phi}
\end{figure}

\begin{figure}[t]
  \centering
  \subcaptionbox{Social welfare (\$/hour).\label{fig:pure_Newton_welfare}}{\includegraphics[scale = \PyplotScale]{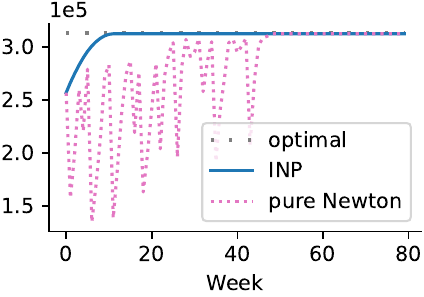}}
  \hfill
  \subcaptionbox{Dual objective.\label{fig:pure_Newton_dual}}{\includegraphics[scale = \PyplotScale]{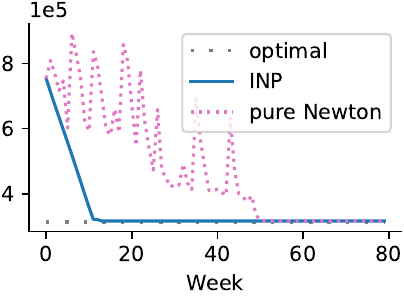}}
  \hfill
  \subcaptionbox{Function $f$.\label{fig:pure_Newton_log_f}}{\includegraphics[scale = \PyplotScale]{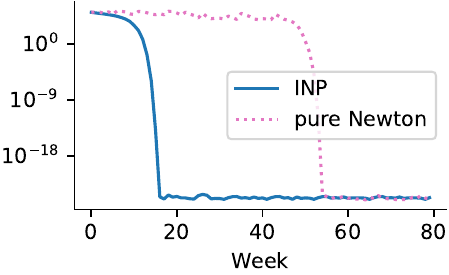}}
  \caption{Social welfare, dual objective, and the Lyanupov function $f$ over iterations $t$ for the pure Newton method in the stationary setting.}
  \label{fig:pure_Newton_obj}
\end{figure}

We can see that despite the fact that the pure Newton algorithm is still able to converge, the very large updates and frequent backtracking leads to very volatile market-clearing multipliers and social welfare (which are highly undesirable in practice).
The convergence also takes %
more iterations in comparison to the results presented in Section~\ref{sec:stationary_simulation}, where the properly ``paced'' INP mechanism never has to backtrack.

\subsection{Gradient Descent on the Lyapunov Function} \label{appx:GD_simulations}

Given that the gradient of the Lyanupov function $f$ at a market clearing outcome can be computed using information available in the market, %
a natural alternative (to INP) to consider is the gradient descent algorithm \wrt $f$.
More specifically, instead of using the Newton direction %
\eqref{eq:INP_direction}, we can use %
$-\nabla f$ at the previous market clearing outcome as the update direction. In conjunction with backtracking line search (recall that $f$ as a function of $\vphi$ is not necessarily convex), the algorithm is guaranteed to converge to the unique global minimum of $f$ where all origin-based multipliers are equal.

\begin{figure}[t]
  \centering
  \subcaptionbox{The price adjustments $\vphi^{(t)}$.\label{fig:gradient_phi}}{\includegraphics[scale = \PyplotScale]{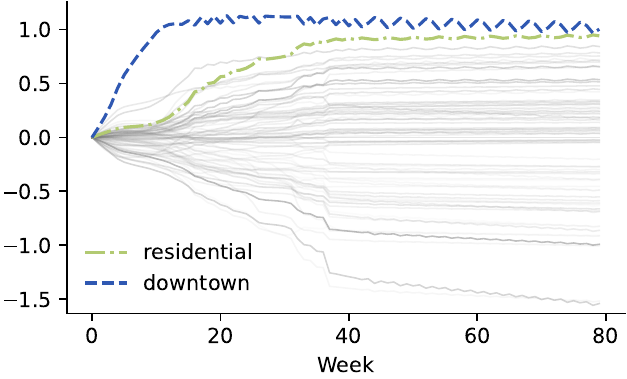}}
  \hfill
  \subcaptionbox{The multipliers $\vpi^{(t)}$.\label{fig:gradient_pi}}{\includegraphics[scale = \PyplotScale]{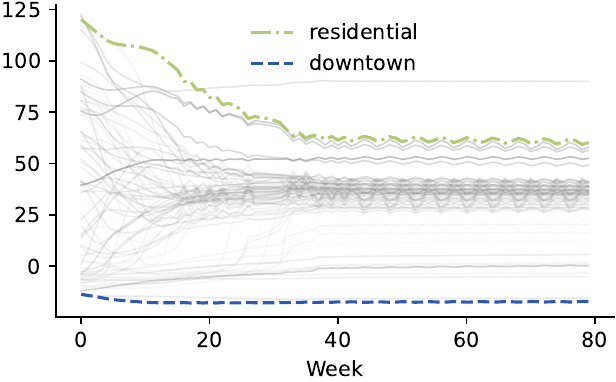}}
  \caption{$\vphi^{(t)}$ and $\vpi^{(t)}$ over iterations $t$ for the gradient descent method in the static setting.}
  \label{fig:gradient_pi_phi}
\end{figure}

\begin{figure}[t]
  \centering
  \subcaptionbox{Social welfare (\$/hour).\label{fig:gradient_welfare}}{\includegraphics[scale = \PyplotScale]{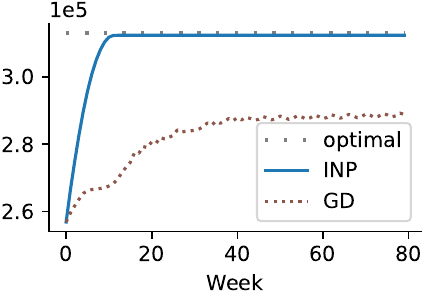}}
  \hfill
  \subcaptionbox{Dual objective.\label{fig:gradient_dual}}{\includegraphics[scale = \PyplotScale]{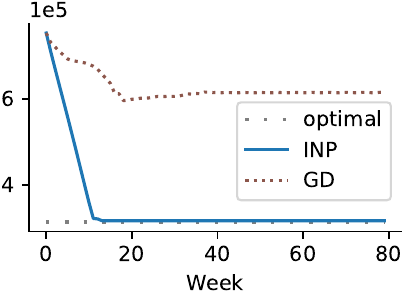}}
  \hfill
  \subcaptionbox{Function $f$ in linear scale.\label{fig:gradient_f}}{\includegraphics[scale = \PyplotScale]{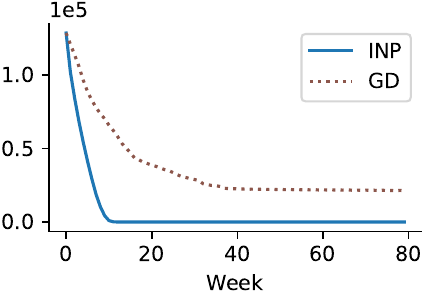}}
  \caption{Social welfare, dual objective, and $f$ over iterations $t$ for the gradient descent method in the static setting.}
  \label{fig:gradient_obj}
\end{figure}

In \Cref{fig:gradient_pi_phi} and \Cref{fig:gradient_obj}, we present the outcome under this %
algorithm (labeled GD) for the same stationary setting as in Section~\ref{sec:stationary_simulation}.
Similar to the INP mechanism, we limit the stepsize such that the expected week-over-week change in the origin-based multipliers is at most $\tau = 10$, and use the same parameters for backtracking line search.
We can see that despite the theoretical convergence guarantees, the convergence is very slow--- after 80 iterations, the outcome is still far from optimal (and not much progress was made even after we run the algorithm for a total of $500$ iterations). %

  The cyclic oscillations in $\vphi\supt$ and $\vpi\supt$ starting around the 20th iteration is not a result of backtracking line search, and %
  a smaller stepsize limit $\tau$ or different backtracking parameters does not help.
  This suggests that the gradient of $f$ is %
  not an effective direction for updating the OD-based adjustments. Also note that this method does not require less information than INP, since the local price sensitivity of rider demand is also necessary for computing $\nabla f$.

\subsection{A Simple Update Direction} \label{appx:pi_as_direction}

In this section, we present simulation results for a %
very simple mechanism, where we update the OD-based adjustments using only the origin-based multipliers.
Specifically, we adjust $\phi_i$ for each $i \in \gL$ by $\pi_i \suptmo - \pi_n \suptmo$ (multiplied by some constant), and without using the observed trip volume or the price sensitivity.\footnote{Whether to subtract $\pi_n \suptmo$ has no impact on the market-clearing outcome since prices depend only on the differences in the OD-based adjustments. The current forms, however, keeps $\phi_n\supt$ at zero for all $t$, making it easier to observe the changes of $\vphi\supt$ over time.}
This method does not necessarily have a theoretical guarantee, since the update direction may not be a descent direction of the Lyanupov function $f$ as defined in \eqref{eq:quadratic_objective}.
It works well and is quite robust, %
however, since by increasing the $\phi_i$ for a location with high $\pi_i$, we reduce the prices of trips ending in this location (relative to other locations), thereby increasing the driver supply for this location and reducing the multiplier for subsequent iterations.

\subsubsection{Stationary Market Condition} \label{appx:stationary_simple_direction}

\begin{figure}[t!]
  \centering
  \subcaptionbox{The price adjustments $\vphi^{(t)}$.\label{fig:simple_phi}}{\includegraphics[scale = \PyplotScale]{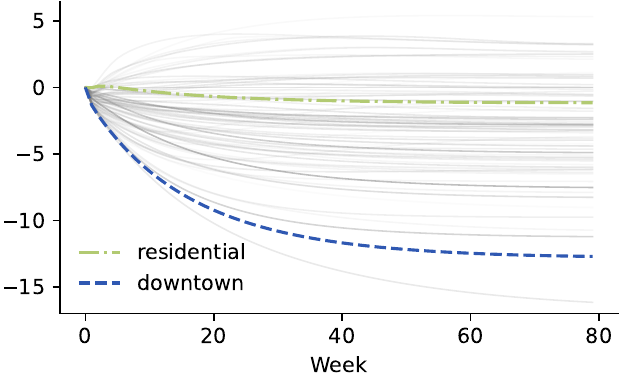}}
  \hfill
  \subcaptionbox{The multipliers $\vpi^{(t)}$.\label{fig:simple_pi}}{\includegraphics[scale = \PyplotScale]{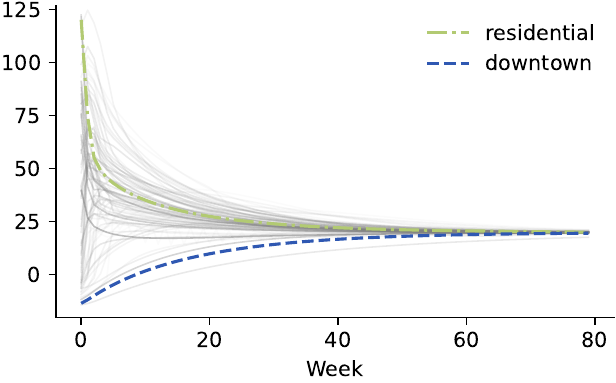}}
  \caption{$\vphi^{(t)}$ and $\vpi^{(t)}$ over iterations $t$ for the simple direction in the static setting.}
  \label{fig:simple_pi_phi}
\end{figure}

\begin{figure}[t!]
  \centering
  \subcaptionbox{Social welfare (\$/hour).\label{fig:simple_welfare}}{\includegraphics[scale = \PyplotScale]{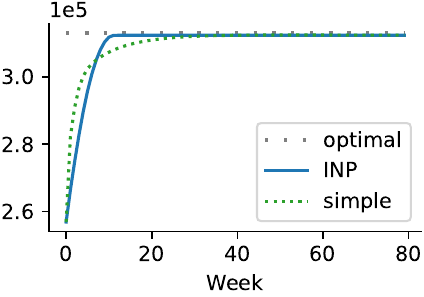}}
  \hfill
  \subcaptionbox{Dual objective.\label{fig:simple_dual}}{\includegraphics[scale = \PyplotScale]{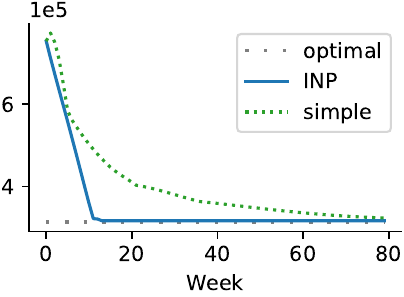}}
  \hfill
  \subcaptionbox{Function $f$.\label{fig:simple_log_f}}{\includegraphics[scale = \PyplotScale]{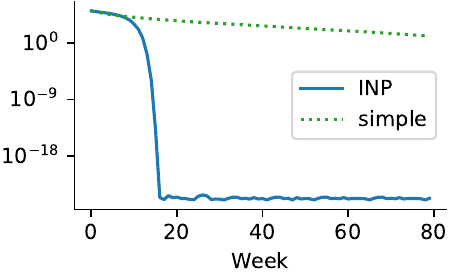}}
  \caption{Social welfare, dual objective, and $f$ over iterations $t$ for the simple direction in the static setting.}
  \label{fig:simple_obj}
\end{figure}

We first consider the simple direction in the static setting studied in Section~\ref{sec:stationary_simulation}, with %
$$
  \vphi\supt = \vphi\suptmo + 0.01 \vpi \suptmo.
$$
\Cref{fig:simple_pi_phi} plots the price adjustments $\vphi^{(t)}$ and the multipliers $\vpi^{(t)}$ over iterations $t$, and %
\Cref{fig:simple_obj} compares the social welfare, the dual objective, and the function $f$ over iterations $t$ with those under the INP mechanism.
The multipliers $\vpi^{(t)}$ converge to the same value and the social welfare converges to \SimpleWelfareRatio{} of the optimal social welfare.
The convergence rate appears to be linear, as shown in \Cref{fig:simple_log_f}.

\subsubsection{Non-Stationary Market Conditions} \label{appx:dynamic_simple_direction}

\begin{figure}[t]
  \centering
  \subcaptionbox{The OD-based adjustments $\vphi^{(t)}$.}{\includegraphics[scale = \PyplotScale]{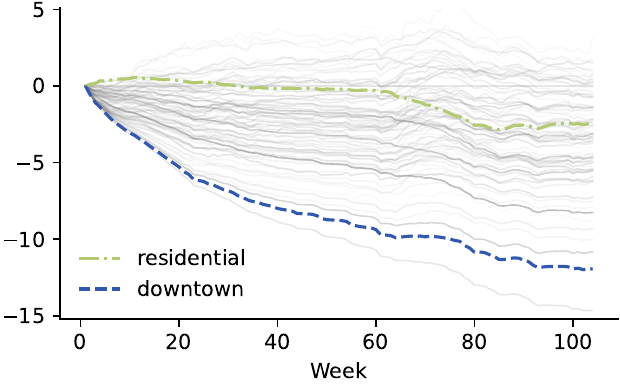}}
  \hfill
  \subcaptionbox{The origin-based multipliers $\vpi^{(t)}$.}{\includegraphics[scale = \PyplotScale]{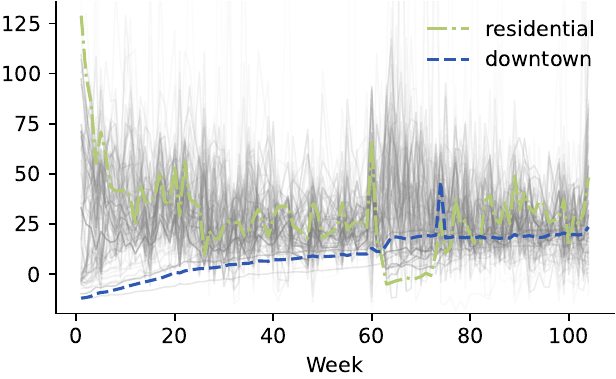}}
  \caption{$\vphi\supt$ and $\vpi\supt$ over weeks $t$, for the simple update direction in the dynamic setting.}
  \label{fig:dynamic_simple_phi_pi}
\end{figure}

\begin{figure}[t]
  \centering
  \subcaptionbox{Social welfare.}{\includegraphics[scale = \PyplotScale]{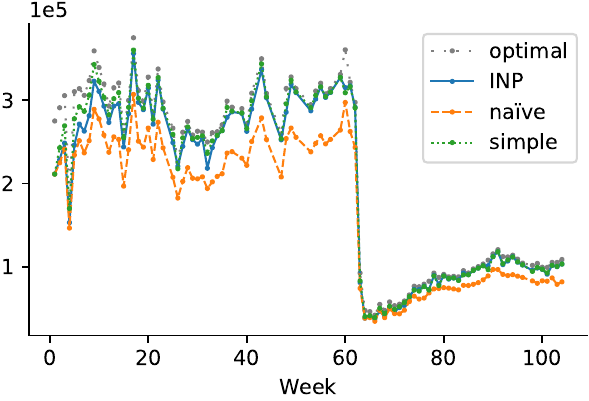}}
  \hfill
  \subcaptionbox{Welfare ratio to the optimum.\label{fig:dynamic_simple_welfare_ratio}}{\includegraphics[scale = \PyplotScale]{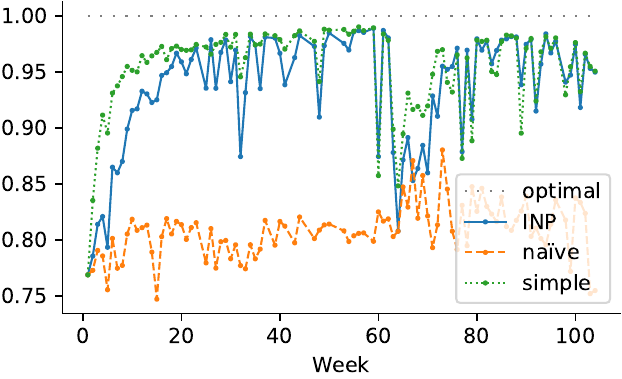}}
  \caption{Social welfare and welfare ratio over iterations $t$ for the simple direction in the dynamic setting.}
  \label{fig:dynamic_simple_welfare}
\end{figure}

We also test the simple update direction in the non-stationary setting, with %
$\vphi\supt = \vphi\suptmo + 0.005 \vpi \suptmo$.
The results are shown in \Cref{fig:dynamic_simple_phi_pi} and \Cref{fig:dynamic_simple_welfare}.
Despite having a worse empirical local convergence rate compared to the INP mechanism (linear vs.\ superlinear as suggested by \Cref{fig:simple_log_f}), the simple direction performs quite well in the dynamic setting.
In particular, it appears to be more robust to the non-stationarity of the demand than the INP mechanism, resulting in less fluctuations in \Cref{fig:dynamic_simple_welfare_ratio}.

\end{document}